\newtheorem{theorem}{Theorem}
\newtheorem{lemma}{Lemma}
\newtheorem{proposition}{Proposition}
\newtheorem{corollary}{Corollary}
\theoremstyle{definition}
\theoremstyle{remark}
\newtheorem{remark}{Remark}
\newcommand{\ex}[2][]{{\bf\sf E}_{#1}\left[#2\right]}  
\newcommand{\var}[2][]{{\bf\sf Var}_{#1}\left[#2\right]}
\newcommand{\cov}[2]{{\bf\sf Cov}\left[#1,#2\right]}
\newcommand{\prob}{{\bf\sf P}}  
\newcommand{\event}[1]{\left\{ #1 \right\}}
\newcommand{\roundown}[1]{\left\lfloor #1 \right\rfloor}
\newcommand{\absV}[1]{\left| #1 \right|}
\newcommand{\rBrac}[1]{\left( #1 \right)}
\newcommand{\sBrac}[1]{\left[ #1 \right]}
\newcommand{\cBrac}[1]{\left\{ #1 \right\}}
\newcommand{\norm}[2][2]{\left\| #2 \right\|^{#1}}
\newcommand{\reals}{{\mathbb R}}
\newcommand{\naturals}{{\mathbb N}}
\newcommand{\indic}{{\mathbbm 1}}
\newcommand{\field}{{\mathcal F}}
\newcommand{\ts}{^{\bf\sf T}}
\newcommand{\inv}{^{-1}}
\newcommand{\eqdef}{\stackrel{\vartriangle}{=}}
\newcommand{\bigO}[2][2]{\mathcal{O} \left({#2}^{#1}\right)}
\newcommand{\Normal}[2][0]{\mathcal{N}\left(#1,#2\right)}
\title{Black-box Optimization with Simultaneous Statistical Inference for Optimal Performance}
\date{}
\author[a]{Teng Lian}
\author[a]{Jian-Qiang Hu}
\author[b]{Yuhang Wu}
\author[b]{Zeyu Zheng}
\affil[a]{Department of Management Science, Fudan University, China}
\affil[b]{Department of Industrial Engineering and Operations Research, University of California Berkeley, US}
\begin{document}
\maketitle

\begin{abstract}
Black-box optimization is often encountered for decision-making in complex systems management, where the knowledge of system is limited.  Under these circumstances, it is essential to balance the utilization of new information with computational efficiency.  In practice, decision-makers often face the dual tasks of optimization and statistical inference for the optimal performance, in order to achieve it with a high reliability.  Our goal is to address the dual tasks in an online fashion.  \cite{wu2022} point out that {the sample average of performance estimates generated by the optimization algorithm needs not to admit a central limit theorem (CLT)}.  We propose an algorithm that not only tackles this issue, but also provides an online consistent estimator for the variance of the performance.  Furthermore, we characterize the convergence rate of the coverage probabilities of the asymptotic confidence intervals.
\par \noindent
{\bf Keywords:~} Black-box optimization, Stochastic approximation, Statistical inference, Convergence rate.
\end{abstract}

\section{Introduction}  \label{sec.intro}
Black-box, according to \cite{audet2017}, is defined as ``\emph{any process that when provided an input, returns an output, but the inner workings of the process are not analytically available}.''  
\cite{golovin2017} states that ``\emph{any sufficiently complex system acts as a black-box when it becomes easier to experiment with than to understand}.''  Hence, black-box optimization serves as a highly valuable tool for complex systems management, as an appropriate mathematical model for the behavior of a complex system may be unknown, imprecise, or computationally expensive to evaluate.  
Besides, quantifying the solution quality (which will be rigorously defined later) is a critical task for reliability assessment, of which a standard (offline) approach is statistical inference based on Monte Carlo simulation.  
Hereafter, black-box optimization with statistical inference for the optimal performance will be referred to as the \emph{dual tasks of optimization and statistical inference}.

In recent years, streaming data processing has gained significant attention from both industrial practitioners and academic scholars, driven by rapid developments in the Internet of Things (IoT) and 5G communication technologies.  Amazon Web Service claims that ``\emph{the nature of enterprise data and the underlying data processing systems have changed significantly}'' and ``\emph{stream processing system is beneficial in most scenarios where new and dynamic data is generated continually}'' (\url{https://aws.amazon.com/what-is/streaming-data/}).  Not coincidentally, researchers in both computer science \citep{ahmad2017unsupervised} and operations research \& management science \citep{iquebal2018change} have increasingly focused on streaming data change points analysis.
Detecting such change points often requires substantial computational resources, which could otherwise be allocated to optimization tasks.
Effectively addressing the dual tasks of optimization and statistical inference within streaming data constitutes a critical challenge.

To provide more motivation of the dual tasks within streaming data contexts, consider the case where the decision-makers repeatedly solve similar problems over $n$ periods, such as portfolio optimization problem on every exchange day and online contents recommendation etc.
In a typical offline setting, given the complexity and uncertainty involved, decision-makers typically formulate various decisions based on their expertise, knowledge, and experience. Subsequently, they evaluate the performance under these decisions to determine which one yields the best results. However, each individual task necessitates a sufficiently large sample to attain a precise and reliable outcome. Consequently, the separation of these dual tasks incurs enormous costs. For example, online platforms (e.g., Airbnb, Amazon, JD, Taobao, etc.) usually design various advertisement recommendation systems and then conduct A/B tests to select the system with the highest Click-Through Rate (CTR), which requires a substantial amount of data. Thereafter, they must repeat these procedures as many times as possible to arrive at an adequately informed decision. Intuitively, the information provided by the preceding tests holds the potential to aid in evaluating the performance of a new system. 
In addition, if there exists some change points during the tests, extra costs occur for change points analysis so as to utilize new information.
This raises the question: why not simultaneously address these dual tasks in an online fashion to balance the utilization of new information with computational efficiency?

Consider a black-box that outputs a realization of $Y(\theta)$ once the decision variables $\theta$ is input, which we will refer to henceforth as the (input) parameter vector. The closed form of the performance $\ex{Y(\theta)}$ (the expected value of $Y(\theta)$) is unknown. Suppose that the parameter vector takes values in a compact set $\Theta \subseteq \reals^p$, where $p$ is a positive integer. Assuming that the realizations of $Y(\theta)$ are independent, a widely-used model of the random output is given by
\begin{equation}
\label{def.Y}
Y(\theta) = \mu(\theta) + \sigma(\theta) \cdot \epsilon_\theta, \quad \theta \in \Theta,
\end{equation}
where $\mu(\theta)=\ex{Y(\theta)}$ represents the expected outcome, $\sigma(\theta)$ the standard deviation, and $\epsilon_\theta$ an independent noise with a mean of 0 and unit variance. It is important to note that, for different choices of $\theta$, the probability distribution of $\epsilon_\theta$ is not necessarily identical. 
The primary objective is to simultaneously address two critical tasks. First and foremost, we aim to perform the task of optimization, such as minimizing the performance:
\begin{equation} \label{obj.opt}
\min_{\theta \in \Theta} \mu(\theta).
\end{equation}
Additionally, we aspire to accomplish the task of statistical inference for $\min_{\theta \in \Theta}\mu(\theta)$ simultaneously. In other words, we seek not only some minimizer $\theta^* \in \mathop{\arg\min}\limits_{\theta \in \Theta} \mu(\theta)$ but also aim to provide the associated confidence interval (CI) of $\mu(\theta^*)$ in an online fashion.

In the stream of simulation optimization literature, various methods have been developed to solve \eqref{obj.opt} when neither performance function nor gradient information are analytically available and only realizations can be observed. 
Fundamental techniques include random search \citep{andradottir2006, andradottir2015}, sample average approximation \citep{kim2015}, stochastic approximation \citep{robbins1951, chau2015}, and surrogate-based methods \citep{hz2021} as known as metamodel-based methods \citep{barton2006, barton2009} or response surface methodology.  Benefiting from the development of computing power, a recent trend is to incorporate some optimization techniques into surrogate-based methods, e.g., STRONG \citep{chang2007,chang2013}, SPAS \citep{fan2018} and SMART \citep{zhang2022smart}.
However, most of these methods do not adapt to the online computing environment.  In addition, as is suggested by \cite{audet2017}, ``if gradient information is available, reliable, and obtainable at reasonable cost, then gradient-based methods should be used.''  Therefore, we are interested in the standard gradient-based method: the Kiefer-Wolfowitz (KW) stochastic approximation (SA) algorithm \citep{kiefer1952, kushner1978, spall1992spsa, lecuyer1998, kushner2003}. Moreover, \cite{lecuyer1998} investigate the convergence rate for mean square error (MSE), and \cite{kushner2003} establish the central limit theorem (CLT) for the parameter vector generated by the SA algorithm.  As for online optimization, \cite{hong2020finite} show that the regret of KWSA algorithm for multi-product dynamic pricing with unknown demand is of order $\sqrt{T}$, which is further extended to SPSA by \cite{yang2024}. Surprisingly, few works address the issue of inference for performance, despite its significant relevance in practical scenarios.

In pursuit of valid statistical inference for $\mu(\theta^*)$, two predominant viewpoints have emerged: Bayesian and frequentist.  Bayesian optimization typically employs a Gaussian process as a prior, with the inference component also referred to as stochastic kriging or Gaussian process regression \citep{sun2014, sun2018, meng2022, wang2023}.  Within the realm of frequentist inference, a naive approach involves conducting stochastic approximation until a reliable estimate of $\theta^*$ is obtained, followed by drawing a sufficiently large sample under this estimate to perform inference for $\mu(\theta^*)$. However, this method is inefficient and lacks adaptability to the online environment.  To our knowledge, actor-critic (AC) algorithms \citep{konda1999, konda2003, bhatnagar2009, sutton2020, zhang2022smart} combined with multi-time-scale SA \citep{borkar1997, borkar2008, bhatnagar1997, bhatnagar2000} present potential methods for addressing both optimization and statistical inference tasks concurrently.  Nevertheless, in a black-box setting, \citet{wu2022} have pointed out that the samples used for stochastic approximation may not be sufficiently representative to provide the best estimator for performance, even when choosing the ``fastest'' time scale for performance estimation.  They propose a four-point method to address this issue, but the construction of CIs in practical use is performed offline fashion and does not adapt to the online environment.

In this paper, we propose another approach to address the issue mentioned above. In the field of machine learning, constant step size (denoted by $\gamma$) is favored due to its exponential convergence in $\mathcal{L}^2$ towards a neighborhood of $\mu(\theta^*)$ bounded by $\bigO[]{\gamma}$.  Remarkably, when applied to inference for $\mu(\theta^*)$, this technique aligns with exponential smoothing \citep{hyndman2008}, which has its origins in the work of \cite{brown1963}.  We amalgamate these concepts to devise an algorithm that employs Simultaneous Perturbation Stochastic Approximation \citep[SPSA,][]{spall1992spsa} with standard decreasing step sizes for optimization and exponential smoothing for statistical inference of performance.  Under appropriate conditions, our algorithm inherits the convergence rate of SPSA for optimization, and furthermore it provides a normalized  (performance) estimator that asymptotically behaves like an Ornstein-Uhlenbeck (OU) process.  Finally, by utilizing the estimation techniques for parameters of OU processes, we construct an online consistent estimator for the variance of the performance within the algorithm for practical use.

The contributions of our work in this paper can be summarized as follows:
\begin{itemize}
    \item In the black-box setting, we provide an online algorithm that simultaneously addresses the dual tasks of optimization and statistical inference for optimal performance.
    \item We establish a CLT for the performance estimator and propose an online consistent estimator for the variance of the performance within the algorithm.
    \item In terms of statistical inference, we characterize the convergence rate of the coverage probabilities of the asymptotic CIs, using it as a measure to validate statistical inference for the optimal performance.
    \item To the SA theory, we extend the multi-time-scale technique to a simple combination of decreasing and constant step sizes.
\end{itemize}

The rest of this paper is organized as follows. Section~\ref{sec.algor} describes the details of our algorithm, offering insight and heuristic arguments.
Theoretical guarantees, specifically convergence rate analysis for the parameter vector and the CLT for the performance estimator of our algorithm, are provided in Section~\ref{sec.analy}.  Section~\ref{sec.compr} compares the coverage probabilities of the asymptotic CIs generated by our algorithm with those of other methods in the literature.
In Section~\ref{sec.exprm}, we present numerical experiments to illustrate and evaluate the algorithm's performance.
Finally, Section~\ref{sec.concl} offers a conclusion and some potential future research directions.

\section{Algorithm} \label{sec.algor}
In this section, we propose an online algorithm that performs the dual tasks of optimization and statistical inference in a black-box setting that produces conditional independent random outputs for any given parameter vector. The term ``online'' means that the algorithm (a) applies the realizations to updating the parameter vector and the estimates for both the performance and its variance, and (b) needs not to store these realizations for statistical inference.

\subsection{Algorithm description}
Suppose that a sample of $\tau$ realizations of $Y(\theta)$ is available after the value of the parameter vector $\theta$ is set, where $\tau$ is a positive integer that will be henceforth referred to as batch size. For optimization, we resort to finite-difference-based gradient estimator to obtain the descent direction, such as SPSA, which needs a predetermined sequence of perturbation sizes $\{c_k\}$. Therefore, our algorithm requires two perturbation samples, which will be formally defined later, to produce a new value for the parameter vector and update the estimates for inference in the meanwhile. The iterations are indexed by $k=0,1,\cdots,n-1$, where $n$ is the number of total iterations performed, hence we need a sample of size $2n\tau$ in total. Conventionally, let $y_{kj}^\pm,~ j=1,2,\cdots,\tau$, denote the perturbation realizations of \eqref{def.Y} in iteration $k$. Throughout this paper, $\|\cdot\|$ means Euclidean norm.

Now we are ready to describe the details of our algorithm. To begin with iteration $k$, we have precedent parameter vector $\theta_k$, performance estimate $\mu_k$ and variance estimate $v_k$. Let $u_k$ be a $\reals^p$-valued random vector uniformly distributed in the unit sphere $S^p=\{u\in\reals^p: \|u\|=1\}$. Then we draw independent realizations $y_{kj}^\pm,~ j=1,2,\cdots,\tau$, of $Y(\theta_k \pm c_k u_k)$, respectively. We average each perturbation sample
\begin{equation*}
y^\pm_k = \frac{1}{\tau}\sum_{j=1}^\tau y_{kj}^\pm,
\end{equation*}
and then estimate the gradient and performance under $\theta_k$ as
\begin{subequations}
\label{eqs.estimator}
\begin{align}
g_k &= \frac{y_k^+ - y_k^-}{2 c_k} u_k, \label{eq.gradient} \\
\bar y_k &= \frac{y_k^+ + y_k^-}{2}. \label{eq.y_bar}
\end{align}
\end{subequations}
At the end of iteration $k$, we update the parameter vector as well as the estimates for both the performance and its variance as follows
\begin{subequations} \label{eqs.updator}
\begin{align}
\theta_{k+1} &= \Pi_\Theta\left[\theta_k - a_k g_k\right], \label{eq.theta} \\
\mu_{k+1} &= \mu_k + \gamma (\bar y_k - \mu_k), \label{eq.mu} \\
v_{k+1} &= v_k + \frac{1}{k+1} \left[(\bar y_k - \mu_k)^2 - v_k\right], \label{eq.var}
\end{align}
\end{subequations}
where $\Pi_\Theta$ is the operator of projection onto $\Theta$ and $\gamma \in (0,1)$. Repeating it $n$ times, we can obtain $\theta_n$ (the estiamte of the minimizer $\theta^*$), $\mu_n$ (the estimate of the performance $\mu(\theta^*)$), and $v_n$ (the estimate of the performance variance), among which the latter two are used to construct the CI. The complete algorithm is presented as follows.

\begin{algorithm}[htbp]
\caption{~SPSA along with Statistical Inference}
\label{alg.spsasi}
\begin{algorithmic}[1]
\Require initial parameter vector $\theta_0 \in \Theta$; initial estimates $\mu_0,~v_0$; step sizes $\{a_k\}$ and $\gamma \in (0,1)$; perturbation sizes $\{c_k\}$; batch size $\tau$; terminal index $n$; confidence level $\alpha$.
\State $k \gets 0$
\While{$k < n$}
\State Draw a $\reals^p$-valued random vector $u_k$ from the unit sphere $S^p$.
\State Sample realizations $y_{kj}^\pm,j=1,2,\cdots,\tau$ of $Y(\theta_k \pm c_k u_k)$, respectively.
\State Average each perturbation sample to obtain $y_{k}^{\pm}$.
\State Compute $g_k \gets \frac{y_k^+ - y_k^-}{2c_k}u_k,~ \bar y_k \gets \frac{y_k^+ + y_k^-}{2}$.
\State Update $\theta_{k+1} \gets \Pi_\Theta[\theta_k - a_k g_k]$, $\mu_{k+1} \gets \mu_k + \gamma (\bar y_k - \mu_k)$, $v_{k+1} \gets v_k + \frac{1}{k+1}\left[(\bar y_k - \mu_k)^2 - v_k\right]$.
\State $k \gets k + 1$.
\EndWhile
\Ensure $\theta_n$ and $\mu_n \pm z_{\alpha/2} \sqrt{\gamma v_n/2}$ ($(1-\alpha)$-CI).
\end{algorithmic}
\end{algorithm}

Under convex assumptions, as will be shown later, Algorithm \ref{alg.spsasi} enjoys the following properties:
\begin{enumerate}[label={(\roman*)}]
\setlength{\topsep}{0pt}
\setlength{\itemsep}{0pt}
\setlength{\parsep}{0pt}
\setlength{\parskip}{0pt}
\item It uses all $2n\tau$ samples to carry out stochastic optimization, and has a fast convergence rate up to $\ex{\norm[]{\theta_n - \theta^*}} = \bigO[-\frac{1}{3}]{n}$.
\item A CLT holds for $\mu_n$ and is almost as good as the CLT for the case in which $\theta^*$ is known and a sufficient large sample of $Y(\theta^*)$ is used to perform inference for $\mu(\theta^*)$.
\item In addition to the estimator \eqref{eq.mu} for $\mu(\theta^*)$, it provides a consistent estimator \eqref{eq.var} for $\sigma^2(\theta^*)/(2\tau)$.
\end{enumerate}

\subsection{On the estimators} \label{subsec.estimators}
In this subsection, we provide intuition and heuristic arguments on the estimators of the performance and the variance of the performance.  Some discussions on the properties of the random search FD-based gradient estimator are attached in the Appendix \ref{sec.A.unbiasG}.

\subsubsection*{Why performance estimator takes constant step size.} 
There exists an implicit trade-off between optimization and statistical inference: the samples used for stochastic approximation may not be sufficient representative to provide the best estimator for the performance, which is crucially needed for statistical inference. In other words, if we use a pathwise average to estimate $\mu(\theta^*)$, the samples under $\theta_k$'s far away from $\theta^*$ may yield significant bias for inference of $\mu(\theta^*)$.

From the previous works \citep[e.g.,][]{lecuyer1998, hu2023}, we know that the variance of $g_k$ generated by \eqref{eq.gradient} is $\bigO[-2]{c_k}$, and if we further take $a_k = A(k+1)^{-\eta}$, $c_k = C(k+1)^{-\nu}$ for some constants $A\geq 1$, $C>0$, $\eta\in(1/2,1]$ and $\nu\in((1-\eta)/2,\eta-1/2)$, then $\ex{\norm{\theta_k - \theta^*}} = \bigO[]{a_k c_k^{-2}}$. It follows from Taylor expansion and convexity that $\left|\mu(\theta_k) - \mu(\theta^*)\right| = \bigO{\norm[]{\theta_k - \theta^*}}$, then
\begin{equation*}
\sqrt{n}\left|\frac{1}{n}\sum_{k=0}^{n-1} \ex{\mu(\theta_k)-\mu(\theta^*)}\right| = \bigO[\frac{1}{2}-\eta+2\nu]{n}.
\end{equation*}
On the other hand, it follows from Taylor expansion and compactness of $\Theta$ that $\ex{\bar y_k - \mu(\theta_k)} = \bigO{c_k}$, presuming $\nabla\mu(\cdot)$ is continuous, then
\begin{equation*}
\sqrt{n}\left|\frac{1}{n}\sum_{k=0}^{n-1}\ex{\bar y_k - \mu(\theta_k)}\right| = \bigO[\frac{1}{2}-2\nu]{n}.
\end{equation*}
We need both $\frac{1}{2}-\eta+2\nu<0$ and $\frac{1}{2}-2\nu<0$ to hold, which yields the contradiction between (A) ``$\eta \leq 1$'' and (B) ``$\eta > 2\nu + 1/2$ and $\nu>1/4 \implies \eta>1$''.  While \cite{wu2022} propose a four-point method sacrificing more sample resources to address this issue, we come up with sufficient small constant step size, which makes the estimate take heavier weight on the samples under $\theta_k$'s closer to $\theta^*$, so that we can avoid the condition (B) which yields the contradiction.

Intuitively, we can imagine the performance estimator on an ``accelerating'' time scale with respect to the parameter vector. In the time scale of performance, the change of the parameter vector is exponentially decreasing so that $\theta_k$ is almost static with respect to $\mu_k$. In other words, the performance estimator would contain less information of those samples that are not sufficient representative but weigh more on the samples with $\theta_k$'s sufficient close to $\theta^*$ as the algorithm progresses.

\subsubsection*{Variance estimator is expected to be consistent.}
Suppose $\theta^*$ is known, and draw $n$ samples $\{y_k\}_{k=0}^{n-1}$ under $\theta^*$, then the (biased) sample variance $\hat\sigma^2_n$ is given by
\begin{equation}
\label{eq.sampleVar}
\hat\sigma^2_n = \frac{1}{n} \sum_{k=0}^{n-1} (\bar y_k - \bar\mu_n)^2,
\end{equation}
where $\bar\mu_n = \frac{1}{n} \sum_{k=0}^{n-1} \bar y_k$. Compared with \eqref{eq.var}, which is equivalent to
\begin{equation*}
v_n = \frac{1}{n} \sum_{k=0}^{n-1} (\bar y_k - \mu_k)^2,
\end{equation*}
in the context of statistics, $\bar\mu_n$ is the sample average while $\mu_n$ is the exponentially weighted moving average (or exponential smoothing). Intuitively, if $\mu_n$ is somehow as good an estimate as $\bar\mu_n$, then $v_n$ may also be as consistent an estimate as $\hat\sigma^2_n$. We will present rigorous analysis on this in the next section.

\section{Convergence Analysis} \label{sec.analy}
Let $(\Omega,\field,\prob)$ be the probability space induced by Algorithm \ref{alg.spsasi}, where $\Omega$ is the collection of all sample paths generated by the algorithm, $\field$ is a $\sigma$-field of subsets of $\Omega$, and $\prob$ is a probability measure on $\field$. Define the $\sigma$-fields $\field_k = \sigma(\theta_0,\mu_0,v_0,\cdots,\theta_k,\mu_k,v_k)$ for $k=0,1,\cdots$, i.e., the information up to $\rBrac{\theta_k,\mu_k,v_k}$. Assume the feasible region $\Theta$ is a convex set characterized by inequality constraints:
\begin{equation*}
\Theta = \left\{\theta \in \reals^p: h_j(\theta) \leq 0,j=1,\cdots,m\right\},
\end{equation*}
where $h_j(\theta) \leq 0,j=1,\cdots,m$ are continuously differentiable convex functions with $\nabla h_j(\theta) \neq 0$ whenever $h_j(\theta)=0$ \citep[cf.][]{kushner2003}. Therefore, the projection operation in \eqref{eq.theta} is equivalent to adding an extra projection term $Z_k$, resulting in the recursion
\begin{equation}  \label{eq.theta*}
\theta_{k+1} = \theta_k - a_k g_k + a_k Z_k,
\end{equation}
where $a_k Z_k = \theta_{k+1} - \theta_k + a_k g_k$ is the real vector with the smallest Euclidean norm needed to take $\theta_k - a_k g_k$ back to the feasible region $\Theta$.  

For notational convenience, let $\ex[k]{\cdot} = \ex{\left.\cdot\right|\field_k}$, and $\mu^*=\mu(\theta^*)$.  Note that $\ex[k]{\cdot} = \ex{\left.\cdot\right|\theta_k}~w.p.1$.  Let us introduce the following assumptions, which are very similar to those in the literature \citep{spall1992spsa, lecuyer1998}. \par \noindent
{\bf Assumptions:}
\begin{enumerate}[label={\bf A\arabic*}]
\setlength{\topsep}{0pt}
\setlength{\itemsep}{0pt}
\setlength{\parsep}{0pt}
\setlength{\parskip}{0pt}
\item \label{A.model} The model \eqref{def.Y} satisfies that for every $\theta \in \Theta$,
\begin{enumerate}[left=0pt, label={(\alph*)}]
    \item \label{A.model.mu_CD} $\mu(\cdot)$ is third order continuously differentiable.
    \item \label{A.model.mu_US} $\mu(\cdot)$ is strongly convex and the unique minimizer $\theta^*$ lies in the interior of $\Theta$.
    \item \label{A.model.sigma} $\sigma(\cdot)$ is non-negative, and second order continuously differentiable.
\end{enumerate}
\item \label{A.u} The random directions $u_k \stackrel{i.i.d.}{\sim} {\textsf{Uniform}}(S^p)$ for all $k$.
\item \label{A.step} The positive sequences $\{a_k\}$ and $\{c_k\}$ satisfy that
\begin{enumerate}[label = (\alph*)]
    \item \label{A.step.1} $\sum_k a_k = \infty,~ \sum_k a_k^2 < \infty$.
    \item \label{A.step.spsa} $c_k \to 0,~ \sum_k a_k c_k^2 < \infty,~ \sum_k a_k^2/c_k^2 < \infty$.
\end{enumerate}
\end{enumerate}

\begin{remark}
\ref{A.model} makes sure that the problem \eqref{obj.opt} is solvable as a convex problem, even though no analytical information is available: \ref{A.model}\ref{A.model.mu_CD} is consistent with the condition used in Lemma 1 of \cite{spall1992spsa}, which requires the expected output to be sufficiently smooth with respect to the parameter vector $\theta$, \ref{A.model}\ref{A.model.mu_US} guarantees the uniqueness of minizer, and \ref{A.model}\ref{A.model.sigma} provides bound for the variance of the random output. \ref{A.step}\ref{A.step.1} is the typical condition for deminishing step size gradient descent algorithms including SA, while \ref{A.u} and \ref{A.step}\ref{A.step.spsa} are conditions for SPSA. 
\end{remark}

Under these assumptions, standard
results on convergence $w.p.1$ and the convergence rate of SPSA are presented in the proposition below, of which the proof is provided in Appendix~\ref{subsec.A.theta}. In the proof, we verify standard conditions for convergence $w.p.1$ \citep{kushner2003, spall1992spsa}, and we use the techniques in \cite{hu2023} to prove the convergence rate of root mean square error (RMSE) in \cite{lecuyer1998}.

\begin{proposition}
\label{thm.theta}
Suppose Assumptions \ref{A.model}-\ref{A.step} hold. Then
\begin{equation*}
\theta_n \longrightarrow \theta^*~ w.p.1,~ n \to \infty.
\end{equation*}
In addition, if $a_k = A(k+1)^{-\eta}$ and $c_k =  C(k+1)^{-\nu}$ with some constants $A\geq 1$, $C>0$, and $\eta,\nu \in (0,1]$ for $k=0,1,\cdots$, then
\begin{equation*}
\ex{\norm[]{\theta_n - \theta^*}} = \bigO[-\eta/2 +\nu]{n} + \bigO[-2\nu]{n}.
\end{equation*}
\end{proposition}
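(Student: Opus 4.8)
The plan is to establish the two claims separately: almost-sure convergence by the ODE (Kushner--Clark) method, and the $\mathcal{L}^2$ convergence rate by a Lyapunov recursion for $w_k := \ex{\norm{\theta_k-\theta^*}}$ resolved with a Chung-type lemma, after which the stated first-moment rate follows from Jensen's inequality. Throughout I write $e_k := \theta_k-\theta^*$ and decompose $g_k = \ex[k]{g_k} + \xi_k$, where $\xi_k := g_k - \ex[k]{g_k}$ is a martingale difference with respect to $\{\field_k\}$. The two structural facts I would borrow from the analysis of the random-direction estimator in Appendix~\ref{sec.A.unbiasG} are the bias expansion $\ex[k]{g_k} = \tfrac{1}{p}\nabla\mu(\theta_k) + b_k$ with $\norm[]{b_k} = \bigO[2]{c_k}$ (Taylor-expanding $\mu(\theta_k\pm c_k u_k)$ to third order, using \ref{A.model}\ref{A.model.mu_CD}, the symmetric cancellation of even-order terms, and $\ex{u_k u_k\ts}=\tfrac{1}{p}I$ on the sphere under \ref{A.u}) and the conditional second-moment bound $\ex[k]{\norm{\xi_k}} = \bigO[-2]{c_k}$ (the $1/c_k^2$ factor times the output variance bounded by \ref{A.model}\ref{A.model.sigma}).

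For the almost-sure convergence I would verify the standard hypotheses of the projected-SA convergence theorem of \cite{kushner2003, spall1992spsa}: (i) the mean field $-\tfrac{1}{p}\nabla\mu(\cdot)$ is continuous and, by the strong convexity \ref{A.model}\ref{A.model.mu_US}, the projected ODE $\dot\theta = -\tfrac{1}{p}\nabla\mu(\theta)+z$ has $\theta^*$ as its unique, globally asymptotically stable rest point; (ii) the asymptotic bias vanishes since $c_k\to 0$; (iii) the noise is tamed because $\sum_k a_k^2\,\ex[k]{\norm{\xi_k}}$ is of order $\sum_k a_k^2 c_k^{-2}<\infty$ by \ref{A.step}\ref{A.step.spsa}; and (iv) the iterates remain in the compact set $\Theta$ thanks to the projection $\Pi_\Theta$. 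These conditions deliver $\theta_n\to\theta^*$ w.p.1 under \ref{A.model}--\ref{A.step}, without needing the polynomial step-size form.

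For the rate, the key simplification is that $\theta^*=\Pi_\Theta\theta^*$ and $\Pi_\Theta$ is nonexpansive, so the projection term $Z_k$ in \eqref{eq.theta*} is disposed of cleanly via
\[
\norm{e_{k+1}} = \norm{\Pi_\Theta(\theta_k - a_k g_k) - \Pi_\Theta(\theta^*)} \le \norm{e_k - a_k g_k}.
\]
Expanding the right-hand side, taking $\ex[k]{\cdot}$, and inserting the bias/second-moment bounds, I would use strong convexity in the form $\langle e_k,\nabla\mu(\theta_k)\rangle \ge \mu_{sc}\norm{e_k}$ (using $\nabla\mu(\theta^*)=0$) for the drift, Young's inequality to control the cross term $\langle e_k, b_k\rangle$ against $\bigO[4]{c_k}$, and $\ex[k]{\norm{g_k}} = \bigO[]{\norm{e_k}} + \bigO[-2]{c_k}$ so that the $\bigO[]{a_k^2\norm{e_k}}$ part is absorbed into the contraction coefficient using $\sum_k a_k^2<\infty$. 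Taking full expectations yields, for all large $k$ and some $c>0$,
\[
w_{k+1} \le \left(1 - c\,a_k\right) w_k + \bigO[]{a_k c_k^4} + \bigO[]{a_k^2 c_k^{-2}}.
\]

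Finally I would resolve this recursion with a Chung-type lemma (as in \cite{lecuyer1998}, via the technique of \cite{hu2023}): with $a_k = A(k+1)^{-\eta}$ the two forcing terms are of orders $(k+1)^{-\eta-4\nu}$ and $(k+1)^{-2\eta+2\nu}$, and dividing each by the drift $c\,a_k\asymp(k+1)^{-\eta}$ gives the equilibrium rates $(k+1)^{-4\nu}$ and $(k+1)^{-\eta+2\nu}$, whence $w_n = \bigO[-4\nu]{n} + \bigO[-\eta+2\nu]{n}$; the summability constraints forced by \ref{A.step} (which imply $\eta>1/2$ and $\eta>\nu+1/2$) guarantee the lemma's hypotheses and the negativity of both exponents. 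Jensen's inequality then gives $\ex{\norm[]{\theta_n-\theta^*}} \le \sqrt{w_n} = \bigO[-2\nu]{n} + \bigO[-\eta/2+\nu]{n}$, as claimed. I expect the main obstacle to be the sharp bookkeeping in this last step: handling two forcing terms of different decay orders simultaneously under a \emph{decaying} drift coefficient $a_k$, verifying the Chung-lemma hypotheses uniformly over all admissible $(\eta,\nu)$, and carrying the strong-convexity constant $\mu_{sc}$ and the $1/p$ factor through the argument without degrading the exponents.
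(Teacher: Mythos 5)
Your proposal is correct, and its skeleton matches the paper's: almost-sure convergence by verifying (or, as the paper does, citing Theorem~1 of \citealt{spall1992spsa}) the standard projected-SA conditions, then a recursion for $w_k = \ex{\norm{\theta_k-\theta^*}}$ built from the decomposition \eqref{eq.decomp_g}, the strong-convexity drift, the $\bigO{c_k}$ bias and $\bigO[-2]{c_k}$ noise bounds, finished by Jensen. The genuinely different choices are in the resolution, and they are worth recording. First, you remove the projection term via nonexpansiveness of $\Pi_\Theta$ together with $\theta^*=\Pi_\Theta\theta^*$; the paper instead asserts that, since $\theta_n\to\theta^*$ lies in the interior, the unprojected recursion \eqref{eq.theta*} holds with $Z_k=0$ for all large $k$ --- strictly speaking that requires $\theta_k-a_kg_k\in\Theta$, not merely $\theta_k\in\Theta$, which does not follow from almost-sure convergence alone because $g_k$ has conditional variance of order $c_k^{-2}$; your inequality sidesteps this entirely. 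Second, you linearize the recursion by applying Young's inequality to the bias cross term $\langle\psi_k,b_k\rangle$, absorbing $\epsilon\norm{\psi_k}$ into the drift and leaving forcing terms $\bigO[]{a_kc_k^4}+\bigO[]{a_k^2c_k^{-2}}$, and then invoke a Chung-type lemma; the paper instead keeps the cross term as $a_k\sqrt{w_k}\,\bigO{c_k}$, completes a square to obtain a recursion for $\sqrt{w_k}$, and tracks the iterates against the fixed points of a sequence of contraction mappings via Lemma~\ref{lemma.hu3} (the technique of \cite{hu2023}). The two resolutions produce identical equilibrium exponents, $n^{-4\nu}$ and $n^{-\eta+2\nu}$ for $w_n$, hence the claimed rate. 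Your route is the more elementary, classical one (essentially that of \cite{lecuyer1998}); the paper's contraction machinery buys reusability, since essentially the same argument is repeated in the proof of Lemma~\ref{lemma.mu_L2}. Finally, the caveat you flag is real but shared: when $\eta=1$ a Chung-type lemma requires the drift constant (proportional to $A\lambda/p$) to exceed the target exponent, and the paper's own application of Lemma~\ref{lemma.hu3}, whose hypotheses are stated for step sizes with leading constant at least $1$, is made with effective step size $\varrho a_k$, so it implicitly needs $\varrho A\geq 1$ rather than $A\geq 1$; neither your sketch nor the paper verifies this from the stated assumptions, so it is a minor gap inherited from the paper, not one introduced by your approach.
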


\begin{remark}
The above result indicates that for each $\eta$, the best choice of $\nu = \eta/6$. Then it follows from $\eta \in (2/3, 1]$ that the fastest convergence rate is $\ex{\norm[]{\theta_n - \theta^*}} = \bigO[-\frac{1}{3}]{n}$ as is shown in literature.
\end{remark}

\subsection{Weak convergence of the performance estimator} \label{subsec.weakConvergence}
In this subsection, we show that a CLT holds for the performance estimate generated by Algorithm~\ref{alg.spsasi}. First, we state a preliminary result, which implies that the sequence $\{\mu_k\}$ generated by \eqref{eq.mu} is uniformly $\mathcal{L}^2$-bounded. Define $F_k$ the distribution function of the normalized  estimate  $(\mu_k-\mu^*)/\sqrt{\gamma}$. This result supports the tightness of $\{F_k\}_{k\geq 0}$.  Additionally, it is critical to establishing the consistency of the variance estimator. 

\begin{lemma} \label{lemma.mu_L2}
Suppose Assumptions \ref{A.model}-\ref{A.step} hold. Then the sequences $\{(\bar y_k,\mu_k)\}_{k\geq 0}$ generated by Algorithm~\ref{alg.spsasi} satisfy
\begin{equation*}
\begin{aligned}
\lim_{n\to\infty}\ex{\absV{\bar y_n - \mu^*}^2} =&~ \frac{\sigma^2(\theta^*)}{2\tau}, \\
\limsup_{n\to\infty} \ex{\absV{\mu_n - \mu^*}^2} =&~ \bigO[]{\gamma}.
\end{aligned}
\end{equation*}
\end{lemma}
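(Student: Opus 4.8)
The plan is to split $\bar y_k-\mu^*$ into a bias term and a conditionally centered noise term, settle the two displayed limits separately, and then propagate the first limit through a one-step contraction estimate for $\ex{\absV{\mu_k-\mu^*}^2}$. Using \eqref{def.Y} and \eqref{eq.y_bar} I would first write $\bar y_k-\mu^* = b_k + N_k$, where
\[
b_k = \tfrac12\sBrac{\mu(\theta_k+c_k u_k)+\mu(\theta_k-c_k u_k)}-\mu^*, \qquad
N_k = \frac{1}{2\tau}\sum_{j=1}^\tau\sBrac{\sigma(\theta_k+c_k u_k)\epsilon_{kj}^+ + \sigma(\theta_k-c_k u_k)\epsilon_{kj}^-}.
\]
Here $b_k$ is measurable with respect to $\sigma(\field_k,u_k)$, while the perturbation noises are mean-zero, unit-variance and independent of $(\theta_k,u_k)$. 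Conditioning on $(\field_k,u_k)$ therefore gives $\ex[k]{N_k}=0$, $\ex{b_k N_k}=0$, and $\ex{N_k^2}=\tfrac{1}{4\tau}\ex{\sigma^2(\theta_k+c_k u_k)+\sigma^2(\theta_k-c_k u_k)}$, so that $\ex{\absV{\bar y_k-\mu^*}^2}=\ex{b_k^2}+\ex{N_k^2}$.

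For the first displayed limit I would show each summand converges. A symmetric Taylor expansion of $\mu$ about $\theta_k$, justified by \ref{A.model}\ref{A.model.mu_CD}, cancels the odd-order terms and yields $b_k=\mu(\theta_k)-\mu^*+\bigO{c_k}$; since $\theta_n\to\theta^*$ $w.p.1$ by Proposition~\ref{thm.theta} and $c_k\to0$ by \ref{A.step}\ref{A.step.spsa}, we get $b_k\to0$ $w.p.1$. Likewise $\sigma^2(\theta_k\pm c_k u_k)\to\sigma^2(\theta^*)$ $w.p.1$ by \ref{A.model}\ref{A.model.sigma}. Because $\theta^*$ is interior (\ref{A.model}\ref{A.model.mu_US}), for large $k$ the perturbed points lie in a fixed compact neighborhood of $\theta^*$ on which $\mu$ and $\sigma$ are bounded; bounded convergence then upgrades these almost-sure limits to $\ex{b_k^2}\to0$ and $\ex{N_k^2}\to\sigma^2(\theta^*)/(2\tau)$, and summing proves the first claim. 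In particular $M\eqdef\sup_k\ex{\absV{\bar y_k-\mu^*}^2}<\infty$.

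For the second limit, set $\alpha_k=\ex{\absV{\mu_k-\mu^*}^2}$. Squaring the recursion $\mu_{k+1}-\mu^*=(1-\gamma)(\mu_k-\mu^*)+\gamma(\bar y_k-\mu^*)$ from \eqref{eq.mu} and taking expectations gives
\[
\alpha_{k+1}=(1-\gamma)^2\alpha_k+2\gamma(1-\gamma)\ex{(\mu_k-\mu^*)(\bar y_k-\mu^*)}+\gamma^2\ex{\absV{\bar y_k-\mu^*}^2}.
\]
Since $\mu_k-\mu^*$ is $\field_k$-measurable and $\ex[k]{N_k}=0$, the noise part of the cross term vanishes, leaving $\ex{(\mu_k-\mu^*)b_k}\le\sqrt{\alpha_k}\sqrt{\ex{b_k^2}}$ by Cauchy--Schwarz. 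Applying Young's inequality $2\sqrt{\alpha_k}\sqrt{\ex{b_k^2}}\le\gamma\alpha_k+\gamma^{-1}\ex{b_k^2}$ yields
\[
\alpha_{k+1}\le\rho\,\alpha_k+(1-\gamma)\ex{b_k^2}+\gamma^2 M,\qquad \rho=(1-\gamma)(1-\gamma+\gamma^2),
\]
where $1-\rho=\gamma\rBrac{1+(1-\gamma)^2}\in(0,1)$, so $\rho\in(0,1)$ for every $\gamma\in(0,1)$. Taking $\limsup_k$, using $\ex{b_k^2}\to0$, and solving the resulting scalar inequality gives $\limsup_k\alpha_k\le\gamma^2 M/(1-\rho)\le\gamma M=\bigO[]{\gamma}$, which is the second claim.

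The main obstacle I anticipate is not the algebra but the two places where almost-sure information must be turned into $\mathcal{L}^2$ statements: justifying the bounded/dominated convergence requires the perturbed evaluation points $\theta_k\pm c_k u_k$ to remain in a region where $\mu$ and $\sigma$ are uniformly bounded, which I would handle by combining the interiority of $\theta^*$ with $c_k\to0$ and the compactness of $\Theta$. The second delicate point is calibrating the cross term so that the contraction gap is $1-\rho=\gamma\rBrac{1+(1-\gamma)^2}$ of order $\gamma$ while the inhomogeneous part is \bigO{\gamma}; getting these to balance is exactly what produces the sharp $\bigO[]{\gamma}$ rate rather than an $\bigO[]{1}$ bound, and it is where the constant step size $\gamma$ earns its role.
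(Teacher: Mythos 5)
Your proof is correct, and while its first half mirrors the paper's argument, its second half takes a genuinely more elementary route. For the first limit, you and the paper use the same decomposition of $\bar y_k - \mu^*$ into a symmetric-Taylor bias plus conditionally centered noise (your pair $(b_k, N_k)$ is exactly the paper's $\mu(\theta_k)-\mu^*+\beta_k$ and $\tilde\epsilon_k$), combined with Proposition~\ref{thm.theta} and a bounded-convergence step. The real difference is in how the recursion for the second moment of $\mu_k-\mu^*$ is closed. The paper tracks $\sqrt{\ex{\absV{\mu_k-\mu^*}^2}}$, dominates it by a sequence $z_{k+1}=\mathcal{M}_k(z_k)$ of contraction mappings, computes the fixed points $z_k^*$ explicitly, and invokes the auxiliary Lemma~\ref{lemma.ewma} to get $z_k-z_k^*\to 0$; this machinery yields the sharp asymptotic constant $\limsup_k\ex{\absV{\mu_k-\mu^*}^2}\le \gamma\sigma^2(\theta^*)/\sBrac{2(2-\gamma)\tau}$, whose small-$\gamma$ behavior $\gamma\sigma^2(\theta^*)/(4\tau)$ matches the stationary variance of the OU limit in Theorem~\ref{thm.fclt}. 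You instead work directly with $\alpha_k=\ex{\absV{\mu_k-\mu^*}^2}$, absorb the cross term via Cauchy--Schwarz and Young's inequality into the one-step bound $\alpha_{k+1}\le\rho\,\alpha_k+(1-\gamma)\ex{b_k^2}+\gamma^2 M$ with $1-\rho=\gamma\sBrac{1+(1-\gamma)^2}$, and solve the resulting limsup inequality; this avoids the fixed-point construction and Lemma~\ref{lemma.ewma} entirely, at the price of a cruder (but still $\gamma$-independent) constant $M/\sBrac{1+(1-\gamma)^2}$, which is all the lemma's $\bigO[]{\gamma}$ statement requires. Two points you should make explicit to be airtight: (i) before writing $L\le\rho L+\gamma^2 M$ you need $L=\limsup_k\alpha_k<\infty$, which follows by induction from the same recursion since $\rho<1$ and $\sup_k\ex{b_k^2}<\infty$; and (ii) your domination argument can be made uniform in the sample path rather than ``for large $k$'' pathwise: since $\theta_k\in\Theta$ always (projection) and $c_k\to 0$ deterministically, the perturbed points $\theta_k\pm c_k u_k$ lie in a fixed compact enlargement of $\Theta$ for all $k$ beyond a deterministic index, on which $\mu$ and $\sigma$ are bounded --- the same implicit convention the paper uses when it expands $\mu$ and $\sigma$ at perturbed points.
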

\noindent
The proof of Lemma \ref{lemma.mu_L2} is given in Appendix~\ref{subsec.A.mu_L2}.  

Define $m(t) \triangleq \roundown{t/\gamma}$ and
\begin{equation}
U^\gamma_k \triangleq \frac{\mu_k - \mu^*}{\sqrt{\gamma}}, \label{def.sde_u}
\end{equation}
then its constant interpolation is given by
\begin{equation*}
U^\gamma(t) = U^\gamma_k,~ t \in \left[k\gamma, k \gamma + \gamma\right).
\end{equation*}

\noindent
We are now ready to present the convergence theorem.
\begin{theorem}
\label{thm.fclt}
Suppose Assumptions \ref{A.model}-\ref{A.step} hold. Then for some $t>0$, as $\gamma \searrow 0$, $\{U^\gamma(t+\cdot), W^\gamma(t+\cdot)\}$ converge weakly to some limit $\rBrac{U(\cdot), W(\cdot)}$ that satisfies the following stochastic differential equation (SDE)
\begin{equation} \label{sde}
dU(t) = -U(t)dt + \frac{\sigma(\theta^*)}{\sqrt{2\tau}}dW(t),
\end{equation}
where $W(t)$ is a standard Wiener process.
\end{theorem}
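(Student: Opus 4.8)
The plan is to recognize the recursion \eqref{eq.mu} for $\mu_k$ as a constant-step-size stochastic approximation whose normalized error \eqref{def.sde_u} is an Euler scheme for \eqref{sde}, and to invoke the weak-convergence theory for interpolated processes (the martingale-problem / perturbed-test-function method of Kushner--Yin). First I would rewrite \eqref{eq.mu}: since $\mu_{k+1}-\mu^* = (1-\gamma)(\mu_k-\mu^*) + \gamma(\bar y_k - \mu^*)$, dividing by $\sqrt{\gamma}$ yields
\begin{equation*}
U^\gamma_{k+1} = U^\gamma_k - \gamma\, U^\gamma_k + \sqrt{\gamma}\,(\bar y_k - \mu^*),
\end{equation*}
which is the discrete Euler--Maruyama recursion for \eqref{sde} with step $\gamma$. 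I would then decompose the innovation as $\bar y_k - \mu^* = \delta M_k + \beta_k + e_k$, where $\delta M_k \eqdef \bar y_k - \ex[k]{\bar y_k}$ is an $\field_{k+1}$-martingale difference, $\beta_k \eqdef \ex[k]{\bar y_k} - \mu(\theta_k)$ is the symmetric finite-difference bias, and $e_k \eqdef \mu(\theta_k) - \mu^*$ is the optimization error. The three pieces have the right sizes: $\ex[k]{(\delta M_k)^2} = \var[k]{\bar y_k} \to \sigma^2(\theta^*)/(2\tau)$ as $\theta_k \to \theta^*$ (the content of the first limit in Lemma \ref{lemma.mu_L2}), while the symmetry of the perturbation gives $\beta_k = \bigO{c_k}$ and the interior optimality $\nabla\mu(\theta^*)=0$ gives $e_k = \bigO{\norm[]{\theta_k-\theta^*}}$, so both bias terms tend to $0$.

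Second, I would prove tightness of the shifted interpolations $\{U^\gamma(t+\cdot)\}$ in the Skorokhod space $D[0,\infty)$. The uniform bound $\limsup_n\ex{\absV{\mu_n-\mu^*}^2} = \bigO[]{\gamma}$ of Lemma \ref{lemma.mu_L2} is exactly $\sup_k \ex{(U^\gamma_k)^2} = \bigO[]{1}$ uniformly in $\gamma$; combined with the fact that over a short macroscopic window of length $s$ (that is, $s/\gamma$ iterations) the conditional mean-square increment of $U^\gamma$ is $\bigO[]{s}$, with the accumulated martingale variance contributing the leading term $\tfrac{\sigma^2(\theta^*)}{2\tau}\,s$, Kushner's tightness criterion for interpolated stochastic-approximation processes applies.

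Third, I would identify every weak limit point as the OU law by the martingale-problem method. For $f\in C^3$ with compact support, a Taylor expansion of $f(U^\gamma_{k+1}) - f(U^\gamma_k)$ along the recursion gives $\ex[k]{f(U^\gamma_{k+1}) - f(U^\gamma_k)} = \gamma\,\mathcal{A}f(U^\gamma_k) + \sqrt{\gamma}\,(\beta_k+e_k)\,f'(U^\gamma_k) + o(\gamma)$, where $\mathcal{A}f(u) = -u\,f'(u) + \tfrac{\sigma^2(\theta^*)}{4\tau}f''(u)$ is the OU generator. Summing over the window $[m(t),m(t+S))$ and passing to the weak limit, the martingale part telescopes and the drift aggregates to $\int_0^S \mathcal{A}f(U(t+r))\,dr$, so that $f(U(t+S)) - f(U(t)) - \int_0^S \mathcal{A}f(U(t+r))\,dr$ is a martingale; since the associated martingale problem is well posed, the limit is the diffusion \eqref{sde}. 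Joint convergence with the companion process $W^\gamma$ (a normalization of the cumulative martingale $\sqrt{\gamma}\sum_{k}\delta M_k$) then follows from the martingale functional CLT, its predictable quadratic variation over $[0,t]$ converging to $t$; this furnishes the standard Wiener process driving \eqref{sde}.

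The crux, and the place where the two-time-scale structure really bites, is showing that the drift perturbation $\sqrt{\gamma}\sum_{k\in[m(t),m(t+S))}(\beta_k+e_k)f'(U^\gamma_k)$ vanishes in the limit, i.e.\ that the accumulated bias and optimization error are negligible \emph{on the OU scale} $\sqrt{\gamma}$. Because the geometric weights $(1-\gamma)^{\,\cdot}$ localize the memory of $\mu_k$ to the last $\bigO[-1]{\gamma}$ iterations, the relevant quantities are $\beta_k$ and $e_k$ evaluated near iteration $m(t)=\roundown{t/\gamma}$, and the residual they contribute to $U^\gamma_{m(t)}$ is of order $\gamma^{-1/2}\big(c_{m(t)}^2 + \ex{\norm{\theta_{m(t)} - \theta^*}}\big)$. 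This is precisely why the theorem shifts by a fixed $t>0$ before sending $\gamma\searrow0$: the window then starts at iteration $t/\gamma\to\infty$, where Proposition \ref{thm.theta} forces $\theta_k$ close to $\theta^*$ and $c_k$ small, so that, provided the perturbation and optimization errors decay fast enough relative to $\gamma$, these residuals are $o(1)$ and the limit is the clean, drift-free OU diffusion. Establishing this uniform smallness over the memory window, reconciling the $\bigO[-1/3]{n}$ optimization rate and the $\bigO{c_k}$ perturbation bias with the $\gamma^{-1/2}$ inflation factor, is the main technical obstacle; once it is in hand, the generator computation and limit identification are routine.
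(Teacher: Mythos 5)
Your architecture runs parallel to the paper's proof (Appendix~\ref{subsec.A.fclt}): the same rewriting of \eqref{eq.mu}, the same three-way splitting of $\bar y_k - \mu^*$ into martingale noise, finite-difference bias and optimization error (the paper's \eqref{eq.dec_y}), tightness from the $\bigO[]{\gamma}$ bound of Lemma~\ref{lemma.mu_L2}, and a CLT for the noise term. You differ only in the identification step: you go through the martingale problem for the OU generator, whereas the paper never touches generators --- it writes the interpolated process as the integral equation \eqref{eq.const_interpolation}, kills the two correction terms in probability (Lemmas~\ref{lemma.mu_rho} and \ref{lemma.mu_bias}), obtains $W^\gamma(t) \Rightarrow \frac{\sigma(\theta^*)}{\sqrt{2\tau}}W(t)$ from a classical CLT plus Slutsky (Lemma~\ref{lemma.mu_fclt}), and reads \eqref{sde} off the limiting integral equation. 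Either route would serve, \emph{if} the key estimate were in place.

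It is not, and that is the genuine gap: your final step is a restatement of the difficulty, not an argument. You assert the drift perturbation $\sqrt{\gamma}\sum_{k=m(t)}^{m(t+s)-1}\rBrac{\beta_k + e_k}$ vanishes ``provided the perturbation and optimization errors decay fast enough relative to $\gamma$,'' and nothing in the proposal discharges that proviso. Worse, your own rate accounting shows it \emph{cannot} be discharged by rates under \ref{A.step}: the window starting at $m(t)=\roundown{t/\gamma}$ contributes $\frac{s}{\sqrt{\gamma}}c_{m(t)}^2 = {\boldsymbol\Theta}\rBrac{\gamma^{2\nu-1/2}}$ from the finite-difference bias --- a two-sided bound, since strong convexity gives $\beta_k \geq \frac{\lambda}{2}c_k^2$ pointwise --- which forces $\nu > 1/4$; and it contributes $\frac{s}{\sqrt{\gamma}}\ex{\norm{\theta_{m(t)}-\theta^*}}$ from the optimization error, which by Proposition~\ref{thm.theta} is of order $\gamma^{(\eta-2\nu)\wedge(4\nu)-1/2}$ and so forces $\nu < \eta/2 - 1/4 \leq 1/4$. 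These are incompatible for every admissible $\eta \leq 1$; this is exactly the contradiction of Subsection~\ref{subsec.estimators} resurfacing on the $\sqrt{\gamma}$ scale. The paper does no rate reconciliation at all: its entire resolution is Lemma~\ref{lemma.asymp_neg}, which claims that $\ex{\xi_k^2}\to 0$ \emph{alone}, with no rate, implies $\sqrt{\gamma}\sum_{k=m(t)}^{m(t+s)-1}\xi_k \stackrel{\prob}{\longrightarrow} 0$ (the Chebyshev factor $\gamma$ cancelling the window length $s/\gamma$), after which Lemma~\ref{lemma.mu_bias} needs only $\ex{\absV{\mu(\theta_k)-\mu^*+\beta_k}^2}\to 0$. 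So to complete your proof you must either import that rate-free device --- noting that its union-bound step $\cBrac{\sum_k\absV{\xi_k}\geq \varepsilon/\sqrt{\gamma}} \subseteq \bigcup_k\cBrac{\absV{\xi_k}\geq \varepsilon/\sqrt{\gamma}}$ sits uneasily with the deterministic lower bound $\beta_k\geq\frac{\lambda}{2}c_k^2$ that your accounting exploits --- or impose conditions on $(\eta,\nu,\gamma)$ beyond \ref{A.step}. As written, the proposal leaves the theorem unproven precisely at its crux.
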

\noindent
The proof of Theorem \ref{thm.fclt} is given in Appendix \ref{subsec.A.fclt}.

\begin{remark}
This result can be translated into the (weak) convergence of $U^\gamma_k$. Note that the limit SDE \eqref{sde} is an OU process, and it satisfies that 
\begin{equation*}
\ex{U(t)} = e^{-t} U(0),~~~ \cov{U(t)}{U(s)} = \frac{\sigma^2(\theta^*)}{4\tau}(e^{-|t-s|}-e^{-(t+s)}),
\end{equation*}
and hence
\begin{equation*}
{\lim_{\gamma \searrow 0} U^\gamma_{m(t)} \stackrel{d}{=}~} U(t) \sim \Normal[e^{-t} U(0)]{\frac{\sigma^2(\theta^*)}{4\tau}\left(1 - e^{-2t}\right)}.
\end{equation*}
Then, the result of Theorem~\ref{thm.fclt} implies that
\begin{equation*}  
\lim_{t\to\infty} \lim_{\gamma \searrow 0} \frac{\mu_{m(t)} - \mu^*}{\sqrt{\gamma}} \stackrel{d}{=} \Normal{\frac{\sigma^2(\theta^*)}{4\tau}}. 
\end{equation*}

{
In practice, one should choose the sample size $n$ according to $\gamma$.  Since $m(t) = \roundown{t/\gamma}$, we have
\begin{equation*}
\ex{\frac{\mu_n - \mu^*}{\sqrt{\gamma}}} \approx e^{-n\gamma}\frac{\mu_0 - \mu^*}{\sqrt{\gamma}}, ~~~ \var{\frac{\mu_n - \mu^*}{\sqrt{\gamma}}} \approx \frac{\sigma^2(\theta^*)}{4\tau}\rBrac{1 - e^{-2n\gamma}}.
\end{equation*}
Thus, we need $\lim_{\gamma\searrow 0}\left.e^{-n\gamma}\right/\sqrt{\gamma} = 0$, which requires $n > -\frac{1}{2\gamma}\log\gamma$.  To this end, one is supposed to choose $n_\gamma=\roundown{-\frac{c}{2\gamma}\log\gamma}+1$ for some $c>1$ at least.  
}
\end{remark}

\begin{remark}
In a typical multi-time-scale scheme, the constant step size $\gamma$ is replaced by a sequence of decreasing step sizes such as $\{\gamma_k\}$.  For the purpose of estimation, the sequence of step sizes is supposed to satisfy that $a_k = o(\gamma_k)$ in addition to the standard conditions of decreasing step size, so that the parameter vector would be almost still in the time-scale of the performance estimator. Using a constant step size can be interpreted as laying the performance estimator on an ``accelerating'' time-scale so as to avoid the contradiction mentioned in subsection~\ref{subsec.estimators}.
\end{remark}

\subsection{Consistent variance estimator} \label{subsec.consistency}
In this subsection, we propose a consistent estimator of the variance for valid statistical inference.  Specifically, we indirectly estimate the variance via the deviation term $\frac{\sigma^2(\theta^*)}{2\tau}$ of the OU process \eqref{sde}.  A naive estimator of the variance is the sample variance $\frac{1}{n} \sum_{k=1}^n (\bar y_k - \bar\mu_n)^2$, as discussed in Subsection~\ref{subsec.estimators}, but it does not adapt to the online environment.  If $\mu^*$ was known, and we obtain some discrete-time observations $\{U_k\}$ of the OU process \eqref{sde}, of which the time interval is denoted by $\gamma$. Then, according to \citet*{2022WHY}, we have
\begin{equation*}
\frac{1}{n}\sum_{k=0}^{n-1} (e^\gamma U_{k+1}-U_k)^2 \longrightarrow \frac{(e^{2\gamma}-1) \sigma^2(\theta^*)}{4\tau}~w.p.1,~ \forall \gamma>0.
\end{equation*}
Theorem~\ref{thm.fclt} implies that as $\gamma$ approaches 0, there exists some integer $N_\gamma>0$ such that $\{U^\gamma_k\}_{k\geq N_\gamma}$ behaves statistically similar to discrete-time observations of the OU process \eqref{sde}, then we obtain
\begin{equation} \label{eq.sCons}
\lim_{\gamma \searrow 0} \frac{1}{n\gamma }\sum_{k=0}^{n-1} (e^\gamma U^\gamma_{k+1}-U^\gamma_k)^2 \longrightarrow \frac{ \sigma^2(\theta^*)}{2\tau}~w.p.1,~ n\to\infty. 
\end{equation}

In practice, we can only use $\{\bar y_k,\mu_k\}_{k\geq 0}$ to estimate the variance.  The proposed variance estimator has a recursive form as follows.
\begin{equation}
v_{k+1} = v_k + \frac{1}{k+1} \left[r_k (\bar y_k - \mu_k)^2 - v_k\right].  \label{eq.var*}
\end{equation}
Note that \eqref{eq.var} is a special case of \eqref{eq.var*} with $r_k \equiv 1$. The consistency of our variance estimator is established in the theorem below, of which the proof is given in Appendix \ref{subsec.A.consist}.
\begin{theorem} \label{thm.consist}
Suppose Assumptions \ref{A.model}-\ref{A.step} hold. Then for $n > -\frac{1}{2\gamma}\log\gamma$,  as $\gamma \searrow 0$,
\begin{equation*}
v_n \stackrel{\prob}{\longrightarrow} \frac{\sigma^2(\theta^*)}{2\tau}.
\end{equation*}
\end{theorem}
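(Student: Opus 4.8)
The plan is to reduce the recursion \eqref{eq.var} to its closed form and then prove the two moment statements $\ex{v_n}\to\frac{\sigma^2(\theta^*)}{2\tau}$ and $\var{v_n}\to 0$, after which convergence in probability follows from Chebyshev's inequality. Multiplying \eqref{eq.var} by $k+1$ and telescoping gives the non-recursive form $v_n=\frac1n\sum_{k=0}^{n-1}(\bar y_k-\mu_k)^2$ used in Subsection~\ref{subsec.estimators}. As a guide and a consistency check against \eqref{eq.sCons}, note that \eqref{eq.mu} yields $\bar y_k-\mu_k=\gamma^{-1}(\mu_{k+1}-\mu_k)=\gamma^{-1/2}\rBrac{U^\gamma_{k+1}-U^\gamma_k}$ with $U^\gamma_k$ as in \eqref{def.sde_u}, so $v_n=\frac{1}{n\gamma}\sum_{k=0}^{n-1}\rBrac{U^\gamma_{k+1}-U^\gamma_k}^2$. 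This is exactly the ``$e^\gamma\to 1$'' specialization of the estimator in \eqref{eq.sCons}, which is why consistency is claimed only as $\gamma\searrow 0$.

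For the expectation, I would decompose $\bar y_k-\mu_k=\delta_k+\beta_k$, where $\delta_k\triangleq\bar y_k-\ex[k]{\bar y_k}$ is the fresh mean-zero noise of iteration $k$ and $\beta_k\triangleq\ex[k]{\bar y_k}-\mu_k$ is $\field_k$-measurable. A symmetric Taylor expansion of $\mu$ at $\theta_k$ with Assumption~\ref{A.model} gives $\ex[k]{\bar y_k}=\mu(\theta_k)+\bigO{c_k}$, hence $\ex[k]{\delta_k^2}=\var[k]{\bar y_k}\to\frac{\sigma^2(\theta^*)}{2\tau}$ w.p.1 as $\theta_k\to\theta^*$ and $c_k\to 0$ (the conditional form of the first assertion of Lemma~\ref{lemma.mu_L2}), while $\beta_k=(\mu(\theta_k)-\mu^*)-(\mu_k-\mu^*)+\bigO{c_k}$. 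Expanding $(\bar y_k-\mu_k)^2=\delta_k^2+2\delta_k\beta_k+\beta_k^2$ and averaging, the cross term has zero mean ($\ex{\delta_k\beta_k}=\ex{\beta_k\ex[k]{\delta_k}}=0$), Proposition~\ref{thm.theta} forces $\ex{(\mu(\theta_k)-\mu^*)^2}\to 0$, and Lemma~\ref{lemma.mu_L2} gives $\ex{(\mu_k-\mu^*)^2}=\bigO[]{\gamma}$ in the stationary regime. The delicate point is the transient of $\ex{(\mu_k-\mu^*)^2}$: by \eqref{eq.mu} it decays like $(1-\gamma)^{2k}$, so its Cesàro average is $\bigO[-1]{n\gamma}$, and this is precisely where the hypothesis $n>-\frac{1}{2\gamma}\log\gamma$ (equivalently $n\gamma\to\infty$) is invoked to drive the transient contributions to $0$, leaving $\ex{v_n}\to\frac{\sigma^2(\theta^*)}{2\tau}$.

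For the variance I would exploit the martingale-difference structure. Writing $v_n=\frac1n\sum_k\ex[k]{\delta_k^2}+\frac1n\sum_k\rBrac{\delta_k^2-\ex[k]{\delta_k^2}}+\frac1n\sum_k 2\delta_k\beta_k+\frac1n\sum_k\beta_k^2$, the first (predictable) piece converges a.s.\ to $\frac{\sigma^2(\theta^*)}{2\tau}$ by the Toeplitz/Cesàro lemma applied to the a.s.-convergent sequence $\ex[k]{\delta_k^2}$; the last piece is nonnegative with expectation $\to 0$ (same transient analysis), hence $\to 0$ in probability. The middle two are averages of martingale-difference sequences with respect to $\{\field_k\}$ (since $\beta_k$ is $\field_k$-measurable and $\ex[k]{\delta_k}=0$), so by orthogonality
\begin{equation*}
\var{\frac{1}{n}\sum_{k=0}^{n-1}\rBrac{\delta_k^2-\ex[k]{\delta_k^2}}}=\frac{1}{n^2}\sum_{k=0}^{n-1}\ex{\rBrac{\delta_k^2-\ex[k]{\delta_k^2}}^2}=\bigO[-1]{n},
\end{equation*}
with the analogous $\bigO[-1]{n}$ bound for $\frac1n\sum_k2\delta_k\beta_k$ (using $\ex{(\delta_k\beta_k)^2}=\ex{\beta_k^2\ex[k]{\delta_k^2}}$ and $\sup_k\ex{\beta_k^2}<\infty$). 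Combining the four pieces via Chebyshev yields $v_n\stackrel{\prob}{\longrightarrow}\frac{\sigma^2(\theta^*)}{2\tau}$.

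The main obstacle I anticipate is twofold. First, every estimate must survive a genuine double limit: the path $\{\theta_k,\mu_k,v_k\}$ is generated at a \emph{fixed} $\gamma$, yet the conclusion is a statement as $\gamma\searrow 0$ with $n=n_\gamma\to\infty$ coupled to $\gamma$, so one must verify that the $\bigO[]{\gamma}$ and transient terms are controlled uniformly along this coupling — which is exactly the role played by the condition $n>-\frac{1}{2\gamma}\log\gamma$. Second, the fluctuation bound on $\frac1n\sum_k(\delta_k^2-\ex[k]{\delta_k^2})$ needs $\sup_k\ex{\delta_k^4}<\infty$, i.e.\ uniformly bounded fourth moments of $\bar y_k$ (equivalently of $\epsilon_\theta$), a step beyond the $\mathcal{L}^2$ control of Lemma~\ref{lemma.mu_L2}; securing these $\mathcal{L}^4$ bounds — either by an added moment assumption on the noise or by repeating the Lemma~\ref{lemma.mu_L2} argument at fourth-moment level using compactness of $\Theta$ and the smoothness of $\sigma(\cdot)$ in Assumption~\ref{A.model}\ref{A.model.sigma} — is the technically heaviest part of the argument.
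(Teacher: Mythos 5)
Your proposal takes a genuinely different route from the paper's. The paper does not perform a direct moment computation: after reducing $v_n$ to $\frac1n\sum_{k=0}^{n-1}(\bar y_k-\mu_k)^2$ (as you do), it algebraically relates $(\bar y_k-\mu_k)^2$ to the quadratic increments $(U^\gamma_{k+1}-e^{-\gamma}U^\gamma_k)^2/\gamma$ and imports the almost-sure consistency of the quadratic-variation estimator for discretely observed OU processes, i.e.\ \eqref{eq.sCons} from \citet{2022WHY}, justified through the weak convergence of Theorem~\ref{thm.fclt}; the leftover terms $(\bar y_k-\mu_k)(\mu_k-\mu^*)$ and $(\mu_k-\mu^*)^2$ are then controlled by Markov and H\"older inequalities together with exactly the transient estimate you describe, namely $\frac1n\sum_{k=0}^{n-1}\ex{(\mu_k-\mu^*)^2}=\bigO[]{\gamma}+\bigO[-1]{n\gamma}$, with $n>-\frac{1}{2\gamma}\log\gamma$ killing the second term. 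So your reading of the coupling condition and your transient analysis coincide with the paper's; what you replace is the treatment of the main term, substituting a martingale-difference decomposition with orthogonality bounds and Chebyshev for the FCLT-plus-OU-inference argument. Your route is more elementary and self-contained: it avoids both the external OU estimation result and the delicate (and in the paper rather heuristic) step of transferring an a.s.\ limit theorem for the exact OU process to the pre-limit sequence $\{U^\gamma_k\}$.

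The genuine gap is the one you flagged, but your second proposed repair does not work. The bound
\begin{equation*}
\var{\frac1n\sum_{k=0}^{n-1}\rBrac{\delta_k^2-\ex[k]{\delta_k^2}}}=\frac{1}{n^2}\sum_{k=0}^{n-1}\ex{\rBrac{\delta_k^2-\ex[k]{\delta_k^2}}^2}
\end{equation*}
requires $\sup_k\ex{\delta_k^4}<\infty$, hence finite fourth moments of $\bar\epsilon_k^\pm$ and so of $\epsilon_\theta$. Under the theorem's hypotheses this is unavailable: Assumptions~\ref{A.model}--\ref{A.step} constrain only $\mu$, $\sigma$ and the step sizes, while the model \eqref{def.Y} gives $\epsilon_\theta$ mean zero and unit variance and nothing more; condition ({\bf C}) is invoked only in Section~\ref{sec.compr}, and even there $d$ may be less than $2$. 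Repeating the Lemma~\ref{lemma.mu_L2} argument ``using compactness of $\Theta$ and smoothness of $\sigma$'' cannot succeed, because those properties bound $\sigma(\theta)$ and hence second moments of $\tilde\epsilon_k$, but $\ex{\tilde\epsilon_k^4}$ scales with $\ex{\absV{\bar\epsilon_k^\pm}^4}$, which may be infinite no matter how smooth $\sigma$ is. (Note the cross term is harmless: $\ex{(\delta_k\beta_k)^2}=\ex{\beta_k^2\,\ex[k]{\delta_k^2}}$ needs only second moments; the problem is isolated in the $\delta_k^2$ fluctuation.) So your proof is valid only under an added hypothesis such as $\sup_{\theta\in\Theta}\ex{\epsilon_\theta^4}<\infty$ (condition ({\bf C}) with $d=2$); a truncation-based weak LLN for martingale-difference arrays would relax this to uniform integrability of $\{\delta_k^2\}$ but still not remove it. This extra moment requirement is precisely what the paper sidesteps by routing the main term through Theorem~\ref{thm.fclt} and \eqref{eq.sCons}, where the Gaussianity of the limiting OU increments supplies the higher moments for free, at the price of the weak-convergence transfer step.
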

\noindent
It is an immediate corollary that the normalized estimator provided by Algorithm \ref{alg.spsasi} is asymptotically normal distributed.
\begin{corollary}
Suppose Assumptions \ref{A.model}-\ref{A.step} hold. Then for $n > -\frac{1}{2\gamma}\log\gamma$
\begin{equation*}
\frac{\mu_n - \mu^*}{\sqrt{\gamma v_n/2}} \stackrel{d}{\longrightarrow} \Normal{1},~~~ \gamma \searrow 0.
\end{equation*}
\end{corollary}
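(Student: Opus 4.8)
The plan is to read off this statement as an immediate consequence of Theorem~\ref{thm.fclt} and Theorem~\ref{thm.consist}, glued together by Slutsky's theorem. First I would split the studentized quantity so as to isolate the already-understood numerator from the variance estimator:
\begin{equation*}
\frac{\mu_n - \mu^*}{\sqrt{\gamma v_n/2}} = \frac{(\mu_n - \mu^*)/\sqrt{\gamma}}{\sqrt{v_n/2}}.
\end{equation*}
For the numerator, the remark following Theorem~\ref{thm.fclt} supplies exactly what is needed once $n$ is coupled to $\gamma$ through $n > -\frac{1}{2\gamma}\log\gamma$: the mean term $e^{-n\gamma}(\mu_0-\mu^*)/\sqrt{\gamma}$ then tends to $0$ while the variance term $\frac{\sigma^2(\theta^*)}{4\tau}(1 - e^{-2n\gamma})$ tends to $\frac{\sigma^2(\theta^*)}{4\tau}$, so the weak limit of the interpolated OU process yields
\begin{equation*}
\frac{\mu_n - \mu^*}{\sqrt{\gamma}} \stackrel{d}{\longrightarrow} \Normal{\frac{\sigma^2(\theta^*)}{4\tau}}, \quad \gamma \searrow 0,
\end{equation*}
a centered Gaussian with standard deviation $\sigma(\theta^*)/(2\sqrt{\tau})$.

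For the denominator, Theorem~\ref{thm.consist} gives $v_n \stackrel{\prob}{\longrightarrow} \sigma^2(\theta^*)/(2\tau)$ under the same schedule, and applying the continuous mapping theorem to $x \mapsto \sqrt{x/2}$ yields $\sqrt{v_n/2} \stackrel{\prob}{\longrightarrow} \sigma(\theta^*)/(2\sqrt{\tau})$. With a numerator converging in distribution and a denominator converging in probability to the positive constant $\sigma(\theta^*)/(2\sqrt{\tau})$, Slutsky's theorem delivers
\begin{equation*}
\frac{(\mu_n-\mu^*)/\sqrt{\gamma}}{\sqrt{v_n/2}} \stackrel{d}{\longrightarrow} \frac{2\sqrt{\tau}}{\sigma(\theta^*)}\, Z, \quad Z \sim \Normal{\frac{\sigma^2(\theta^*)}{4\tau}},
\end{equation*}
and since $\frac{2\sqrt{\tau}}{\sigma(\theta^*)} Z$ has variance $\frac{4\tau}{\sigma^2(\theta^*)}\cdot\frac{\sigma^2(\theta^*)}{4\tau} = 1$, the limit is exactly $\Normal{1}$, which is the claim.

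The argument is short precisely because the two hard limits are done elsewhere, so the only point that genuinely warrants care is that both convergences be taken along the \emph{same} limiting regime. Here both the consequence of Theorem~\ref{thm.fclt} and the conclusion of Theorem~\ref{thm.consist} are stated for the coupled schedule $n > -\frac{1}{2\gamma}\log\gamma$ as $\gamma \searrow 0$, so they are compatible and Slutsky's theorem may legitimately be applied jointly rather than in two separate passages to the limit. The one implicit hypothesis that must be flagged is $\sigma(\theta^*) > 0$: this keeps the limiting variance in Theorem~\ref{thm.consist} strictly positive, so that the denominator limit is a positive constant and the normalization is well defined. Under \ref{A.model}\ref{A.model.sigma} this is simply the non-degenerate case, and it is exactly the regime in which the confidence interval reported by Algorithm~\ref{alg.spsasi} is meaningful.
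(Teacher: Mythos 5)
Your proposal is correct and follows essentially the same route as the paper's proof: both factor the studentized quantity into the normal limit of $(\mu_n-\mu^*)/\sqrt{\gamma}$ from Theorem~\ref{thm.fclt} and the consistency of $v_n$ from Theorem~\ref{thm.consist}, then combine them via the continuous mapping theorem and Slutsky's theorem under the coupled schedule $n > -\frac{1}{2\gamma}\log\gamma$. Your explicit remark that $\sigma(\theta^*)>0$ is needed for the denominator to be a positive constant is a small but legitimate point of care that the paper leaves implicit.
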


\begin{proof}{Proof}
It is straightforward based on the continuous mapping theorem and Slutsky's theorem, in combination with Theorems~\ref{thm.fclt} and \ref{thm.consist}, that
\begin{equation*}
\frac{\mu_n - \mu^*}{\sqrt{\gamma v_n/2}} = \sqrt{\frac{\sigma^2(\theta^*)}{2\tau v_n}} \frac{\mu_n - \mu^*}{\sqrt{\gamma \sigma^2(\theta^*)/(4\tau)}} \stackrel{d}{\longrightarrow} \Normal{1},~ n > -\frac{1}{2\gamma}\log\gamma, \gamma \searrow 0. 
\end{equation*}
\end{proof}

\begin{remark}  
The above corollary states that for each $\varepsilon>0$ and each sufficient small $\gamma>0$, there exists an integer $N_\gamma>0$ such that $\forall n \geq N_\gamma$, $\absV{F_n(x) - F(x)} < \varepsilon,~ x\in \mathcal{C}(F_n)$, where $F_n$ denotes the distribution function of the normalized  estimator, which is defined by
\begin{equation*}
    \frac{\mu_n - \mu^*}{\sqrt{\gamma v_n/2}},
\end{equation*}
and $\mathcal{C}(F_n)$ denotes the continuity set of $F_n$.
Then it gives the asymptotic $(1 -\alpha)$-CI constructed by
\begin{equation*}
\mu_n \pm z_{\alpha/2}\sqrt{\gamma v_n/2},
\end{equation*}
where $z_{\alpha/2}$ is the lower $\alpha/2$-quantile of standard normal distribution.
\end{remark}

\section{Comparative Analysis} \label{sec.compr}
In this section, we compare the coverage probabilities of the asymptotic CIs generated by Algorithm~\ref{alg.spsasi} with those generated by other methods.  Note that the coverage probabilities of the asymptotic CIs equal to the probabilities that the normalized estimates fall in the associated intervals of standard normal distribution.  
Formally, consider a general performance estimator in the following form:
\begin{equation*}
\mu_n = \sum_{k=1}^n \gamma_k \prod_{j=k+1}^n (1-\gamma_j) \bar y_k.
\end{equation*}
Here, $\gamma_k$ represents the sequence of performance step sizes without any assumption.
{
Specifically, let $\cBrac{\theta_k}$ be a sequence of parameter vectors generated by some optimization algorithm.  Then, by Taylor expansion and \ref{A.model}\ref{A.model.mu_CD}, for each $k$, there exists some $\tilde\theta_k^\pm$ in the line segments between $\theta_k$ and $\theta_k \pm c_k u_k$, respectively, so that
\begin{equation*}
y_k^\pm = \mu(\theta_k) \pm c_k \nabla\mu(\theta_k)\ts u_k + \frac{c_k^2}{2}u_k\ts \nabla^2\mu(\tilde\theta_k^\pm) u_k + \sigma(\theta_k \pm c_k u_k) \bar\epsilon_k^\pm,
\end{equation*}
thus we can decompose \eqref{eq.y_bar} as
\begin{equation}
\bar y_k = \mu(\theta_k) + \beta_k + \tilde\epsilon_k,
\label{eq.dec_y}
\end{equation}
where
\begin{equation*}
\begin{aligned}
\beta_k &= \frac{c_k^2}{4}u_k\ts \sBrac{\nabla^2\mu(\tilde\theta_k^+) + \nabla^2\mu(\tilde\theta_k^-)} u_k, \\
\tilde\epsilon_k &= \frac{1}{2}\sBrac{\sigma(\theta_k + c_k u_k) \bar\epsilon_k^+ + \sigma(\theta_k - c_k u_k) \bar\epsilon_k^-}.
\end{aligned}
\end{equation*}
} 
Then, we obtain
\begin{equation}  \label{eq.normalized_est}
\mu_n - \mu^* = \sum_{k=1}^n \gamma_k \prod_{j=k+1}^n (1-\gamma_j) \sBrac{\mu(\theta_k)-\mu^* + \beta_k + \tilde\epsilon_k} - \prod_{j=1}^n (1-\gamma_j) \mu^*.
\end{equation}
Let $V_n$ be some non-increasing positive sequence that 
{the normalized estimator
\begin{equation} \label{normalized_estimator}
\frac{\mu_n - \mu^*}{\sqrt{V_n}\sigma(\theta^*)} \stackrel{d}{\longrightarrow} \Normal{1}.
\end{equation}
We are interested in comparing the convergence rates of the probability that it falls in the interval $[-z, z]$ under different choice of $\{\gamma_n\}$ and $\{V_n\}$, where $z=z_{1-\alpha/2}$.}
Before going further, consider the specific forms of the sequences
\begin{equation*}
a_n = An^{-\eta},\quad c_n = Cn^{-\nu},\quad \gamma_n = \gamma n^{-\delta},\quad V_n = Vn^{-\kappa},
\end{equation*}
where $A,C,\gamma,V$ are positive constants.
Then by simple algebra, the second term on the right-hand side of \eqref{eq.normalized_est} is a higher-order infinitesimal term with respect to $\sqrt{V_n}$.

By \eqref{eq.normalized_est}, the normalized estimator can be separated into endogenous bias and exogenous noise, and hence the error of coverage probabilities can be separated in the same way. {Define
\begin{equation*}
    F_n(z) \triangleq \prob\cBrac{\frac{\mu_n - \mu^*}{\sqrt{V_n} \sigma(\theta^*)} \leq z},
\end{equation*}
then,
}
\begin{equation*}
\absV{F_n\rBrac{z}-F_n\rBrac{-z}-1+\alpha} \leq \ex{\absV{E_1}} + \ex{\absV{E_2}},
\end{equation*}
where
\begin{equation*}
\begin{aligned}
E_1 &= F_n\rBrac{z}-F_n\rBrac{-z}-\Phi\rBrac{-R_n+z}+\Phi\rBrac{-R_n-z}, \\
E_2 &= \Phi\rBrac{-R_n+z}-\Phi\rBrac{-R_n-z}-1+\alpha
\end{aligned}
\end{equation*}
with
\begin{equation*}
R_n = \frac{1}{\sqrt{V_n}\sigma(\theta^*)} \sum_{k=1}^n \gamma_k \prod_{j=k+1}^n (1-\gamma_j) \sBrac{\mu(\theta_k)-\mu^* + \beta_k} + o(1),
\end{equation*}
and $\Phi$ denoting the distribution function of standard normal distribution.
Conventionally, denote $\phi$ the density function.
It is worth noting that the exogenous error $E_1$ is independent of $R_n$ and characterizes the convergence speed of the noise term
\begin{equation} \label{term.noise}
S_n = \sum_{k=1}^n \xi_{nk},
\end{equation}
where $\xi_{nk} := \frac{1}{\sqrt{V_n}} \gamma_k \prod_{j=k+1}^n (1-\gamma_j) \tilde\epsilon_k$.  Under proper condition, $E_1$ is bounded by the Berry-Esseen bound for martingale CLT.  
For the endogenous error, by Taylor expansion, it yields
\begin{equation*}
E_2 = -z\phi(z)R_n^2 + o(R_n^2),
\end{equation*}
thus $E_2$ is bounded by $\bigO{R_n}$.  Assumption~\ref{A.model} and the compactness of $\Theta$ imply that for some positive constants $\lambda$ and $L$,
\begin{equation*}
\lambda \norm{\theta_k - \theta^*} \leq \mu(\theta_k)-\mu^* \leq L \norm{\theta_k - \theta^*}.
\end{equation*}
Note that since $\nabla\mu(\theta^*)=0$, $\lambda$ coincides with the infimum of the smallest eigenvalues of $\nabla^2\mu(\theta)$ over $\Theta$, and $L$ the Lipschitz constant of $\nabla\mu(\cdot)$.  Hence, the endogenous error $E_2$ is mainly contributed by the bias term because
\begin{equation} \label{term.bias}
R_n = {\boldsymbol\Theta}\rBrac{\frac{1}{\sqrt{V_n}} \sum_{k=1}^n \gamma_k \prod_{j=k+1}^n (1-\gamma_j) \sBrac{\norm{\theta_k - \theta^*} + \beta_k}},
\end{equation}
where $y_n ={\boldsymbol\Theta}(x_n)$ means $\limsup_{n}\absV{y_n/x_n} < \infty$ and $\liminf_{n}\absV{y_n/x_n} >0$.

\subsection{Quantification of the exogenous error} \label{subsec.e1}
In this subsection, we want to quantify the exogenous error in finite sample.
Firstly, we study the sufficient conditions for the weak convergence of the noise term, because the asymptotic CIs are meaningless without asymptotic normality.  Here we use the following condition in addition, 
\begin{itemize}
\setlength{\labelsep}{1.5em}
\setlength{\itemindent}{1em}
\item[({\bf C})] There exists some constant $d>0$ such that $\sup_{\theta\in\Theta} \ex{\absV{\epsilon_\theta}^{2+d}}<\infty$.
\end{itemize}
\begin{remark}
The effect of condition ({\bf C}) is two-fold.  First, by choosing the decay rate of $V_n$ carefully, condition ({\bf C}) implies the so-called ``conditional Lyapunov condition'', i.e.,
\begin{equation} \label{eq.Lyapunov_condition}
\sum_{k=1}^{n} \ex[k]{\absV{\xi_{nk}}^{2+d}} \stackrel{\prob}{\longrightarrow} 0,~~~n\to\infty,
\end{equation}
which further implies ``conditional Lindeberg condition'': for any $\varepsilon > 0$,
\begin{equation*}
\sum_{k=1}^n \ex[k]{\xi_{nk}^2 \indic\cBrac{\absV{\xi_{nk}} > \varepsilon}} \stackrel{\prob}{\longrightarrow} 0,~~~n\to\infty,
\end{equation*}
where $\indic$ is the indicator function.
Consequently, if there exists some constant $\sigma^2<\infty$ such that
\begin{equation} \label{eq.var_cvg}
\sum_{k=1}^{n} \ex[k]{\xi_{nk}^2} \stackrel{\prob}{\longrightarrow} \sigma^2,~~~n\to\infty,
\end{equation}
then the CLT holds for \eqref{term.noise} \citep[Theorem 4, Section 8, Chapter 7,][]{shiryaev2016}.
Secondly, as far as the authors are aware of, condition ({\bf C}) also implies the sufficient condition for the Berry-Esseen bound of martingale CLT \citep[Theorem 2.1,][]{fan2019}, namely, there exists some $\varepsilon_n \searrow 0$ so that for all $k \leq n$
\begin{equation*}
\ex[k]{\absV{\xi_{nk}}^{2+d}} \leq \varepsilon_n^d \ex[k]{\xi_{nk}^2} ~ w.p.1.
\end{equation*}
It yields
\begin{equation*}
E_1 = \bigO[d]{\varepsilon_n}.
\end{equation*}
\end{remark}
The remained question is what decay rate $V_n$ should take so that ``conditional Lyapunov condition'' is satisfied.  
The following proposition formally states the requirement of $V_n$ for the validity of CLT and the associated Berry-Esseen bound.

\begin{proposition} \label{prop.clt}
Suppose Assumptions~\ref{A.model}-\ref{A.step} and condition {\bf (C)} holds.  
If $V_n =\boldsymbol\Theta(\gamma_n)$, 
then the CLT for \eqref{term.noise} holds, i.e.,
\begin{equation*}
S_n \stackrel{d}{\longrightarrow} \Normal{\frac{\sigma^2(\theta^*)}{2\tau}}, ~~~ n\to\infty.
\end{equation*}
In the meanwhile, it also supports the associated Berry-Esseen bound for the martingale CLT
\begin{equation*}
\sup_{z\in\reals}\absV{\prob\cBrac{\sqrt{2\tau} S_n \leq z\sigma(\theta^*)}-\Phi\rBrac{z}} = \bigO[d/2]{\gamma_n}.
\end{equation*}
\end{proposition}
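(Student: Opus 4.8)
The plan is to recognize $S_n=\sum_{k=1}^n\xi_{nk}$ as a row-wise martingale difference array and to read off both conclusions from a single source — the martingale CLT \citep{shiryaev2016} and its Berry--Esseen refinement \citep{fan2019} — so that the entire proof reduces to checking the martingale difference property, the convergence of the conditional variance \eqref{eq.var_cvg}, and a Lyapunov-type negligibility bound with an explicit rate. First I would set up the structure: writing $\xi_{nk}=V_n^{-1/2}\gamma_k\prod_{j=k+1}^n(1-\gamma_j)\tilde\epsilon_k$, only $\tilde\epsilon_k$ is random given the past, and since the batch noises $\bar\epsilon_k^\pm$ have conditional mean zero and are drawn independently of $\field_k$ and of $u_k$, one has $\ex[k]{\tilde\epsilon_k}=0$, hence $\ex[k]{\xi_{nk}}=0$; thus $\{\xi_{nk}\}_{k=1}^n$ is a martingale difference array for each fixed $n$.

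The core step is the conditional variance. Conditioning first on $u_k$ and using independence and unit variance of the noises (so $\var{\bar\epsilon_k^\pm}=1/\tau$) gives
\begin{equation*}
\ex[k]{\tilde\epsilon_k^2}=\frac{1}{4\tau}\,\ex[k]{\sigma^2(\theta_k+c_ku_k)+\sigma^2(\theta_k-c_ku_k)}=:\psi_k,
\end{equation*}
which by \ref{A.model}\ref{A.model.sigma}, $c_k\to0$, and $\theta_k\to\theta^*$ $w.p.1$ (Proposition~\ref{thm.theta}) converges $w.p.1$ to the constant $\sigma^2(\theta^*)/(2\tau)$. Hence
\begin{equation*}
\sum_{k=1}^n\ex[k]{\xi_{nk}^2}=\sum_{k=1}^n w_{nk}\,\psi_k,\qquad w_{nk}:=\frac{1}{V_n}\gamma_k^2\prod_{j=k+1}^n(1-\gamma_j)^2.
\end{equation*}
I would then establish, via the recursion $s_n=\gamma_n^2+(1-\gamma_n)^2 s_{n-1}$, that $\sum_k\gamma_k^2\prod_{j>k}(1-\gamma_j)^2\sim\gamma_n/2$, so that the order condition $V_n=\boldsymbol\Theta(\gamma_n)$ makes $\sum_k w_{nk}\to1$ while $w_{nk}\to0$ for each fixed $k$. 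Because the weights then concentrate on indices near $n$, where $\psi_k$ is already near its limit, the Silverman--Toeplitz lemma yields $\sum_k w_{nk}\psi_k\to\sigma^2(\theta^*)/(2\tau)$ $w.p.1$, hence in probability, which is exactly \eqref{eq.var_cvg}.

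For negligibility, condition {\bf (C)}, boundedness of $\sigma$ on the compact $\Theta$, and Jensen's inequality applied to the batch averages $\bar\epsilon_k^\pm$ give a uniform bound $\ex[k]{\absV{\tilde\epsilon_k}^{2+d}}\le M$. The same recursion shows $\sum_k\gamma_k^{2+d}\prod_{j>k}(1-\gamma_j)^{2+d}=\boldsymbol\Theta(\gamma_n^{1+d})$, whence
\begin{equation*}
\sum_{k=1}^n\ex[k]{\absV{\xi_{nk}}^{2+d}}\le\frac{M}{V_n^{1+d/2}}\sum_{k=1}^n\gamma_k^{2+d}\prod_{j=k+1}^n(1-\gamma_j)^{2+d}=\bigO[d/2]{\gamma_n}\to0,
\end{equation*}
which is the conditional Lyapunov condition \eqref{eq.Lyapunov_condition} and implies the conditional Lindeberg condition. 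Together with the martingale difference property and the variance limit, the martingale CLT delivers $S_n\stackrel{d}{\longrightarrow}\Normal{\sigma^2(\theta^*)/(2\tau)}$. For the rate I would verify the pointwise hypothesis of \citet{fan2019}: on $\{\ex[k]{\tilde\epsilon_k^2}>0\}$ the ratio $\ex[k]{\absV{\xi_{nk}}^{2+d}}/\ex[k]{\xi_{nk}^2}$ equals $V_n^{-d/2}\bigl(\gamma_k\prod_{j>k}(1-\gamma_j)\bigr)^d\ex[k]{\absV{\tilde\epsilon_k}^{2+d}}/\ex[k]{\tilde\epsilon_k^2}$, and since $\sup_k\gamma_k\prod_{j>k}(1-\gamma_j)=\boldsymbol\Theta(\gamma_n)$ while the moment ratio is bounded for large $k$ (using $\sigma(\theta^*)>0$), this is at most $\varepsilon_n^d$ with $\varepsilon_n=\boldsymbol\Theta(\sqrt{\gamma_n})$; applying the bound to the normalized martingale $\sqrt{2\tau}\,S_n/\sigma(\theta^*)$ yields $\bigO[d/2]{\gamma_n}$.

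The \emph{main obstacle} is the conditional variance step. The summand $\psi_k$ is random and converges only $w.p.1$ (not uniformly), and the negligibility of the early indices — where $\theta_k$ may be far from $\theta^*$ and $\sigma$ possibly small — must be controlled entirely through the decay of the weights $w_{nk}$. Making the Toeplitz argument rigorous for a merely almost-surely convergent summand, and simultaneously pinning down the normalizing constant hidden in $V_n=\boldsymbol\Theta(\gamma_n)$ so that the limit is exactly $\sigma^2(\theta^*)/(2\tau)$ rather than another multiple, is the delicate part; the remaining estimates are routine bounded-moment consequences of condition {\bf (C)}.
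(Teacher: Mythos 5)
Your proposal is correct and takes essentially the same route as the paper: both reduce the claim to the martingale-difference property of $\{\xi_{nk}\}$, the conditional Lyapunov condition \eqref{eq.Lyapunov_condition}, and the conditional variance limit \eqref{eq.var_cvg}, then invoke the martingale CLT of \citet{shiryaev2016} together with Theorem 2.1 of \citet{fan2019} with $\varepsilon_n = \boldsymbol\Theta\rBrac{\sqrt{\gamma_n}}$, and your recursion $s_n=\gamma_n^2+(1-\gamma_n)^2s_{n-1}$ is precisely the fixed-point mechanism inside Lemma~\ref{lemma.hu3}, which the paper cites rather than re-derives. The normalization issue you flag as the main obstacle is genuine but is shared by the paper's own proof: with only $V_n=\boldsymbol\Theta(\gamma_n)$ the conditional variance tends to $\lim_n (s_n/V_n)\cdot \sigma^2(\theta^*)/(2\tau)$, so obtaining exactly $\sigma^2(\theta^*)/(2\tau)$ implicitly requires $V_n\sim \gamma_n/2$, and for the matching lower bound the paper's proof additionally imposes $\delta<2\gamma/(2+\gamma)$.
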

\noindent
The proof of Proposition \ref{prop.clt} is given in Appendix \ref{subsec.A.clt}.

\begin{remark}
Proposition~\ref{prop.clt} quantifies the exogenous error, and it points out that only $\gamma_n$ and the constant $d$ in condition ({\bf C}) matter.  For Algorithm~\ref*{alg.spsasi}, it also gives a Berry-Esseen bound in terms of $\gamma$.  But we can approximate the best finite sample bound by interchanging $n>\frac{1}{2\gamma}\log\frac{1}{\gamma}$ to $\gamma>\frac{\log n}{n}$.
\end{remark}

In fact, we establish that
\begin{equation*}
\begin{aligned}
\absV{E_1} &\leq 2\sup_{z}\absV{F_n(z)-\Phi(z-R_n)} \\
&= 2\sup_{z}\absV{\prob\cBrac{\sqrt{2\tau}S_n\leq z\sigma(\theta^*)}-\Phi(z)} \\
&\leq \bigO[d/2]{\gamma_n}.
\end{aligned}
\end{equation*}
In next subsection, we can show that $E_1$ is usually dominated by $E_2$.

\subsection{Comparison on the endogenous errors}
Now we are ready to study the decay rates of step sizes for performance estimator given the other step sizes (optimization and perturbation), namely the convergence rates of \eqref{term.bias} for the methods given in Appendix~\ref{subsec.benchmarks}.  Taking expectation on the right-hand side of \eqref{term.bias}, it yields
\begin{equation*}
\begin{aligned}
&~ \ex{\frac{1}{\sqrt{\gamma_n}} \sum_{k=1}^n \gamma_k \prod_{j=k+1}^n (1-\gamma_j) \rBrac{\norm{\theta_k - \theta^*} + \beta_k}} \\
=&~ \frac{1}{\sqrt{\gamma_n}} \sum_{k=1}^n \gamma_k \prod_{j=k+1}^n (1-\gamma_j) \rBrac{\ex{\norm{\theta_k - \theta^*}} + \ex{\beta_k}} \\
\leq&~ \bigO[]{\frac{\ex{\norm{\theta_k - \theta^*}}}{\sqrt{\gamma_n}}} + \bigO[]{\frac{\ex{\beta_n}}{\sqrt{\gamma_n}}},
\end{aligned}
\end{equation*}
where the inequality is due to Lemma~\ref{lemma.hu3}.  First of all, for Algorithm~\ref{alg.spsasi}, it is easy to see that fixed $\gamma_n = \gamma \in (0,1)$ leads to
\begin{equation*}
\ex{\absV{R_n}} = \bigO[-\eta+2\nu]{n} + \bigO[-2\nu]{n}.
\end{equation*}
As is proved in the Section~\ref{subsec.e1}, the bias of the normalized estimator of Algorithm~\ref{alg.spsasi} goes to 0 rapidly.  In exchange, it sacrifices the accuracy of the estimation, i.e., it does not give a performance estimator with decreasing variance, but it provides one with variance of order $\gamma$.

\subsubsection*{Four-point method.}
\citet{wu2022} shows that $\ex{\norm{\theta_k - \theta^*}} = \bigO[-1+2\nu]{n}$ with $\nu\in(1/6,1/4)$, and $\ex{\beta_n} = \bigO[-3\nu]{n}$ in four-point method. With $\gamma_n = 1/n$, it follows from Lemma~\ref{lemma.hu3} that
\begin{equation*}
\ex{\absV{R_n}} = \bigO[-1/2+2\nu]{n} + \bigO[1/2-3\nu]{n},
\end{equation*}
which goes to 0 for $\nu\in(1/6,1/4)$, and the fastest rate $\bigO[-1/10]{n}$ is attained when $\nu=1/5$.  

\subsubsection*{Multi-time-scale method.}
Recall that for ordinary SPSA algorithm, $\ex{\norm{\theta_k - \theta^*}} = \bigO[-\eta+2\nu]{n} + \bigO[-4\nu]{n}$ and $\ex{\beta_n} = \bigO[-2\nu]{n}$.  Using $\gamma_n = \gamma n^{-\delta}$ with $\delta<2\gamma/(2+\gamma)$, then
\begin{equation*}
\ex{\absV{R_n}} = \bigO[\frac{\delta}{2}-\eta + 2\nu]{n} + \bigO[\frac{\delta}{2}-2\nu]{n}.
\end{equation*}
The right-hand side above goes to 0 if both $\frac{1}{2} < \delta < \eta \leq 1$ and $\frac{\delta}{2} < (\eta-2\nu)\wedge (2\nu)$ hold, which is equivalent to $\nu \in \rBrac{\frac{1}{8}, \frac{\eta}{2}-\frac{1}{8}}$ and $\delta < 2(\eta-2\nu)\wedge (2\nu)$.   Here, $x\wedge y=\min(x, y)$.  In this case, the fastest rate $\bigO[\frac{\delta-\eta}{2}]{n}$ can be attained by taking $\nu = \eta/4$ with $\delta \in (1/2,\eta)$.

\subsubsection*{Forward SPSA method.}
Following the same procedures in the proof of Proposition~\ref{thm.theta} (see Appendix~\ref{sec.A.proofs}), one can easily show that it turns out $\ex{\norm{\theta_k - \theta^*}} = \bigO[-\eta+2\nu]{n} + \bigO[-2\nu]{n}$ and $\ex{\beta_n} = 0$ for forward SPSA algorithm. Then, it follows that
\begin{equation*}
\ex{\absV{R_n}} = \bigO[\frac{1}{2} - \eta + 2\nu]{n} + \bigO[\frac{1}{2}-2\nu]{n}
\end{equation*}
with $\gamma_n=1/n$, and it needs not to converge to 0 because $-(\eta - 2\nu) \wedge (2\nu) \geq -\eta/2 \geq -1/2$.

\subsection*{Summary}
We have derived the convergence rates of $\ex{\absV{R_n}}$ for Algorithm~\ref{alg.spsasi} and the benchmarks, respectively.  
Notably, $\ex{\absV{E_1}}$ is typically dominated by $\ex{\absV{E_2}}$.  From $E_2 = -z\phi(z) R_n^2 + o(R_n^2)$, it follows that
\begin{equation*}
\ex{\absV{E_2}} = z\phi(z) \ex{R_n^2} + o\rBrac{\ex{R_n^2}} = \boldsymbol\Omega\rBrac{\ex{\absV{R_n}}^2},
\end{equation*}
which dominates $\ex{\absV{E_1}}$ for the four-point method and the forward SPSA method because $(\eta-2\nu)\wedge(3\nu) \leq 3\eta/5 < 3/4$ implies $-1/2<1-2(\eta-2\nu)\wedge(3\nu)$.
The only exception occurs when $\delta \leq \frac{4}{3} (\eta-2\nu)\wedge (2\nu)$ is applied to the multi-time-scale method. However, this is in contradiction with the condition $\nu=\eta/6$, which corresponds to the fastest convergence rate of the parameter vector.
To this extent, the convergence rates of $\ex{\absV{R_n}}$ depict how fast the coverage probabilities approach the confidence level when the normalized noise term approximates the standard normal distribution.  

The bounds of $E_1$ and $E_2$ of our algorithm and the benchmarks are summarized in Table~\ref{table.orders}.  
Notably, for a fixed sample size $n$, the fastest convergence rate of $\ex{\absV{E_1}}$ of our algorithm is attained when $\gamma = \boldsymbol{\Theta}(n\inv\log n)$, resulting in a convergence rate negligibly slower than those of the benchmarks.  The results highlight the efficiency and competitiveness of our algorithm in terms of convergence rates of the coverage probabilities.


\begin{table}[H]
\centering
\caption{Bounds of exogenous ($E_1$) and endogenous ($E_2$) errors of the algorithms. Under Assumption~\ref{A.model}, and condition ({\bf C}) with $d=1$.}
\label{table.orders}
\scalebox{0.85}{
\begin{threeparttable}
\begin{tabular}{ccccc}
\toprule
&\bfseries Algorithm~\ref{alg.spsasi} &\bfseries Four-point &\bfseries Multi-time-scale &\bfseries Forward SPSA \\ \hline
\bfseries $E_1$ & $\bigO[1/2]{\gamma}$ & $\bigO[-1/2]{n}$ & $\bigO[-\delta/2]{n}$ & $\bigO[-1/2]{n}$ \\
\bfseries $E_2$ & $\bigO[-2(\eta-2\nu)\wedge (2\nu)]{n}$ & $\bigO[1-2(\eta-2\nu)\wedge (3\nu)]{n}$ & $\bigO[\delta-2(\eta-2\nu)\wedge (2\nu)]{n}$ & $\bigO[1-2(\eta-2\nu)\wedge (2\nu)]{n}$ \\ 
\bottomrule
\end{tabular}
\begin{tablenotes}[flushleft]
\footnotesize
\item [*] {Note: $\gamma$ is the constant step size used to update performance estimates in our algorithm, while $\delta$ is the decay rate of the step sizes for this purpose in the multi-time-scale method. For the bounds of $E_2$, $\eta$ and $\nu$ are the decay rates of the step size and the perturbation size for optimization, respectively.}
\end{tablenotes}
\end{threeparttable}
}
\end{table}

\section{Simulation Experiments} \label{sec.exprm}
In this section, we demonstrate a 2-dimensional simulation experiment on a set of synthetic examples to compare the performance of Algorithm~\ref{alg.spsasi} with that of the benchmarks in the literature.  Let's begin with the detailed settings.  Consider the following cases in which the output given the input value $\theta$ has different distribution but shares the common mean function $\mu(\theta)$:
\begin{enumerate}[label={Case~\arabic*}, leftmargin=4em]%
\item \label{case.bernoulli} $Y(\theta) \sim {\textsf{Bernoulli}}\rBrac{\mu(\theta)}$, then $\sigma(\theta) = \sqrt{\mu(\theta)\sBrac{1-\mu(\theta)}}$;
\item \label{case.normal} $Y(\theta) \sim \Normal[\mu(\theta)]{\sigma^2(\theta)}$ with $\sigma(\theta)=1.5\sin\rBrac{2\pi\absV{\theta}} + 2.5$;
\item \label{case.gamma} $Y(\theta) \sim {\textsf{Gamma}}\rBrac{\alpha, \mu(\theta)/\alpha}$ with $\alpha=4$, then $\sigma(\theta) = \left.\mu(\theta)\right/ \sqrt{\alpha}$;
\item \label{case.pareto} $Y(\theta) \sim {\textsf{Pareto}}\rBrac{\alpha, (\alpha-1)\mu(\theta)/\alpha}$ with $\alpha=3$, then $\sigma(\theta) = \left. \mu(\theta)\right/ \sqrt{\alpha\rBrac{\alpha-2}}$;
\item \label{case.lognorm} $Y(\theta) \sim {\textsf{Lognormal}}\rBrac{\log\mu(\theta)-s^2/2, s^2}$ with $s=1$, then $\sigma(\theta) = \mu(\theta) \sqrt{e^{s^2}-1}$;
\end{enumerate}
We test the algorithms in the cases above with various functions, which can be categorized into one-dimensional functions and multi-dimensional functions.  We only demonstrate a 2-dimensional function here, the other experiments can be found in Appendix \ref{sec.A.exper}.  
All computations are performed using Python 3.8.11 on a Windows server which has two Intel Xeon Silver 4215R CPUs (with 32 cores in total, each running at 3.20 GHz) and 256 GB of memory.  The codes are available on Github: \url{https://github.com/HenryLean/BBO_Inference}.

Consider a 2-dimensional function
\begin{equation*}
f(\theta) = \frac{1}{2} \theta\ts {\bf A} \theta - {\bf b}\ts \theta + 1,~~~ \theta \in [-2,2]^2,
\end{equation*}
where
\begin{equation*}
{\bf A} = \rBrac{\begin{matrix}
    1.04 & -0.2 \\ -0.2 & 1 
\end{matrix}}, ~~~
{\bf b} = \rBrac{\begin{matrix}
    -1 \\ 0.5
\end{matrix}}.
\end{equation*}
Let $\mu(\theta)=\left.1\right/ \sBrac{1+\exp\cBrac{2-f(\theta)}}$ for \ref{case.bernoulli} and $\mu(\theta) = f(\theta)$ for the others.  We set $\tau=20,~ a_k=30/\rBrac{k+1},~ c_k=1/\rBrac{k+1}^{1/5},~ n=100,000$, and initialize $\theta_0$ uniformly distributed in $[-2,2]^2$.  
The numerical results (averaged over 300 replications) obtained in the test cases are presented in Table~\ref{table.rmses.TwoDim}. For different algorithms in each case, it presents the sample mean of the root mean square error (RMSE) of the parameter vector and that of the {optimality gap} $\absV{\mu(\theta_n)-\mu^*}$ in sub-tables~\ref{subtab.rmse_TwoDim} and \ref{subtab.diff_TwoDim}, respectively. The associated standard deviations are provided in parentheses.  
Additionally, Figure~\ref{fig.TwoDim.rmse} shows the convergence of RMSE with the shadow regions representing 95\% CIs of the Monte Carlo results.
On the performance of optimization, the four-point method shows no significant difference compared to SPSA in terms of RMSE of the parameter vector.  However, forward SPSA results in nearly twice the RMSE of SPSA, even though they share the same order of magnitude.

\begin{table}[htbp]
\centering

\begin{subtable}[h]{0.8\textwidth}
    \centering
    \subcaption{RMSE of the parameter vector} \label{subtab.rmse_TwoDim}
    \csvreader[tabular=cccc, table head= \toprule & \bfseries SPSA$^*$ & \bfseries Four-point & \bfseries Forward SPSA \\ \hline, table foot=\hline]%
    {supplementaries/figures/RMSEs/TwoDim_parameter.csv}{1=\distr, 2=\spsa, 3=\fsp, 4=\fourp}%
    {\distr & \csvcolii & \fourp & \csvcoliii}
\end{subtable}



\scalebox{1}{
\begin{threeparttable}
\begin{subtable}[h]{0.8\textwidth}
    \centering
    \caption{Optimality gap}
    \label{subtab.diff_TwoDim}
    \csvreader[tabular=cccc, table head= \toprule & \bfseries SPSA & \bfseries Four-point & \bfseries Forward SPSA \\ \hline, table foot=\bottomrule]%
    {supplementaries/figures/RMSEs/TwoDim_performance.csv}{1=\distr, 2=\spsa, 3=\fsp, 4=\fourp}{\distr & \csvcolii & \fourp & \csvcoliii}%
\end{subtable}

\begin{tablenotes}
\footnotesize
\item[*] {Our algorithm employs SPSA for optimization, as does the multi-time-scale method.}
\end{tablenotes}
\end{threeparttable}
}

\caption{Performance for different cases, based on 300 independent replications.} 
\label{table.rmses.TwoDim}
\end{table}

\begin{figure}[htbp]
\centering
\includegraphics[width=.75\textwidth]{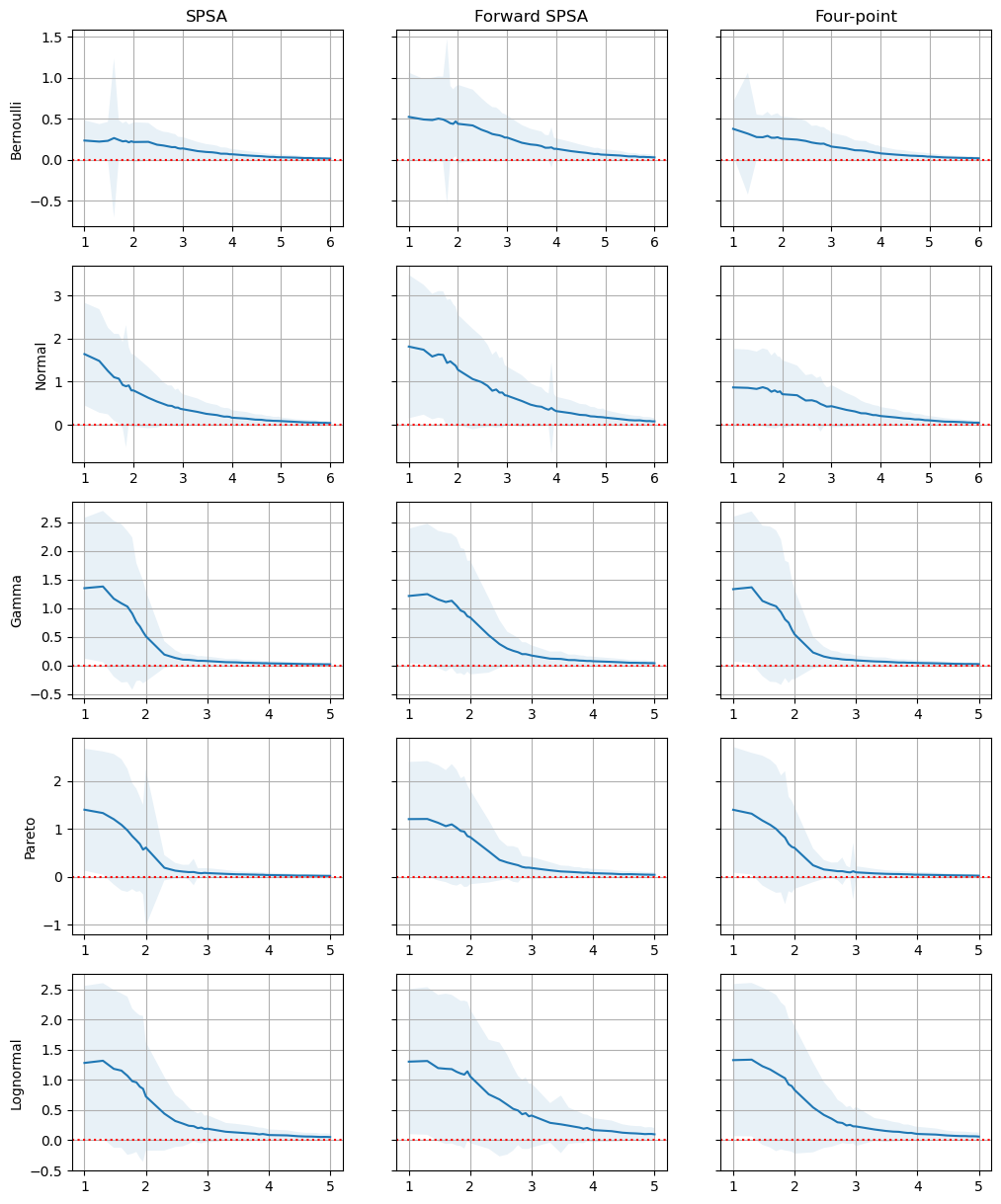}
\caption{RMSEs of input parameters v.s. $\log_{10}\rBrac{\text{\#~iterations}}$ for different cases. The shadow regions are 95\% CIs of the Monte Carlo results.}
\label{fig.TwoDim.rmse}
\end{figure}

Table~\ref{table.tstats.TwoDim} presents the selected statistics of the normalized estimators, and Figures~\ref{fig.TwoDim.histograms} illustrates the corresponding histograms. When considering these results alongside Figure~\ref{fig.TwoDim.coverages}, it is evident that, in the 2-dimensional cases, our algorithm consitently outperforms the benchmarks in terms of statistical inference.
\begin{table}[htbp]
\caption{Selected sample means and standard deviations of the normalized estimators on the 2-d test function, based on 300 independent replications.} \label{table.tstats.TwoDim}
\centering
\csvreader[tabular=ccccc, table head= \toprule & \bfseries Ordinary SPSA & \bfseries Our algorithm & \bfseries Forward SPSA & \bfseries Four-point \\ \hline, table foot=\bottomrule]{supplementaries/figures/Performances/TwoDim_tStats.csv}{}{\csvcoli & \csvcolii & \csvcoliv & \csvcolv & \csvcolvi}
\end{table}

\begin{figure}[htbp]
\centering
\includegraphics[width=.75\textwidth]{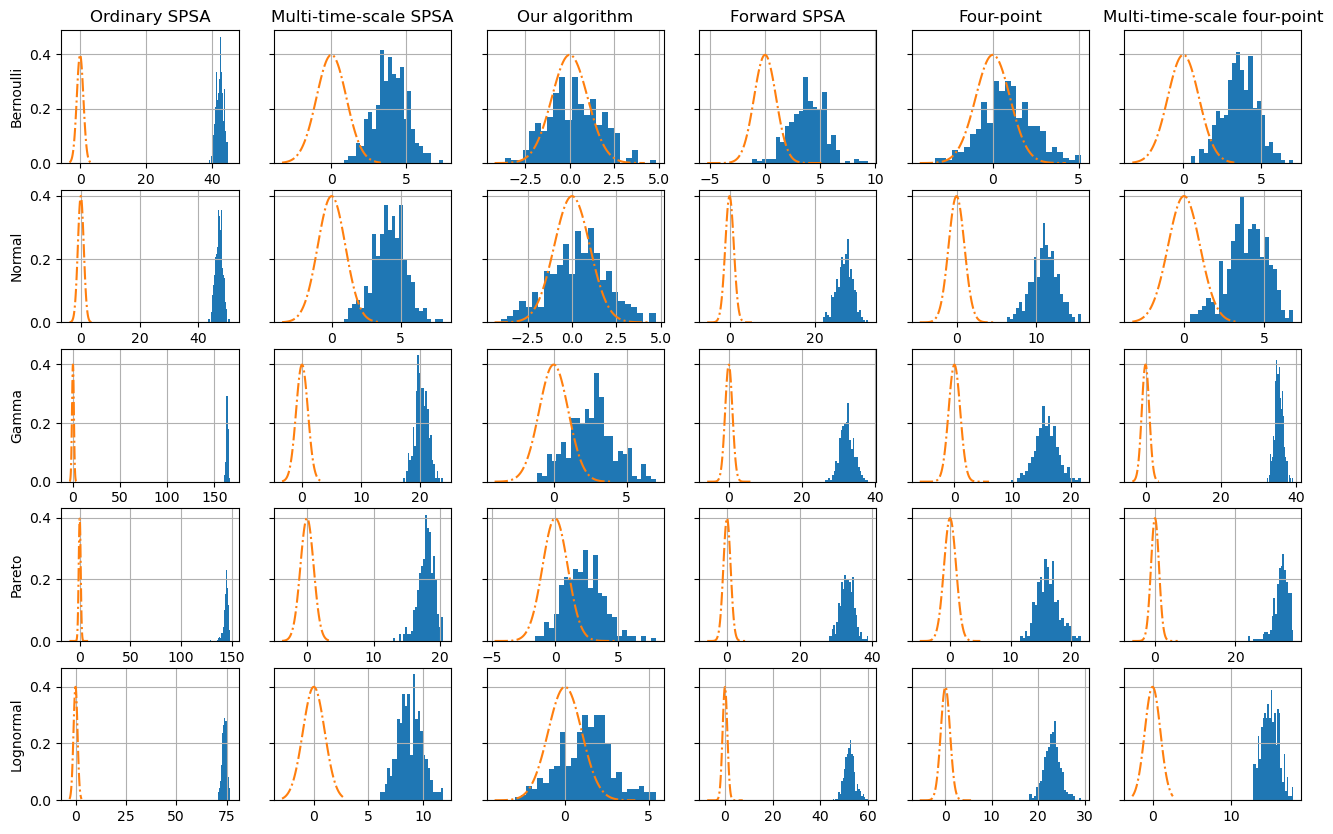}
\caption{Histograms of the normalized estimators on the 2-d test function. The dashed curve is the probability density function of standard normal distribution.}
\label{fig.TwoDim.histograms}
\end{figure}

\begin{figure}[htbp]
\centering
\includegraphics[width=.75\textwidth]{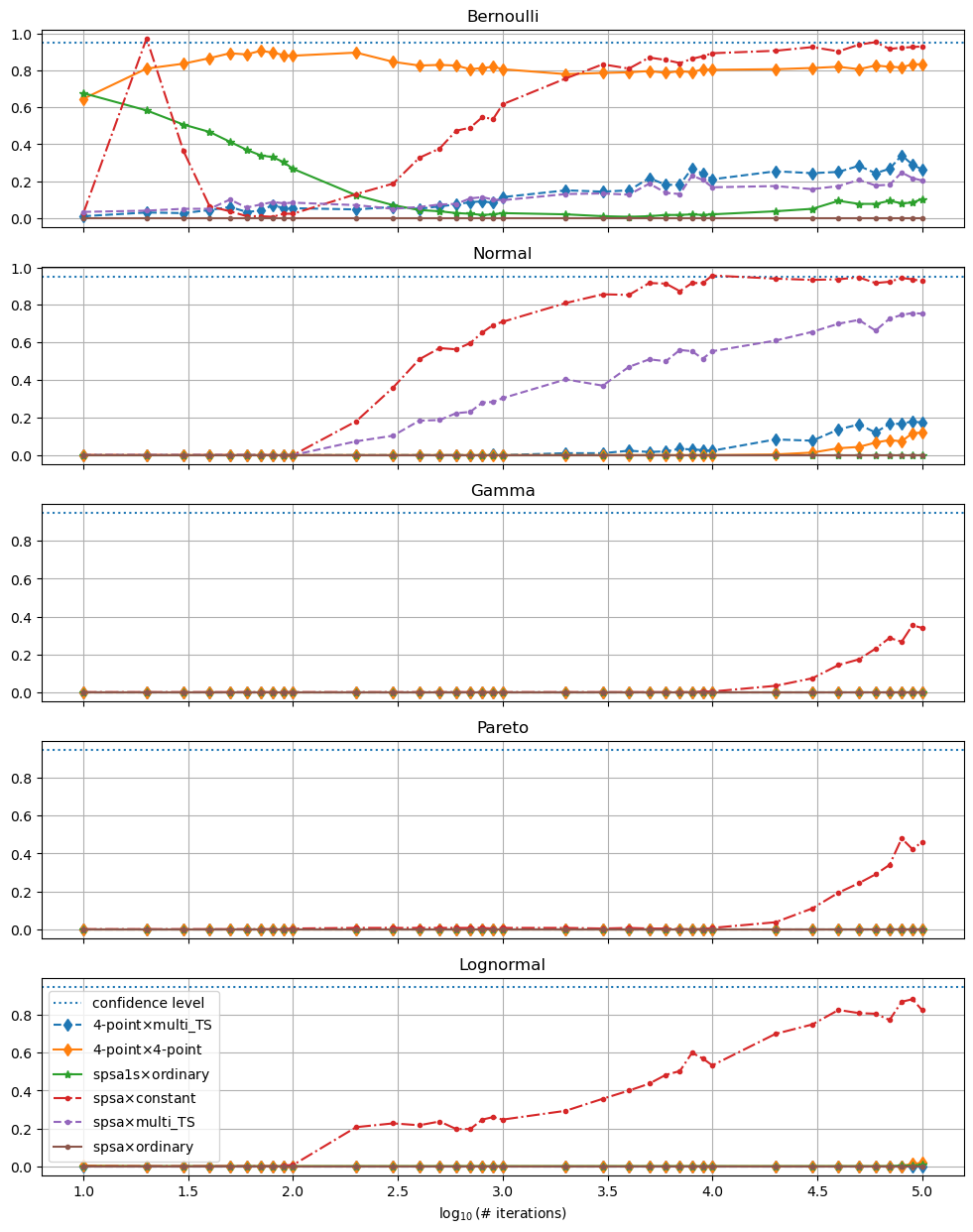}
\caption{Coverage probabilities of the asymptotic CIs on the 2-d test function. The horizontal dashed line with no markers indicates the confidence level.}
\label{fig.TwoDim.coverages}
\end{figure}

In summary, our algorithm outperforms the benchmarks in terms of statistical inference, in the sense that the proposed normalized estimator converges \emph{more rapidly} to the standard normal distribution.  This improvement involves a trade-off between optimal performance estimation and coverage rate, controlled by a user-specified hyperparameter $\gamma$ that linearly adjusts the variance of performance estimator.  By tuning $\gamma$, one can balance precision and reliability, tailoring the algorithm to specific needs.  It offers a significant advantage over the benchmarks, and the numerical results confirm that our algorithm provides a more robust and efficient solution to the dual tasks of optimization and statistical inference.

\section{Conclusion} \label{sec.concl}
To address the dual tasks of optimization and statistical inference simultaneously in a black-box setting, we propose an algorithm based on SPSA for optimization and SA with constant step sizes for statistical inference.  It updates the performance estimator on an ``accelerating'' time scale to prevent it from the contradiction between the CLTs of $\mu(\theta_k) - \mu^*$ and $\mu_k - \mu(\theta_k)$.  In addition, we provide an online consistent estimator for the performance variance, which can be used in constructing the asymptotic CIs.  This paper extends multi-time-scale SA to the combination of decreasing and constant step sizes, where the variable on the ``accelerating'' time scale does not influence the variable on the slower time scale in return.  Furthermore, we characterize the convergence rate of the coverage probabilities of the asymptotic CIs, and show the reason why our algorithm outperforms the benchmarks.

Here we list some potential extensions. It is worth (i) considering more general stochastic black-box models (e.g., hidden Markov model) in the black-box setting; (ii) investigating other optimization algorithms with simultaneous statistical inference when the objective function is non-convex; and (iii) developing theory of multi-time-scale SA or analyzing AC algorithms using the combination of decreasing and constant step sizes.

\bibliographystyle{abbrvnat}
\bibliography{refs}

\begin{appendices}
\newpage
\setcounter{page}{1}
\renewcommand{\thepage}{a-\arabic{page}}

\section{Asymptotic unbiased gradient estimator}  \label{sec.A.unbiasG}
\setcounter{table}{0} 
\setcounter{figure}{0} 
\setcounter{equation}{0} 
\setcounter{theorem}{0}
\renewcommand{\thetable}{\thesection-\arabic{table}}
\renewcommand{\thefigure}{\thesection-\arabic{figure}}
\renewcommand{\theequation}{\thesection-\arabic{equation}}
\renewcommand{\thetheorem}{\thesection.\arabic{theorem}}
As a standard method, under convex assumptions, stochastic approximation has well-developed theory \citep{kiefer1952, kushner2003} guaranteeing that $\theta_n$ is as closed to $\theta^*$ as $n$ is sufficient large. We now present the asymptotic unbiased gradient estimator. Using \eqref{def.Y} we can express the realization as
\begin{equation*}
y_k^\pm = \mu(\theta_k \pm c_k u_k) + \sigma(\theta_k \pm c_k u_k) \bar \epsilon_k^\pm,
\end{equation*}
where $\bar\epsilon_k^\pm$ represents the sample mean of realizations of independent noises for each perturbation sample in iteration $k$. 
By Taylor expansion and \ref{A.model}\ref{A.model.mu_CD}, 
we have
\begin{equation*}
\mu(\theta_k \pm c_k u_k) = \mu(\theta_k) \pm c_k u_k\ts \nabla \mu(\theta_k) + \frac{c_k^2}{2} u_k\ts \nabla^2 \mu(\theta_k) u_k \pm \frac{c_k^3}{6} 
\sum_{i,j,l=1}^p \nabla^3\mu(\tilde\theta_k^\pm)_{ijl} u_{k,i} u_{k,j} u_{k,l},
\end{equation*}
for some $\tilde\theta_k^\pm$ in the line segment between $\theta_k$ and $\theta_k \pm c_k u_k$, respectively, where,
\begin{equation*}
\nabla^3\mu(\theta)_{ijl} = \frac{\partial^3 \mu(\theta)}{\partial\theta_i \partial\theta_j \partial\theta_l},~ i,j,l=1,\cdots,p,
\end{equation*}
with $\theta_i$ representing the $i$th element of $\theta$.  Denote
\begin{equation*} 
\begin{aligned}
\tilde b_k &= \frac{c_k^2}{12} \sum_{i,j,l=1}^p \left[ \nabla^3\mu(\tilde\theta_k^+)_{ijl} - \nabla^3\mu(\tilde\theta_k^-)_{ijl}\right] u_{k,i} u_{k,j} u_{k,l}, \\
\epsilon_k &= \frac{\sigma(\theta_k + c_k u_k)\bar\epsilon_k^+ - \sigma(\theta_k - c_k u_k)\bar\epsilon_k^-}{2 c_k} u_k,
\end{aligned}
\end{equation*}
then, similar to \cite{kushner1978, kushner2003, borkar2008}, we can decompose the gradient estimator \eqref{eq.gradient} as
\begin{equation}
g_k = u_k u_k\ts \nabla\mu(\theta_k) + b_k + \epsilon_k, \label{eq.decomp_g}
\end{equation}
where $b_k = u_k \tilde b_k$.
Therefore,
\begin{equation*}
\ex{\left. g_k \right|\theta_k} = \ex{u_k u_k\ts} \nabla\mu(\theta_k) + \ex{\left. b_k \right|\theta_k} = \frac{1}{p} \nabla\mu(\theta_k) + \ex{b_k |\theta_k},
\end{equation*}
since $u_k$ is uniformly distributed on the unit sphere $S^p$. Recall that $p>0$ is the dimensions of parameter vector, which does not harm the descent direction.

We now investigate the second moment, taking conditional expectation given $\theta_k$, we have
\begin{equation*}
\ex{\left. \norm{g_k} \right|\theta_k} = \ex{\left. \left(u_k\ts \nabla\mu(\theta_k)\right)^2 \right|\theta_k} + \ex{\left. \norm{\epsilon_k} \right|\theta_k} + \ex{\left. \tilde b_k^2 \right|\theta_k} + 2\ex{\left. \tilde b_k u_k\ts \nabla\mu(\theta_k) \right|\theta_k}.
\end{equation*}
Because $\norm[]{u_k}=1$ and \ref{A.model}, the last two terms are $\bigO[4]{c_k}$ and $\bigO{c_k}$, respectively. 
By simple algebra, we have
\begin{equation*}
    \ex{\left. \left(u_k\ts \nabla\mu(\theta_k)\right)^2 \right|\theta_k} = \frac{1}{p} \norm{\nabla\mu(\theta_k)},
\end{equation*}
where $\text{tr}(\cdot)$ is the trace of matrix. Furthermore, \ref{A.model}\ref{A.model.mu_US} and compactness of $\Theta$ implies that there exists some constant $L>0$ such that for any $\theta \in \Theta$, $\norm{\nabla\mu(\theta)} \leq L^2 \norm{\theta - \theta^*}$, since $\nabla\mu(\theta^*)=0$. By Taylor expansion around $\theta_k$, there is some $\bar\theta_k^\pm$ in the line segment between $\theta_k \pm c_k u_k$ and $\theta_k$, respectively, such that
\begin{equation*}
\begin{aligned}
    \ex{\left. \norm{\epsilon_k} \right|\theta_k}  &= \frac{\sigma^2(\theta_k)}{2\tau c_k^2} + \frac{1}{8\tau} \ex{\left. u_k\ts \left[\nabla^2\sigma^2(\bar\theta_k^+) + \nabla^2\sigma^2(\bar\theta_k^-)\right] u_k \right|\theta_k} \\
    &\leq \frac{\sigma^2(\theta_k)}{2\tau c_k^2} + \frac{1}{4\tau p} \sup_{\theta\in\Theta}\cBrac{\text{tr}\sBrac{\nabla^2\sigma^2(\theta)}},
\end{aligned}
\end{equation*}
where the inequality holds for $c_k$ sufficiently small due to the compactness of $\Theta$, and \ref{A.model}\ref{A.model.mu_US} and \ref{A.model}\ref{A.model.sigma}. Putting these together, we have
\begin{equation}  \label{eq.g2moment}
\ex{\norm{g_k}} \leq \frac{L^2}{p} \ex{\norm{\theta_k - \theta^*}} + \frac{1}{2\tau c_k^2}\ex{\sigma^2(\theta_k)} + \bigO[]{c_k^4 + c_k^2 + 1}.
\end{equation}

\section{Auxiliary Lemmas}
\setcounter{table}{0} 
\setcounter{figure}{0} 
\setcounter{equation}{0} 
\setcounter{theorem}{0}
\renewcommand{\thetable}{\thesection-\arabic{table}}
\renewcommand{\thefigure}{\thesection-\arabic{figure}}
\renewcommand{\theequation}{\thesection-\arabic{equation}}
\renewcommand{\thetheorem}{\thesection.\arabic{theorem}}

\begin{lemma}[Refinement to Lemma 3 in \citealt{hu2023}] \label{lemma.hu3}
Consider two real sequences $a_k=A(k+1)^{-\eta}$ and $b_k = \bigO[-\nu]{(k+1)}$, where $A\geq 1$, $\eta \in (1/2,1]$, $\nu \in [0,1)$ and $b_k>0$ for $k=0,1,\cdots$. Then for some integer $d\geq 1$,
\begin{equation*}
\sum_{k=0}^{n-1}\sBrac{\prod_{j=k+1}^{n-1}(1 - a_j)^d} a_k^d b_k = \bigO[]{a_n^{d-1}b_n}.
\end{equation*}
Furthermore, if $b_k = \boldsymbol\Omega\rBrac{(k+1)^{-\nu}}$ in addition, then $\nu<\left.dA^{d-1}\right/\rBrac{d+A^{d-1}}-\eta(d-1)$ implies that
\begin{equation*}
\sum_{k=0}^{n-1}\sBrac{\prod_{j=k+1}^{n-1}(1 - a_j)^d} a_k^d b_k = \boldsymbol\Omega\rBrac{a_n^{d-1}b_n}.
\end{equation*}
\end{lemma}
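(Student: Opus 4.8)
The engine of the whole argument is the one-step recursion obtained by peeling off the factor $j=n$ from the product. Writing $S_n = \sum_{k=0}^{n-1}\sBrac{\prod_{j=k+1}^{n-1}(1-a_j)^d}a_k^d b_k$ and noting that $\prod_{j=k+1}^{n}(1-a_j)^d = (1-a_n)^d\prod_{j=k+1}^{n-1}(1-a_j)^d$ for $k\leq n-1$ while the new $k=n$ term equals $a_n^d b_n$, one obtains
\begin{equation*}
S_{n+1} = (1-a_n)^d S_n + a_n^d b_n.
\end{equation*}
Both halves of the lemma I would prove by induction on this recursion, comparing $S_n$ with the target rate $a_n^{d-1}b_n$. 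The guiding intuition is that the linear map $x\mapsto(1-a_n)^d x$ contracts at rate $\approx d a_n$, so the ``input'' $a_n^d b_n$ accumulates to $\approx a_n^d b_n/(d a_n)=a_n^{d-1}b_n/d$, which already pins down both the correct order and the constant $1/d$.

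Before running the induction I would make two reductions. Since $a_j=A(j+1)^{-\eta}\to 0$, there is an $n_0$ with $a_j\in(0,1)$ for $j\geq n_0$, so for $k\geq n_0$ every factor is positive and all terms are nonnegative; this lets me replace $b_k$ by its rate, $b_k\leq B(k+1)^{-\nu}$ for the upper bound and $b_k\geq\beta(k+1)^{-\nu}$ for the lower bound, by monotonicity, and hence assume $b_k$ equals $(k+1)^{-\nu}$ up to a constant. The finitely many ``early'' terms $k<n_0$, where $(1-a_j)^d$ may be negative, all carry the common tail factor $\prod_{j=n_0}^{n-1}(1-a_j)^d$; because $\sum_j a_j=\infty$ this tail is $o\rBrac{a_n^{d-1}b_n}$ — it decays like $\exp\rBrac{-\tfrac{dA}{1-\eta}n^{1-\eta}}$ when $\eta<1$ and like $n^{-dA}$ when $\eta=1$, the latter beating $n^{-(d-1)-\nu}$ since $A\geq 1$ and $\nu<1$ force $dA>(d-1)+\nu$. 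Thus the early block is negligible and the analysis reduces to the nonnegative tail.

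For the upper bound I propagate $S_n\leq M a_n^{d-1}b_n$. Substituting into the recursion and dividing by $a_n^{d-1}b_n$, the inductive step reduces to $(1-a_n)^d M+a_n\leq M\,\tfrac{a_{n+1}^{d-1}b_{n+1}}{a_n^{d-1}b_n}$. Using $(1-a_n)^d\leq 1-d a_n+\bigO[2]{a_n}$ together with the exact expansion $\tfrac{a_{n+1}^{d-1}b_{n+1}}{a_n^{d-1}b_n}=1-\tfrac{\rho}{n+2}+\bigO[-2]{n}$, where $\rho=(d-1)\eta+\nu$, the step collapses to $a_n(1-Md)\leq-\tfrac{M\rho}{n+2}+\bigO[-2]{n}$. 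When $\eta<1$ one has $a_n\gg 1/n$, so any $M>1/d$ makes the left side sufficiently negative; when $\eta=1$ both sides are of order $1/n$ and the step closes for every $M$ above the explicit threshold $A/\rBrac{d(A-1)+1-\nu}$. This yields $S_n=\bigO[]{a_n^{d-1}b_n}$, the first display.

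The lower bound is where the stated hypothesis enters and is the main obstacle. Running the mirror induction $S_n\geq m a_n^{d-1}b_n$ with Bernoulli's inequality $(1-a_n)^d\geq 1-d a_n$, the step reduces to demanding a constant $m>0$ with $(1-a_n)^d m+a_n\geq m\,\tfrac{a_{n+1}^{d-1}b_{n+1}}{a_n^{d-1}b_n}$ for all large $n$. The delicate point is that this must hold \emph{uniformly} in $n$, not merely asymptotically, so I would keep $(1-a_n)^d$ and the ratio in closed form and ask for which $\rho$ the required value of $m$ stays positive and bounded as $n$ varies. A clean sufficient condition for such an $m$ to exist — and the one the lemma records — is $\rho<\tfrac{dA^{d-1}}{d+A^{d-1}}$, equivalently $\tfrac1\rho>\tfrac1d+\tfrac1{A^{d-1}}$, which rearranges into the stated $\nu<\tfrac{dA^{d-1}}{d+A^{d-1}}-\eta(d-1)$; under it a valid $m>0$ can be chosen, closing the induction and giving $S_n=\boldsymbol\Omega\rBrac{a_n^{d-1}b_n}$. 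As a transparent sanity check one may instead estimate the sum by Laplace's method: for $k=n-s$ with $s\ll n$ one has $\prod_{j=k+1}^{n-1}(1-a_j)^d\approx e^{-d a_n s}$ and $a_k^d b_k\approx a_n^d b_n$, so $S_n\approx a_n^d b_n\sum_{s\geq 0}e^{-d a_n s}\approx\tfrac1d a_n^{d-1}b_n$. Converting this heuristic into matching two-sided bounds with every error term controlled uniformly in $n$ is exactly what the recursion-plus-induction scheme delivers.
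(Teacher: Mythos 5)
Your skeleton is closer to the paper's than you might guess: the paper performs the same finite/tail split (isolating the finitely many indices where $a_k\geq 1$), and its tail analysis rests on exactly your recursion, viewed as the affine contraction $T_n(x)=(1-a_n)^d x+a_n^d(n+1)^{-\nu}$. Where you diverge is in execution: the paper compares the tail sum $x_n$ against the moving fixed point $x_n^*=a_n^d(n+1)^{-\nu}\big/\left[1-(1-a_n)^d\right]=\tfrac1d a_n^{d-1}(n+1)^{-\nu}(1+o(1))$, telescoping $|x_n-x_n^*|$ via the contraction factor and the drift $|x_k^*-x_{k+1}^*|$, while you propagate the two-sided bounds $m\,a_n^{d-1}b_n\leq S_n\leq M\,a_n^{d-1}b_n$ directly by induction. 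Your upper-bound induction is essentially correct and is a legitimate, more self-contained alternative to the paper's appeal to ``the same procedures as Lemma 3 in \citealt{hu2023}'' (one slip: for $\eta<1$ the $\mathcal{O}(a_n^2)$ remainder is not $\mathcal{O}(n^{-2})$; what saves the step, and what your case analysis implicitly uses, is only that this remainder is $o(a_n)$).

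The genuine gap is the lower bound---exactly the part you flag as ``the main obstacle.'' You set up the right requirement, namely a constant $m>0$ with $(1-a_n)^d m+a_n\geq m r_n$ for all large $n$, where $r_n=a_{n+1}^{d-1}b_{n+1}\big/\left(a_n^{d-1}b_n\right)$, but you never verify it: you assert that the lemma's hypothesis $\eta(d-1)+\nu<dA^{d-1}/\left(d+A^{d-1}\right)$ is ``a clean sufficient condition'' and declare the induction closed. The algebraic restatement $1/\rho>1/d+1/A^{d-1}$ (with $\rho=\eta(d-1)+\nu$) is not a derivation, and in fact that hypothesis does not arise from your induction at all: after your reduction to $b_k=(k+1)^{-\nu}$ one has $r_n=\left((n+1)/(n+2)\right)^{\rho}\leq 1$, so Bernoulli's inequality gives $(1-a_n)^d m+a_n\geq m+(1-md)a_n\geq m\geq m r_n$ for every $m\in(0,1/d]$, and the step closes unconditionally under the standing assumptions $A\geq1$, $\nu\in[0,1)$ (then shrink $m$ once to absorb the base case and the negligible, possibly signed, early block). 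So your route, once completed by this two-line verification, actually proves the $\boldsymbol\Omega$-bound without the extra hypothesis; that hypothesis is an artifact of the paper's mechanism, where it is precisely what forces $\limsup_n|x_n-x_n^*|/x_n^*\leq\frac{\eta(d-1)+\nu}{d-\eta(d-1)-\nu}\cdot\frac{d}{A^{d-1}}<1$ and hence $\liminf_n x_n/x_n^*>0$. As written, however, the central step of your lower bound is claimed rather than proved, and the claimed link between the lemma's hypothesis and your induction is incorrect.
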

\begin{proof}{Proof}
By conditions, there exists an integer $n_0>0$ and a constant $C>0$ such that $\frac{1}{k+1}\leq a_k<1$ and $b_k \leq C (k+1)^{-\nu}$ whenever $k \geq n_0$, then we can partition the summation into finite part and tail part:
\begin{equation*}
\sum_{k=0}^{n-1}\sBrac{\prod_{j=k+1}^{n-1}(1 - a_j)^d} a_k^d b_k = \underbrace{\sum_{k=0}^{n_0-1}\sBrac{\prod_{j=k+1}^{n-1}(1 - a_j)^d} a_k^d b_k}_{\text{(finite part)}} + \underbrace{\sum_{k=n_0}^{n-1}\sBrac{\prod_{j=k+1}^{n-1}(1 - a_j)^d} a_k^d b_k}_{\text{(tail part)}}.
\end{equation*}
It follows from the inequality $1-x \leq e^{-x}$ that
\begin{equation*}
\begin{aligned}
\text{(finite part)} &= \prod_{j=1}^{n-1}(1 - a_j)^d \sum_{k=0}^{n_0-1} \sBrac{\prod_{j=1}^k (1 - a_j)}^{-d} a_k^d b_k \\
& \leq \exp\cBrac{-d\sum_{j=1}^{n-1} a_j} \absV{\sum_{k=0}^{n_0-1}\sBrac{\prod_{j=1}^k (1 - a_j)}^{-d} a_k^d b_k}
\end{aligned}
\end{equation*}
where the right-hand side goes to zero with order $\bigO[-d]{n}$ since $\eta \in (1/2, 1]$ and $n_0<\infty$.
\begin{equation*}
\exp\cBrac{-d\sum_{j=1}^{n-1} a_j} \leq \exp\cBrac{-d A \int_1^n x^{-\eta}dx} = 
\begin{cases}
\bigO[]{\exp\cBrac{-d A n^{1-\eta}}}, & \eta \in (1/2,1); \\
\bigO[-d A]{n} & \eta = 1.
\end{cases}
\end{equation*}

Now we use a sequence $\cBrac{x_n}_{n\geq n_0}$ to study the behavior of tail part.  First, by conditions, we have
\begin{equation}  \label{eq.tail_bound}
\text{(tail part)} \leq C \sum_{k=n_0}^{n-1}\sBrac{\prod_{j=k+1}^{n-1}(1 - a_j)^d} a_k^d (k+1)^{-\nu}.
\end{equation}
Let $x_{n_0}=0$, and for $n>n_0$, define
\begin{equation*}
x_n \eqdef \sum_{k=n_0}^{n-1}\sBrac{\prod_{j=k+1}^{n-1}(1 - a_j)^d} a_k^d (k+1)^{-\nu},
\end{equation*}
and a sequence of mappings $T_n(x) \eqdef (1-a_n)^d x + a_n^d (n+1)^{-\nu}$.  Then, it yields that
$x_{n+1} = T_n(x_n)$.
Note that $\absV{T_n(x) - T_n(y)} = (1-a_n)^d \absV{x-y}$ and $a_n < 1$ whenever $n\geq n_0$, thus $\cBrac{T_n}_{n\geq n_0}$ is a sequence of contraction mappings. Then, for each $T_n$, there is a unique fixed point
\begin{equation*}
x_n^* = \frac{a_n^d (n+1)^{-\nu}}{1-\rBrac{1-a_n}^d} = \frac{1}{d}a_n^{d-1} (n+1)^{-\nu} + e_n,
\end{equation*}
where
\begin{equation*}
e_n = -\frac{a_n^{d-1} \varpi_d(a_n) (n+1)^{-\nu}}{d(d+\varpi_d(a_n))}
\end{equation*}
with $\varpi_d(x) = \sum_{i=1}^{d-1} \binom{d}{i+1} \rBrac{-x}^i$.
Taking the specific form of $a_n$ by condition, it is easy to see $\absV{x_{n-1}^* - x_n^*} = \bigO[-\eta(d-1)-\nu-1]{n}$ because $\absV{\left.dx^{-\eta(d-1)-\nu}\right/dx}=\sBrac{\eta(d-1)+\nu}x^{-\eta(d-1)-\nu-1}$ and $e_n$ is a higher order infinitesimal term with respect to $a_n^{d-1} (n+1)^{-\nu}$.
Following the same procedures of the proof of Lemma 3 in \cite{hu2023}, we have
\begin{equation*}
\sum_{k=0}^{n-1}\sBrac{\prod_{j=k+1}^{n-1}(1 - a_j)^d} a_k^d b_k = \bigO[]{a_n^{d-1} b_n}.
\end{equation*}

Similar to \eqref{eq.tail_bound}, if $b_k = {\boldsymbol\Omega} \rBrac{(k+1)^{-\nu}}$, then for $n>n_0$, there is some constant $C'$ such that
\begin{equation*}
\text{(tail part)} \geq C' \sum_{k=n_0}^{n-1}\sBrac{\prod_{j=k+1}^{n-1}(1 - a_j)^d} a_k^d (k+1)^{-\nu}.
\end{equation*}
Since $\absV{x_n - x_n^*} = \bigO[-\eta(d-1)-\nu]{n}$ and $x_n^*>0$, it follows that
\begin{equation*}
\liminf_{n\to\infty}{\frac{x_n}{x_n^*}} \geq 1 - \limsup_{n\to\infty}{\frac{\absV{x_n-x_n^*}}{x_n^*}}.
\end{equation*}
It is worth noting that $\absV{x_n-x_n^*} \leq \frac{\eta(d-1)+\nu}{d-\eta(d-1)-\nu} n^{-\eta(d-1)-\nu}$ and $x_n^*=\frac{A^{d-1}}{d} (n+1)^{-\eta(d-1)-\nu}$.  Then, the condition
\begin{equation*}
\nu<\frac{dA^{d-1}}{d+A^{d-1}}-\eta(d-1)
\end{equation*}
implies that
\begin{equation*}
\limsup_{n\to\infty}{\frac{\absV{x_n-x_n^*}}{x_n^*}} \leq \frac{\eta(d-1)+\nu}{d-\eta(d-1)-\nu} \frac{d}{A^{d-1}} < 1,
\end{equation*}
which further leads to
\begin{equation*}
\liminf_{n\to\infty}{\frac{x_n}{x_n^*}} > 0.
\end{equation*}
It completes the proof of the lemma.
\end{proof}

\begin{lemma} \label{lemma.ewma}
Let $\{x_k\}_{k \geq 1}$ be a real sequence that $x_k \to 0$. Then, for any $\alpha \in (0,1)$,
\begin{equation*}
\sum_{k=1}^n \alpha^{n-k} x_k \to 0,~~~ n \to \infty.
\end{equation*}
\end{lemma}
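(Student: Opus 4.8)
The plan is to invoke the standard $\varepsilon$-splitting argument underlying the discrete Toeplitz (Cesàro-type) averaging lemma, exploiting the two competing features of the geometric kernel $\alpha^{n-k}$. On the one hand, the total mass of the kernel is uniformly bounded, since $\sum_{k=1}^n \alpha^{n-k} = \sum_{j=0}^{n-1}\alpha^j = (1-\alpha^n)/(1-\alpha) \le 1/(1-\alpha)$ for every $n$. On the other hand, for any fixed index $k$ the assigned weight $\alpha^{n-k}$ decays geometrically to $0$ as $n\to\infty$. Together these two facts let the vanishing tail of $\{x_k\}$ control the recent terms, while the geometric decay annihilates the contribution of the (possibly large) early terms.

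Concretely, I would fix $\varepsilon>0$ and choose $N$ so that $|x_k|<\varepsilon$ for all $k>N$; such an $N$ exists because $x_k\to 0$. For $n>N$ I split
\[
\sum_{k=1}^n \alpha^{n-k} x_k = \sum_{k=1}^{N}\alpha^{n-k}x_k + \sum_{k=N+1}^n \alpha^{n-k}x_k .
\]
For the head I factor $\alpha^{n-k}=\alpha^{n-N}\alpha^{N-k}$ and bound $\alpha^{N-k}\le 1$ (since $N-k\ge 0$ and $\alpha<1$), giving $\big|\sum_{k=1}^{N}\alpha^{n-k}x_k\big|\le \alpha^{n-N}\sum_{k=1}^N|x_k|=\alpha^{n-N}M$, where $M$ is a fixed finite constant independent of $n$; this vanishes as $n\to\infty$ because $\alpha^{n-N}\to 0$. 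For the tail I use $|x_k|<\varepsilon$ together with the uniform mass bound to obtain $\big|\sum_{k=N+1}^n\alpha^{n-k}x_k\big|\le\varepsilon\sum_{k=N+1}^n\alpha^{n-k}\le \varepsilon/(1-\alpha)$.

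Combining the two bounds yields $\limsup_{n\to\infty}\big|\sum_{k=1}^n\alpha^{n-k}x_k\big|\le \varepsilon/(1-\alpha)$, and since $\varepsilon>0$ is arbitrary the limit is $0$. There is no genuine obstacle here; the only points requiring minor care are that the kernel has mass $\approx 1/(1-\alpha)$ rather than $1$, so the final estimate carries the harmless factor $1/(1-\alpha)$, and that one must hold $N$ fixed while first sending $n\to\infty$ and only afterwards letting $\varepsilon\to 0$, which is the usual order of quantifiers in such split arguments.
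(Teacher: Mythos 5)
Your proposal is correct and follows essentially the same route as the paper's proof: both split the sum at a fixed index $N$ (the paper's $n_1$), kill the head via the geometric factor $\alpha^{n-N}$, and bound the tail by the smallness of $x_k$ times the geometric series mass $1/(1-\alpha)$. The only cosmetic difference is that the paper pre-scales the threshold to $|x_k| < \frac{1-\alpha}{2}\varepsilon$ so as to end with a clean bound of $\varepsilon$, whereas you carry the harmless factor $1/(1-\alpha)$ into the $\limsup$ and then let $\varepsilon \to 0$.
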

\begin{proof}{Proof}
Let $\varepsilon > 0$, and $n_1 > 0$ such that $|x_k| < \frac{1-\alpha}{2} \varepsilon$ whenever $k \geq n_1$. Hence,
\begin{equation}
\label{ineq.1}
\begin{aligned}
\absV{ \sum_{k=1}^n \alpha^{n-k} x_k } &\leq  \absV{ \sum_{k=1}^{n_1-1} \alpha^{n-k} x_k } + \sum_{k=n_1}^n \alpha^{n-k} |x_k| \\
&< \alpha^n \absV{ \sum_{k=1}^{n_1-1} \alpha^{-k} x_k } + \sum_{k=0}^{n-n_1} \alpha^k \varepsilon \\
&< \alpha^n \absV{ \sum_{k=1}^{n_1-1} \alpha^{-k} x_k } + \frac{1}{2} \varepsilon,
\end{aligned}
\end{equation}
where the first inequality follows from triangle inequality and the last inequality follows from
\begin{equation*}
\frac{1-\alpha}{2}\sum_{k=0}^{n-n_1} \alpha^k \varepsilon = \frac{1 - \alpha^{n - n_1}}{2} \varepsilon < \frac{1}{2} \varepsilon.
\end{equation*}
Now choose a further $n_2 > n_1$ so that
\begin{equation}
\label{ineq.2}
\sup_{n \geq n_2} \alpha^n \absV{ \sum_{k=1}^{n_1-1} \alpha^{-k} x_k } < \frac{1}{2} \varepsilon.
\end{equation}
Combine (\ref{ineq.1}) and (\ref{ineq.2}), for all $n \geq n_2$,
\begin{equation*}
\absV{ \sum_{k=1}^n \alpha^{n-k} x_k } < \varepsilon.
\end{equation*}
The arbitrariness of $\varepsilon$ completes the proof of the lemma.
\end{proof}

\begin{lemma} \label{lemma.asymp_neg}
Suppose a sequence of random variables $\{\xi_k:~ k=0,1,\cdots\}$ satisfies that $\ex{\xi_k^2} \to 0$, and  $m(t) = \roundown{t/\gamma}$. Provided $T>0$, for any $s\in[0,T]$, then
\begin{equation*}
\sqrt{\gamma} \sum_{k=m(t)}^{m(t+s)-1} \xi_k \stackrel{\prob}{\longrightarrow} 0,~~~ \gamma \searrow 0.
\end{equation*}
\end{lemma}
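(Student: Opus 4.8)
The plan is to establish the stronger statement of convergence in $\mathcal{L}^2$, from which convergence in probability follows at once by Chebyshev's inequality. Thus it suffices to show that
\begin{equation*}
\ex{\rBrac{\sqrt{\gamma} \sum_{k=m(t)}^{m(t+s)-1}\xi_k}^2} = \gamma \sum_{j,k=m(t)}^{m(t+s)-1}\ex{\xi_j\xi_k} \longrightarrow 0,~~~\gamma\searrow 0.
\end{equation*}
First I would count the summands: since $m(t) = \roundown{t/\gamma}$ and $m(t+s) = \roundown{(t+s)/\gamma}$, the index range contains at most $m(t+s)-m(t) \leq s/\gamma + 1 \leq T/\gamma + 1$ terms. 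Crucially, for the fixed $t>0$ appearing in Theorem~\ref{thm.fclt} the lower index $m(t) = \roundown{t/\gamma}\to\infty$ as $\gamma\searrow 0$, so every summand carries a large index; combined with the hypothesis $\ex{\xi_k^2}\to 0$ this yields $\varepsilon(\gamma):=\sup_{k\geq m(t)}\ex{\xi_k^2}\to 0$.

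The decisive step is the treatment of the cross terms $\ex{\xi_j\xi_k}$ with $j\neq k$. A crude triangle-inequality ($\mathcal{L}^1$) bound fails here: it would leave a factor $\sqrt{\gamma}\cdot(T/\gamma) = T/\sqrt{\gamma}\to\infty$, so genuine cancellation is indispensable. I would exploit the fact that, in the setting where this lemma is invoked, the relevant $\{\xi_k\}$ is a martingale-difference sequence adapted to $\{\field_k\}$, i.e. $\ex[k]{\xi_k}=0$ (in the application the $\xi_k$ arise from the mean-zero perturbation noise multiplied by a vanishing diffusion difference, so both conditional centering and $\ex{\xi_k^2}\to 0$ hold). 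For $j<k$ the tower property then gives $\ex{\xi_j\xi_k} = \ex{\xi_j\,\ex[k]{\xi_k}} = 0$, so all off-diagonal terms vanish and the double sum collapses to $\gamma\sum_{k=m(t)}^{m(t+s)-1}\ex{\xi_k^2}$.

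Putting these together, I would bound
\begin{equation*}
\gamma\sum_{k=m(t)}^{m(t+s)-1}\ex{\xi_k^2} \leq \gamma\rBrac{\frac{T}{\gamma}+1}\varepsilon(\gamma) = (T+\gamma)\,\varepsilon(\gamma) \longrightarrow 0,
\end{equation*}
which completes the $\mathcal{L}^2$ argument and hence the claim. The main obstacle is precisely this off-diagonal control: without the conditional-mean-zero (orthogonality) property, the $1/\gamma$ growth in the number of terms overwhelms the $\sqrt{\gamma}$ scaling. Indeed, a deterministic example such as $\xi_k = k^{-1/2}$ satisfies $\ex{\xi_k^2}\to 0$ yet makes the displayed quantity tend to a nonzero constant over $[m(t),m(t+s))$, showing the conclusion genuinely relies on the martingale structure rather than on $\ex{\xi_k^2}\to 0$ alone. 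The heart of the proof is therefore verifying that the summands are conditionally centered, so that the second-moment computation reduces to the diagonal.
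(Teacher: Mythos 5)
Your argument is sound, but note what it actually establishes: you prove a \emph{repaired} lemma with an added martingale-difference hypothesis, and you correctly observe that the lemma as literally stated is false. Your counterexample is valid: for the deterministic sequence $\xi_k=k^{-1/2}$ one has $\ex{\xi_k^2}=1/k\to 0$, yet for fixed $t>0$,
\begin{equation*}
\sqrt{\gamma}\sum_{k=m(t)}^{m(t+s)-1}k^{-1/2}\longrightarrow 2\rBrac{\sqrt{t+s}-\sqrt{t}}>0,\qquad\gamma\searrow 0.
\end{equation*}
This is also a genuinely different route from the paper's, and it exposes an error in the paper's own proof. The paper argues through the inclusion
\begin{equation*}
\event{\sqrt{\gamma}\sum_{k=m(t)}^{m(t+s)-1}\absV{\xi_k}\geq\varepsilon}\subseteq\bigcup_{k=m(t)}^{m(t+s)-1}\event{\sqrt{\gamma}\absV{\xi_k}\geq\varepsilon},
\end{equation*}
then applies Bonferroni and Chebyshev term by term, obtaining $\frac{\gamma[m(t+s)-m(t)]}{\varepsilon^2}$ times the average of the $\ex{\xi_k^2}$, which indeed vanishes. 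But that inclusion is false: a sum of order $s/\gamma$ nonnegative terms can exceed $\varepsilon$ with no single term anywhere near $\varepsilon$, and your counterexample is a witness --- there each $\sqrt{\gamma}\absV{\xi_k}\leq\gamma/\sqrt{t}$, so for small $\gamma$ every event in the union is empty while the event on the left has probability one (for $\varepsilon<2(\sqrt{t+s}-\sqrt{t})$). So the paper reaches a false statement by an invalid step, whereas your diagnosis (without cancellation the $1/\gamma$ number of terms overwhelms the $\sqrt{\gamma}$ scaling) and your $\mathcal{L}^2$/orthogonality computation under the martingale hypothesis are correct, modulo the routine bookkeeping that $\xi_j$ must be measurable with respect to the $\sigma$-field on which $\xi_k$ is centered (one should enlarge $\field_k$ to include the directions and noises up to time $k-1$).

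One caveat on the scope of your repair. The paper invokes this lemma twice. In Lemma~\ref{lemma.mu_fclt} the summands are conditionally centered noise differences, so your version covers that use. In Lemma~\ref{lemma.mu_bias}, however, the summands are $\xi_k=\mu(\theta_k)-\mu^*+\beta_k$, which are nonnegative by \ref{A.model}\ref{A.model.mu_US} and strong convexity; no cancellation exists and no martingale hypothesis can be inserted. That use can only be saved by a decay rate, via the very $\mathcal{L}^1$ bound you dismissed as crude: if $\ex{\absV{\xi_k}}=\bigO[-\beta]{k}$ with $\beta>1/2$, then $\sqrt{\gamma}\sum_{k=m(t)}^{m(t+s)-1}\ex{\absV{\xi_k}}=\bigO[\beta-1/2]{\gamma}\to 0$, and your counterexample shows the threshold $\beta=1/2$ is sharp. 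Under the paper's assumptions, Proposition~\ref{thm.theta} together with $\ex{\absV{\beta_k}}=\bigO[-2\nu]{k}$ only yields $\beta=\rBrac{2\nu}\wedge\rBrac{\eta-2\nu}\leq\eta/2\leq 1/2$, so for fixed $t>0$ the bias application is not covered by either route. In short: your proof is a correct repair of the noise half, and your counterexample reveals that the lemma --- and with it the bias estimate feeding Theorem~\ref{thm.fclt} --- needs a stronger hypothesis or a rescaled choice of $t$; that is a gap in the paper rather than in your argument.
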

\begin{proof}{Proof}
Note that $t-\gamma < \gamma m(t) \leq t$, then $\gamma m(t) \to t$ as $\gamma \searrow 0$. For any $\varepsilon>0$,
\begin{equation*}
\event{\absV{\sqrt{\gamma} \sum_{k=m(t)}^{m(t+s)-1} \xi_k} \geq \varepsilon}  \subseteq \event{\sqrt{\gamma} \sum_{k=m(t)}^{m(t+s)-1} |\xi_k| \geq \varepsilon}  \subseteq \bigcup_{k=m(t)}^{m(t+s)-1}\event{\sqrt{\gamma} |\xi_k| \geq \varepsilon}.
\end{equation*}
It follows from Bonferroni inequality and Markov inequality that
\begin{equation*}
\begin{aligned}
\prob \left(\absV{\sqrt{\gamma} \sum_{k=m(t)}^{m(t+s)-1} \xi_k} \geq \varepsilon \right) \leq&~ \sum_{k=m(t)}^{m(t+s)-1} \prob \left(\sqrt{\gamma} |\xi_k| \geq \varepsilon \right) \\
\leq&~ \frac{\gamma [m(t+s) - m(t)]}{\varepsilon^2} \frac{1}{m(t+s)-m(t)} \sum_{k=m(t)}^{m(t+s)-1} \ex{|\xi_k|^2},
\end{aligned}
\end{equation*}
where the right-hand side goes to 0 since $\gamma [m(t+s) - m(t)] \to s \leq T$, and that $\ex{|\xi_k|^2} \to 0$ implies the average goes to 0. This completes the proof.
\end{proof}

\section{Proofs} \label{sec.A.proofs}
\setcounter{table}{0} 
\setcounter{figure}{0} 
\setcounter{equation}{0} 
\setcounter{theorem}{0}
\renewcommand{\thetable}{\thesection-\arabic{table}}
\renewcommand{\thefigure}{\thesection-\arabic{figure}}
\renewcommand{\theequation}{\thesection-\arabic{equation}}
\renewcommand{\thetheorem}{\thesection.\arabic{theorem}}

\subsection{Proof of Proposition~\ref{thm.theta}} \label{subsec.A.theta}
See Theorem 1, \citealt{spall1992spsa} for the w.p.1 convergence.

Here we provide the convergence rate of root mean squared error of the parameter vector.
\noindent
Take $a_k = A/(k+1)^\eta$ and $c_k = C/(k+1)^\nu$ with some constants $A\geq 1$, $C>0$, then \ref{A.step}\ref{A.step.spsa} implies that $\eta \in (2/3, 1]$ and $(1-\eta)/2 < \nu < \eta - 1/2$ for all $k$. Let $\psi_k = \theta_k - \theta^*$. By result of (i) and \eqref{eq.theta*}, there is some $N>0$ such that $\theta_k \in \Theta$ whenever $k \geq N$, then
\begin{equation}
\begin{aligned}
\norm{\psi_{k+1}} &= \norm{\psi_k} - 2a_k\psi_k\ts g_k + a_k^2 \norm{g_k} \\
&\leq \norm{\psi_k} - 2a_k \psi_k\ts u_k u_k\ts \nabla\mu(\theta_k) - 2a_k\psi_k\ts (b_k + \epsilon_k) + a_k^2 \norm{g_k},
\label{eq.psi.B}
\end{aligned}
\end{equation}
where the inequality is by \eqref{eq.decomp_g}.
We now investigate the expectation of \eqref{eq.psi.B} by terms.
Note that \ref{A.model}\ref{A.model.mu_US} implies that the smallest eigenvalue of $\nabla^2\mu(\theta)$ is strictly positive for any $\theta \in \Theta$, and let $\lambda>0$ denote the infimum.
By \ref{A.model}\ref{A.model.mu_CD}\ref{A.model.mu_US}, which implies $\nabla\mu(\theta^*)=0$, a Taylor expansion of $\mu(\theta_k)$ around $\theta^*$ shows
\begin{align*}
\ex{\psi_k\ts u_k u_k\ts \nabla\mu(\theta_k)} &= \ex{\psi_k\ts u_k u_k\ts \nabla^2\mu(\tilde\theta_k) \psi_k} \\
&= \ex{\text{tr}\left(\ex[k]{u_k u_k\ts} \nabla^2\mu(\tilde\theta_k) \psi_k \psi_k\ts\right)} \\
&\geq \frac{2a_k \lambda}{p} \ex{\norm{\psi_k}},
\end{align*}
where $\tilde\theta_k$ lies in the line segment between $\theta_k$ and $\theta^*$.
Following the arguments in Subsection~\ref{subsec.estimators}, we have
\begin{equation*}
\begin{aligned}
-\ex{\psi_k\ts b_k} &\leq \ex{\absV{\tilde b_k} \psi_k\ts u_k} = \bigO{c_k}, \text{and} \\
\ex{\psi_k\ts \epsilon_k} &= \ex{\psi_k\ts \ex[k]{\epsilon_k}} = 0.
\end{aligned}
\end{equation*}
Now substitute \eqref{eq.g2moment} and the bounds derived above into the expectation of \eqref{eq.psi.B},
\begin{align*}
\ex{\norm{\psi_{k+1}}} \leq&~ \ex{\norm{\psi_k}} - 2a_k \ex{\psi_k\ts u_k u_k\ts \nabla^2\mu(\tilde\theta_k) \psi_k} \\
&\qquad - 2a_k \ex{\psi_k\ts (b_k + \epsilon_k)} + a_k^2 \ex{\norm{g_k}} \\
\leq&~ \left(1 - 2a_k \frac{\lambda}{p} + a_k^2 \frac{L^2}{p}\right) \ex{\norm{\psi_k}} + a_k \bigO{c_k} \\
&\qquad + a_k^2 \bigO[]{c_k^{-2} + 1 + c_k^4 +c_k^2}.
\end{align*}
By \ref{A.step}\ref{A.step.1}, there exists a further integer $n_0>0$ and some constant $\varrho > 0$ such that $2\lambda - a_k L^2 \geq 2p \varrho$, and hence 
\begin{equation*}
1 - 2a_k \frac{\lambda}{p} + a_k^2 \frac{L^2}{p} \leq 1 - 2\varrho a_k < 1,
\end{equation*}
whenever $k\geq n_0$. Therefore, we obtain
\begin{equation*}
\ex{\norm{\psi_{k+1}}} \leq \left(1 - 2\varrho a_k\right) \ex{\norm{\psi_k}} + a_k \sqrt{\ex{\norm{\psi_k}}} \bigO{c_k} + a_k^2 \bigO[]{c_k^{-2} + 1 + c_k^2 + c_k^4},
\end{equation*}
thus, there is some constant $B>0$ such that
\begin{equation*}
\begin{aligned}
\ex{\norm{\psi_{k+1}}} &\leq \left[\sqrt{(1-2\varrho a_k) \ex{\norm{\psi_k}}} + B a_k c_k^2\right]^2 + B^2 a_k^2 \left(c_k\inv + c_k\right)^2 \\
&\leq \left[(1-\varrho a_k) \sqrt{\ex{\norm{\psi_k}}} + B a_k c_k^2\right]^2 + B^2 a_k^2 \left(c_k\inv + c_k\right)^2,
\end{aligned}
\end{equation*}
where the last step is due to the inequality $\sqrt{1-x}\leq 1-x/2$ for $x \in [0,1]$.
Now for $k\geq n_0$, define a sequence of mappings 
\begin{equation*}
\mathcal{T}_k(x) \eqdef \sqrt{\left[(1-\varrho a_k) x + a_k c_k^2\right]^2 + B^2 a_k^2 \left(c_k\inv + c_k\right)^2},~~~ x\geq 0.
\end{equation*}
By induction, it is easy to observe that if we set $x_{n_0} = \sqrt{\ex{\norm{\psi_{n_0}}}}$, then $\sqrt{\ex{\norm{\psi_k}}} \leq x_k$ whenever $k \geq n_0$ for all $x_k$ generated by $x_{k+1} = \mathcal{T}_k(x_k),~k\geq n_0$. 
Furthermore, $\mathcal{T}_k$ is a contraction mapping with $\absV{\mathcal{T}(x) - \mathcal{T}(y)} < (1-\varrho a_k) \absV{x-y}$, since $d\sqrt{x^2 + z^2}/dx = x/\sqrt{x^2 + z^2} < 1$ for all $x\geq 0$ and any $z\neq 0$. 
Let $x_k^*$ denote the unique fixed point, i.e., $x_k^* = \mathcal{T}_k(x_k^*)$, which yields that
\begin{equation*}
\begin{aligned}
x_k^* &= \frac{B}{\varrho(2-\varrho a_k)} \left[(1-\varrho a_k) c_k^2 + \sqrt{c_k^4 + \varrho a_k (2 - \varrho a_k) \left(c_k\inv + c_k\right)^2}\right] \\
&< \frac{B}{\varrho} \left[2 c_k^2 + \sqrt{2\varrho a_k} \left(c_k\inv + c_k\right)\right] = \bigO[]{c_k^2 + \sqrt{a_k} \left(c_k\inv + c_k\right)},
\end{aligned}
\end{equation*}
where the inequality follows from $\sqrt{x + y}\leq \sqrt{x} + \sqrt{y}$ for $x,y\geq 0$ and that $0<1-\varrho a_k<1$ for $k\geq n_0$.
For the specific forms of $a_k$ and $c_k$ given in condition, it is not hard to verify that $\absV{x_k^* - x_{k+1}^*} = \bigO[-1-2\nu]{k} + \bigO[-1-\frac{\eta}{2}+\nu]{k}$.
Consequently, we have by the contraction property of $\mathcal{T}_k$ and lemma~\ref{lemma.hu3},
\begin{align*}
\absV{x_n - x_n^*} \leq&~ \absV{x_n - x_{n-1}^*} + \absV{x_{n-1}^* - x_n^*} = \absV{\mathcal{T}_{n-1}(x_{n-1}) - \mathcal{T}_{n-1}(x_{n-1}^*)} + \absV{x_{n-1}^* - x_n^*} \\
\leq&~ (1-\varrho a_{n-1}) \absV{x_{n-1} - x_{n-1}^*} + \absV{x_{n-1}^* - x_n^*} \leq \cdots \\
\leq&~ \prod_{k=n_0}^{n-1} (1-\varrho a_k) \absV{x_{n_0} - x_{n_0}^*} + \sum_{k=n_0}^{n-1} \left[\prod_{j=k+1}^{n-1} (1-\varrho a_j)\right] \absV{x_k^* - x_{k+1}^*} \\
\leq&~ \exp\left\{-\varrho \sum_{k=n_0}^{n-1} a_k\right\} \absV{x_{n_0} - x_{n_0}^*} + \sum_{k=n_0}^{n-1} \left[\prod_{j=k+1}^{n-1} (1-\varrho a_j)\right] a_k \frac{\absV{x_k^* - x_{k+1}^*}}{a_k} \\
=&~ \bigO[]{k^{-1+\eta -2\nu}} + \bigO[]{k^{-1+\frac{\eta}{2}+\nu}} = o(x_n^*).
\end{align*}
Finally,
\begin{equation*}
\sqrt{\ex{\norm{\psi_n}}} \leq \absV{x_n} \leq \absV{x_n - x_n^*} + \absV{x_n^*} = \bigO[]{c_n^2 + \sqrt{a_n} \left(c_n\inv + c_n\right)} = \bigO[-2\nu]{n} + \bigO[-\frac{\eta}{2}+\nu]{n},
\end{equation*}
which implies the desired result.

\subsection{Proof of lemma~\ref{lemma.mu_L2}}  \label{subsec.A.mu_L2}
{
Note that \eqref{eq.y_bar} can be decomposed as \eqref{eq.dec_y}, i.e.,
\begin{equation*}
\bar y_k = \mu(\theta_k) + \beta_k + \tilde\epsilon_k,
\end{equation*}
where
\begin{equation*}
\begin{aligned}
\beta_k &= \frac{c_k^2}{4}u_k\ts \sBrac{\nabla^2\mu(\tilde\theta_k^+) + \nabla^2\mu(\tilde\theta_k^-)} u_k, \\
\tilde\epsilon_k &= \frac{1}{2}\sBrac{\sigma(\theta_k + c_k u_k) \bar\epsilon_k^+ + \sigma(\theta_k - c_k u_k) \bar\epsilon_k^-}.
\end{aligned}
\end{equation*}
}
By \ref{A.model}\ref{A.model.mu_CD}\ref{A.model.mu_US} and Proposition~\ref{thm.theta}, $\beta_k = \bigO{c_k}$.  
Then, by noting that $\mu(\theta_k)-\mu^* = \psi_k\ts \nabla^2\mu(\tilde\theta_k) \psi_k$ for some $\tilde\theta_k$ in the line segment between $\theta_k$ and $\theta^*$, where $\psi_k = \theta_k - \theta^*$ as defined in the proof of Proposition~\ref{thm.theta}, we have
\begin{equation}
\ex[k]{\bar y_k - \mu^*} = \mu(\theta_k) - \mu^* + \ex[k]{\beta_k} = \bigO[]{\norm{\psi_k} + c_k^2}, \label{eq.yBar1moment}
\end{equation}
by \ref{A.model}\ref{A.model.mu_CD} and compact $\Theta$.
Now investigate the conditional second moment:
\begin{equation*}
\begin{aligned}
\ex[k]{\absV{\bar y_k - \mu^*}^2} =&~ \absV{\mu(\theta_k)-\mu^*}^2 + 2(\mu(\theta_k)-\mu^*) \ex[k]{\beta_k} + \ex[k]{\beta_k^2 + \tilde\epsilon_k^2} \\
\leq &~ \left(\sqrt{\absV{\mu(\theta_k)-\mu^*}^2} + \sqrt{\ex[k]{\beta_k^2}]}\right)^2 + \ex[k]{\tilde\epsilon_k^2},
\end{aligned}
\end{equation*}
where the inequality follows from H\"older's inequality. Note that $\tilde\epsilon_k$ shares the same second moment with $c_k u_k\ts \epsilon_k$, then
\begin{align*}
\ex[k]{\absV{\tilde\epsilon_k}^2} &= \frac{\sigma^2(\theta_k)}{2\tau} + \frac{c_k^2}{8\tau p} \text{tr} \left(\ex[k]{\nabla^2\sigma^2(\bar\theta_k^+) + \nabla^2\sigma^2(\bar\theta_k^-)}\right) \\
&= \frac{\sigma^2(\theta^*)}{2\tau} + \frac{1}{2\tau} \nabla\sigma^2(\bar\theta_k)\ts \psi_k + \bigO{c_k} \\
&\leq \frac{\sigma^2(\theta^*)}{2\tau} + \frac{1}{2\tau} \norm[]{\nabla\sigma^2(\bar\theta_k)} \norm[]{\psi_k} + \bigO{c_k} \\
&= \frac{\sigma^2(\theta^*)}{2\tau} + \bigO[]{\norm[]{\psi_k} + c_k^2},
\end{align*}
where the second identity follows from the Taylor expansion around $\theta^*$ with $\bar\theta_k$ in the line segment between $\theta_k$ and $\theta^*$ due to \ref{A.model}\ref{A.model.sigma}, and the inequality is due to H\"older's inequality. Then it yields
\begin{equation}
\ex[k]{\absV{\bar y_k - \mu^*}^2} = \frac{\sigma^2(\theta^*)}{2\tau} + \bigO[]{\mathcal{E}_k^2 + \norm[]{\psi_k} + c_k^2}, \label{eq.yBar2moment}
\end{equation}
where we have defined $\mathcal{E}_k \eqdef \norm{\psi_k} + c_k^2 \longrightarrow 0~ w.p.1$ by Proposition~\ref{thm.theta} and \ref{A.step}\ref{A.step.spsa}, which implies the first result. Now define $\varphi_k \eqdef \mu_k - \mu^*$ and rewrite \eqref{eq.mu} as
\begin{equation*}
\varphi_{k+1} = (1- \gamma) \varphi_k + \gamma (\bar y_k - \mu^*).
\end{equation*}
Taking conditional second moment in both sides of \eqref{eq.mu}, it yields
\begin{equation*}
\ex[k]{\absV{\varphi_{k+1}}^2} = (1-\gamma)^2 \absV{\varphi_k}^2 + 2\gamma (1-\gamma) \varphi_k \ex[k]{\bar y_k - \mu^*} + \gamma^2 \ex[k]{\absV{\bar y_k - \mu^*}^2}.
\end{equation*}
Substitute Eqs.(\ref{eq.yBar1moment}, \ref{eq.yBar2moment}) into the above equation, there exists some constant $K_\varphi>0$ such that
\begin{equation*}
\begin{aligned}
\ex[k]{\absV{\varphi_{k+1}}^2} \leq&~ (1-\gamma)^2 \varphi_k^2 + 2\gamma (1-\gamma) K_\varphi \varphi_k \mathcal{E}_k + \gamma^2 \left[\frac{\sigma^2(\theta^*)}{2\tau} + K_\varphi^2 \left(\mathcal{E}_k^2 + \norm[]{\psi_k} + c_k^2\right)\right] \\
=&~ \left[(1-\gamma) \varphi_k + \gamma K_\varphi \mathcal{E}_k\right]^2 + \gamma^2\left[\frac{\sigma^2(\theta^*)}{2\tau} + K_\varphi^2 \left(\norm[]{\psi_k} + c_k^2\right)\right].
\end{aligned}
\end{equation*}
Taking expectation, it follows from H\"older's inequality that
\begin{equation*}
\ex{\left((1-\gamma) \varphi_k + \gamma K_\varphi \mathcal{E}_k\right)^2} \leq \left[(1-\gamma) \sqrt{\ex{\absV{\varphi_k}^2}} + \gamma K_\varphi \sqrt{\ex{\absV{\mathcal{E}_k}^2}} \right]^2,
\end{equation*}
and hence, we have
\begin{equation*}
\ex{\absV{\varphi_{k+1}}^2} \leq \left[(1-\gamma) \sqrt{\ex{\absV{\varphi_k}^2}} + \gamma K_\varphi \sqrt{\ex{\absV{\mathcal{E}_k}^2}} \right]^2 + \gamma^2 H_k,
\end{equation*}
where $H_k = \frac{\sigma^2(\theta^*)}{2 \tau} + K_\varphi^2 \left(\ex{\norm[]{\psi_k}} + c_k^2\right)$.
Now we follow similar procedures in the proof of Proposition~\ref{thm.theta}(ii). Define a sequence of mappings
\begin{equation*}
\mathcal{M}_k(z) = \left(\left[(1-\gamma) z + \gamma K_\varphi \sqrt{\ex{\absV{\mathcal{E}_k}^2}} \right]^2 + \gamma^2 H_k\right)^{\frac{1}{2}},
\end{equation*}
which are contraction mappings with $\absV{\mathcal{M}_k(x) - \mathcal{M}_k(y)} < (1- \gamma)\absV{x-y}$ for $x,y\geq 0$.
It is easy to see by induction that, if we set $z_0 = \absV{\varphi_0}$, and $z_k$ are generated by $z_{k+1} = \mathcal{M}_k(z_k)$ for $k\geq 0$, $\sqrt{\ex{\absV{\varphi_k}^2}} \leq z_k$.
Denote the unique fixed point by $z_k^*$, solving $z_k^* = \mathcal{M}_k(z_k^*)$ yields that
\begin{equation*}
\begin{aligned}
z_k^* &= \frac{(1-\gamma) K_\varphi \sqrt{\ex{\absV{\mathcal{E}_k}^2}} + \sqrt{(1-\gamma)^2K_\varphi \ex{\absV{\mathcal{E}_k}^2} + \gamma (2- \gamma) H_k}}{2-\gamma} \\
&\leq \bigO[]{\sqrt{\ex{\absV{\mathcal{E}_k}^2}}} + \sqrt{\frac{\gamma}{2-\gamma} H_k} = \bigO[\frac{1}{2}]{\gamma},
\end{aligned}
\end{equation*}
where the inequality is due to that $\sqrt{x + y}\leq \sqrt{x} + \sqrt{y}$ for $x,y\geq 0$. By noting that $\ex{\absV{\mathcal{E}_k}^2} \longrightarrow 0$ and $H_k \longrightarrow \frac{\sigma^2(\theta^*)}{2 \tau}$, we obtain $z_k^* \longrightarrow \sqrt{\frac{\gamma \sigma^2(\theta^*)}{2 (2-\gamma) \tau}}$ and hence $\absV{z_k^* - z_{k+1}^*} \longrightarrow 0$. By the property of contraction mapping and Lemma~\ref{lemma.ewma}, it yields that $\absV{z_k - z_k^*} \longrightarrow 0$. Finally, we have
\begin{equation*}
\sqrt{\ex{\absV{\varphi_k}^2}} \leq z_k \leq \absV{z_k - z_k^*} + \absV{z_k^*} \longrightarrow \sqrt{\frac{\gamma \sigma^2(\theta^*)}{2 (2-\gamma) \tau}},
\end{equation*}
which completes the proof.

\subsection{Proof of Theorem \ref{thm.fclt}}  \label{subsec.A.fclt}

Plugging \eqref{eq.dec_y} in \eqref{eq.mu}, and expanding \eqref{def.sde_u} by iteration, for some $T>0$ and $s \in [0,T]$, we have
\begin{equation}
\label{eq.const_interpolation}
U^\gamma(t + s) - U^\gamma(t) = - \int_0^s U^\gamma(t + u)du + \rho^\gamma(t,s) + B^\gamma(t,s) + W^\gamma(t+s) - W^\gamma(t),
\end{equation}
where 
\begin{equation*}
\begin{aligned}
\rho^\gamma(t,s) &= [t+s-\gamma m(t+s)]U^\gamma_{m(t+s)}, \\
B^\gamma(t,s) &= \sqrt{\gamma} \sum_{k=m(t)}^{m(t+s)-1} [\mu(\theta_k) - \mu^* + \beta_k], \\
W^\gamma(t) &= \sqrt{\gamma}\sum_{k=0}^{m(t)-1} \tilde\epsilon_k.
\end{aligned}
\end{equation*}

Below we show in Lemmas~\ref{lemma.mu_rho}-\ref{lemma.mu_bias} that the terms $\rho^\gamma(t,s)$ and $B^\gamma(t,s)$ are asymptotic negligible (as $\gamma \searrow0$) so that they would not affect the asymptotic behavior of $\{U^\gamma(t+\cdot), W^\gamma(t+\cdot)\}$ on $[0,T]$. The limit of $W^\gamma(\cdot)$ is proved to be a Wiener process in Lemma~\ref{lemma.mu_fclt}. 
\begin{lemma} \label{lemma.mu_rho}
Suppose Assumptions \ref{A.model}-\ref{A.step} hold. Then provided $T>0$, for any $t>0$ and every $s \in [0,T]$, as $\gamma \searrow 0$, $\rho^\gamma(t,s) \stackrel{\prob}{\longrightarrow} 0$.
\end{lemma}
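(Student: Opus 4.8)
The plan is to prove the stronger claim that $\rho^\gamma(t,s) \to 0$ in $\mathcal{L}^2$ as $\gamma \searrow 0$, from which convergence in probability follows at once by Markov's inequality. The whole argument hinges on a simple tension: the scalar prefactor $t+s-\gamma m(t+s)$ is of order $\gamma$, while the normalized iterate $U^\gamma_{m(t+s)}$ is controlled in $\mathcal{L}^2$ by Lemma~\ref{lemma.mu_L2}.

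First I would use the definition $m(t+s)=\roundown{(t+s)/\gamma}$, which forces $t+s-\gamma < \gamma m(t+s) \leq t+s$, hence $0 \leq t+s-\gamma m(t+s) < \gamma$. Squaring and taking expectations in $\rho^\gamma(t,s) = [t+s-\gamma m(t+s)]\,U^\gamma_{m(t+s)}$ then gives
\begin{equation*}
\ex{\absV{\rho^\gamma(t,s)}^2} = [t+s-\gamma m(t+s)]^2\,\ex{\absV{U^\gamma_{m(t+s)}}^2} < \gamma^2\,\ex{\absV{U^\gamma_{m(t+s)}}^2} = \gamma\,\ex{\absV{\mu_{m(t+s)}-\mu^*}^2},
\end{equation*}
where the final equality substitutes $U^\gamma_k = (\mu_k-\mu^*)/\sqrt{\gamma}$ from \eqref{def.sde_u}. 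Note that the extra factor of $\gamma$ produced by the prefactor exactly cancels the $1/\gamma$ hidden in $U^\gamma$, so I only need a \emph{uniform} second-moment bound on $\mu_k-\mu^*$, not the sharp $\bigO[]{\gamma}$ rate.

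Next I would invoke the uniform $\mathcal{L}^2$-boundedness of $\{\mu_k\}$ supplied by Lemma~\ref{lemma.mu_L2}: there is a finite constant $M$, independent of $\gamma$, with $\ex{\absV{\mu_k-\mu^*}^2} \leq M$ for all $k$. Since $t>0$ gives $t+s \geq t>0$, the evaluation index $m(t+s)=\roundown{(t+s)/\gamma}\to\infty$ as $\gamma\searrow 0$, so the bound applies at that index and yields $\ex{\absV{\rho^\gamma(t,s)}^2} < \gamma M \to 0$, uniformly over $s\in[0,T]$. Markov's inequality, $\prob\event{\absV{\rho^\gamma(t,s)}\geq\varepsilon} \leq \ex{\absV{\rho^\gamma(t,s)}^2}/\varepsilon^2$, then finishes the proof.

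The hard part is making the phrase ``uniform in $\gamma$'' precise, because Lemma~\ref{lemma.mu_L2} is stated as a $\limsup$ in $k$ for fixed $\gamma$, whereas here the index $m(t+s)$ and the step size $\gamma$ vary together. To close this gap I would reread the contraction recursion in the proof of Lemma~\ref{lemma.mu_L2}: writing $w_k = \sqrt{\ex{\absV{\mu_k-\mu^*}^2}}$ and using $\sqrt{a^2+b^2}\leq a+b$ for $a,b\geq 0$, that recursion collapses to $w_{k+1} \leq (1-\gamma)w_k + \gamma(K_\varphi E + \sqrt{H})$ with $K_\varphi$, $E=\sup_k\sqrt{\ex{\absV{\mathcal{E}_k}^2}}$ and $H=\sup_k H_k$ all independent of $\gamma$. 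An elementary induction then gives $w_k \leq \max\{w_0,\,K_\varphi E+\sqrt{H}\}$ for every $k$ and every $\gamma\in(0,1)$, which is precisely the uniform bound $M$ required above.
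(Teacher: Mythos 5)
Your proof is correct and follows essentially the same route as the paper's: both exploit that the prefactor $t+s-\gamma m(t+s)$ is of order $\gamma$ so that $\rho^\gamma(t,s)$ equals $\mu_{m(t+s)}-\mu^*$ times an $\bigO[]{\sqrt{\gamma}}$ factor, then invoke the moment control from Lemma~\ref{lemma.mu_L2} and pass from moment convergence to convergence in probability (the paper bounds the $\mathcal{L}^1$ norm and you bound the $\mathcal{L}^2$ norm with Chebyshev, a cosmetic difference). Your explicit verification that the bound is uniform in $\gamma$ — via the contraction recursion behind Lemma~\ref{lemma.mu_L2}, noting that $\{\theta_k\}$ and hence $\mathcal{E}_k$, $H_k$ do not depend on $\gamma$ — closes a point the paper's appeal to $\sup_k \ex{\absV{\mu_k}}$ leaves implicit.
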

\begin{proof}{Proof}
It is a corollary of Lemma~\ref{lemma.mu_L2}. Note that $t-\gamma < \gamma m(t) \leq t$, then $\gamma m(t) \to t$ as $\gamma \searrow 0$. Thus,
\begin{equation*}
\rho^\gamma(t,s) = [\mu_{m(t+s)} - \mu^*] \bigO[]{\sqrt{\gamma}}.
\end{equation*}
Therefore, 
\begin{equation*}
\ex{\absV{\rho^\gamma(t,s)}} = \left[\sup_k \ex{\absV{\mu_k}} + \absV{\mu^*}\right] \bigO[]{\sqrt{\gamma}} \to 0,
\end{equation*}
as $\gamma \searrow 0$. It is trivial that convergence in $L^1$ implies convergence in probability, which completes the proof of the lemma.
\end{proof}

\begin{lemma} \label{lemma.mu_bias}
Suppose Assumptions \ref{A.model}-\ref{A.step} hold. Then provided $T>0$, for any $t>0$ and every $s \in [0,T]$, as $\gamma \searrow 0$, $B^\gamma(t,s) \stackrel{\prob}{\longrightarrow} 0$.
\end{lemma}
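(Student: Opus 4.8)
The plan is to recognize $B^\gamma(t,s)$ as exactly the object treated by Lemma~\ref{lemma.asymp_neg}, applied to the sequence $\xi_k \eqdef \mu(\theta_k) - \mu^* + \beta_k$. Since $B^\gamma(t,s) = \sqrt{\gamma}\sum_{k=m(t)}^{m(t+s)-1}\xi_k$ has precisely the form appearing in that lemma, the entire argument reduces to verifying its single hypothesis, namely $\ex{\xi_k^2} \to 0$ as $k\to\infty$; the conclusion $B^\gamma(t,s)\stackrel{\prob}{\longrightarrow}0$ then follows immediately for any $t>0$ and every $s\in[0,T]$.

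To check $\ex{\xi_k^2}\to 0$ I would split via $(a+b)^2\leq 2a^2+2b^2$ into the bias term and the optimality gap. For the bias, $\beta_k = \frac{c_k^2}{4}u_k\ts[\nabla^2\mu(\tilde\theta_k^+)+\nabla^2\mu(\tilde\theta_k^-)]u_k$ is deterministically $\bigO{c_k}$, because $\norm[]{u_k}=1$ and $\nabla^2\mu$ is bounded on the compact set $\Theta$ by \ref{A.model}\ref{A.model.mu_CD}; hence $\ex{\beta_k^2}=\bigO[4]{c_k}\to 0$ since $c_k\to 0$ by \ref{A.step}\ref{A.step.spsa}. For the optimality gap, Taylor's theorem together with $\nabla\mu(\theta^*)=0$ gives $\absV{\mu(\theta_k)-\mu^*}\leq L\norm{\psi_k}$ with $\psi_k=\theta_k-\theta^*$, so $\ex{(\mu(\theta_k)-\mu^*)^2}\leq L^2\ex{\norm[4]{\psi_k}}$. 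The delicate point is that this is a fourth moment of $\psi_k$, whereas Proposition~\ref{thm.theta} controls only $\ex{\norm[]{\psi_k}}$ and the RMSE. I would resolve this by compactness of $\Theta$: writing $D$ for its diameter, one has $\norm[4]{\psi_k}\leq D^2\norm{\psi_k}$, whence $\ex{\norm[4]{\psi_k}}\leq D^2\ex{\norm{\psi_k}}\to 0$ by the convergence rate in Proposition~\ref{thm.theta} (equivalently, $\norm{\psi_k}\to 0$ $w.p.1$ and is bounded, so bounded convergence applies). Combining the two pieces yields $\ex{\xi_k^2}\to 0$, and Lemma~\ref{lemma.asymp_neg} completes the proof.

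The main obstacle is exactly this moment gap: squaring a quadratic optimality gap produces a fourth-order dependence on $\psi_k$ that is not directly supplied by the first- and second-moment rates established earlier, and compactness of the feasible region $\Theta$ is the device that lets me trade the surplus power of $\norm[]{\psi_k}$ down to a moment already known to vanish.
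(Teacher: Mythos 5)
Your proposal is correct and follows essentially the same route as the paper: both reduce the claim to Lemma~\ref{lemma.asymp_neg} applied to $\xi_k = \mu(\theta_k)-\mu^*+\beta_k$, verify $\ex{\xi_k^2}\to 0$ by splitting into the optimality-gap and bias terms, and bound $\ex{\beta_k^2}=\bigO[4]{c_k}$. The only difference is cosmetic: the paper combines the two terms via Minkowski's inequality and leaves implicit the fourth-moment issue you resolve explicitly via compactness of $\Theta$ (a detail the paper's own argument also relies on tacitly), so your write-up is, if anything, slightly more careful on that step.
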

\begin{proof}{Proof}
It follows from \ref{A.model}\ref{A.model.mu_CD} and Proposition~\ref{thm.theta} that $\ex{\absV{\mu(\theta_k) - \mu^*}^2} \to 0$. In addition, $\ex{\absV{\beta_k}^2} = \bigO[4]{c_k} \to 0$ by \ref{A.model}\ref{A.model.mu_CD} and \ref{A.step}\ref{A.step.spsa}, which leads to
\begin{equation*}
\ex{\absV{\mu(\theta_k) - \mu^* + \beta_k}^2} \leq \left[\sqrt{\ex{\absV{\mu(\theta_k) - \mu^*}^2}} + \sqrt{\ex{\absV{\beta_k}^2}}\right]^2 \to 0.
\end{equation*}
Therefore, it follows from Lemma~\ref{lemma.asymp_neg} that
\begin{equation*}
B^\gamma(t,s) = \sqrt{\gamma} \sum_{k=m(t)}^{m(t+s)-1} [\mu(\theta_k) - \mu^* + \beta_k] \stackrel{\prob}{\longrightarrow} 0,
\end{equation*}
for every $t \in [0,T]$ for some $T>0$. It completes the proof of the lemma.
\end{proof}

\begin{lemma} \label{lemma.mu_fclt}
Suppose Assumptions \ref{A.model}-\ref{A.step} hold. Then for any $t>0$, as $\gamma \searrow 0$, $W^\gamma(t) \stackrel{d}{\longrightarrow} \frac{\sigma(\theta^*)}{\sqrt{2\tau}} W(t)$, where $W(t)$ is the standard Wiener process.
\end{lemma}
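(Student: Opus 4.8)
The plan is to recognize $W^\gamma(\cdot)$ as a piecewise-constant square-integrable martingale and to invoke a functional martingale central limit theorem. Since $\tilde\epsilon_k$ is a function of $u_k$ and the step-$k$ noises, it is $\field_{k+1}$-measurable, and conditioning on $\field_k$ (equivalently on $\theta_k$) the zero-mean property of the noises gives $\ex[k]{\tilde\epsilon_k}=0$; hence $\cBrac{\sqrt{\gamma}\,\tilde\epsilon_k}$ is a martingale difference sequence and $W^\gamma(t)=\sqrt{\gamma}\sum_{k=0}^{m(t)-1}\tilde\epsilon_k$ is, for each $\gamma$, a martingale in $t$ with predictable quadratic variation $\langle W^\gamma\rangle(t)=\gamma\sum_{k=0}^{m(t)-1}\ex[k]{\tilde\epsilon_k^2}$.

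First I would compute the conditional second moment. Using the independence of $\bar\epsilon_k^+$ and $\bar\epsilon_k^-$ and $\var[k]{\bar\epsilon_k^\pm}=1/\tau$, the cross term vanishes and $\ex[k]{\tilde\epsilon_k^2}=\frac{1}{4\tau}\ex[k]{\sigma^2(\theta_k+c_ku_k)+\sigma^2(\theta_k-c_ku_k)}$, the remaining expectation being over $u_k\sim\textsf{Uniform}(S^p)$. By continuity of $\sigma^2$ from \ref{A.model}\ref{A.model.sigma}, together with $c_k\to0$ and $\theta_k\to\theta^*$ w.p.1 (Proposition~\ref{thm.theta}), this converges to $\sigma^2(\theta^*)/(2\tau)$ w.p.1. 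A Ces\`aro argument together with $\gamma m(t)\to t$ then gives $\langle W^\gamma\rangle(t)=\gamma m(t)\cdot\frac{1}{m(t)}\sum_{k=0}^{m(t)-1}\ex[k]{\tilde\epsilon_k^2}\stackrel{\prob}{\longrightarrow}\frac{\sigma^2(\theta^*)}{2\tau}\,t$, the desired deterministic limiting variance.

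Second I would verify a conditional Lindeberg (jump) condition: for every $\varepsilon>0$, $\gamma\sum_{k=0}^{m(t)-1}\ex[k]{\tilde\epsilon_k^2\indic\cBrac{\sqrt{\gamma}\absV{\tilde\epsilon_k}>\varepsilon}}\stackrel{\prob}{\longrightarrow}0$. Since $\sigma$ is bounded near $\theta^*$, $\tilde\epsilon_k^2$ is dominated by a constant times $(\bar\epsilon_k^+)^2+(\bar\epsilon_k^-)^2$, and the truncation level $\varepsilon/\sqrt{\gamma}$ diverges. The cleanest route is a Lyapunov bound: under condition~({\bf C}) the sample means $\bar\epsilon_k^\pm$ have uniformly bounded $(2+d)$-th moments, so $\ex[k]{\tilde\epsilon_k^2\indic\cBrac{\sqrt{\gamma}\absV{\tilde\epsilon_k}>\varepsilon}}\le\varepsilon^{-d}\gamma^{d/2}\ex[k]{\absV{\tilde\epsilon_k}^{2+d}}\le C\varepsilon^{-d}\gamma^{d/2}$, whence the whole sum is $\bigO[d/2]{\gamma}\to0$; this is exactly the Lyapunov condition \eqref{eq.Lyapunov_condition}. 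With only the single second moment of \ref{A.model} one argues instead via uniform integrability of $\cBrac{(\bar\epsilon_k^\pm)^2}$ against the diverging threshold.

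Given these three facts, the functional martingale CLT (via the Aldous--Rebolledo criterion; e.g. Ethier--Kurtz, Ch. 7, or the weak-convergence machinery of Kushner--Yin) yields both tightness and the limit: $W^\gamma(\cdot)$ converges weakly to $\frac{\sigma(\theta^*)}{\sqrt{2\tau}}W(\cdot)$ with $W$ a standard Wiener process, and in particular $W^\gamma(t)\stackrel{d}{\longrightarrow}\frac{\sigma(\theta^*)}{\sqrt{2\tau}}W(t)$. The main obstacle is the Lindeberg/Lyapunov step: because the noise laws $\epsilon_{\theta_k\pm c_ku_k}$ change with $k$ and are not identically distributed, passing from a single finite second moment to a genuine triangular-array jump condition requires the uniform moment control of condition~({\bf C}) (or an equivalent uniform-integrability hypothesis), and the bookkeeping of the $\field_k$-conditional moments with the sphere-uniform $u_k$ integrated out is where the remaining care is needed.
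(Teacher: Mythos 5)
Your proof is correct, but it takes a genuinely different route from the paper's. The paper never works with the martingale array $\sqrt{\gamma}\,\tilde\epsilon_k$ directly: it splits off the dependence on $\theta_k$ algebraically, writing $\tilde\epsilon_k = \frac{\sigma(\theta^*)}{\sqrt{2\tau}}\xi_k + r_k$ with $\xi_k \eqdef \sqrt{\tau/2}\,(\bar\epsilon_k^+ + \bar\epsilon_k^-)$ having mean zero and variance exactly one, and with the remainder $r_k$ built from $\varsigma(\theta) \eqdef \sigma(\theta)-\sigma(\theta^*)$ evaluated at $\theta_k \pm c_k u_k$. A Taylor expansion, continuity of $\sigma$, and $\theta_k \to \theta^*$ (Proposition~\ref{thm.theta}) give $\ex{r_k^2} \to 0$, so $\sqrt{\gamma}\sum_k r_k \stackrel{\prob}{\longrightarrow} 0$ by the elementary Lemma~\ref{lemma.asymp_neg}, and the surviving term $\sigma(\theta^*)\sqrt{\gamma/(2\tau)}\sum_k \xi_k$ is dispatched by Slutsky's theorem together with a one-dimensional CLT, using $\gamma m(t) \to t$. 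You instead keep $\tilde\epsilon_k$ whole, compute the predictable quadratic variation $\gamma\sum_k \ex[k]{\tilde\epsilon_k^2} \to \frac{\sigma^2(\theta^*)}{2\tau}\,t$ via a Ces\`aro argument, verify a conditional Lindeberg condition, and invoke a functional martingale CLT. Your approach buys the process-level convergence of $W^\gamma(\cdot)$ directly, which is what the proof of Theorem~\ref{thm.fclt} actually consumes, and it makes explicit the uniform moment control that the non-identically distributed noises demand; the paper's approach buys elementarity, since the variance stabilization is done by hand before any limit theorem is applied, and only a scalar CLT plus Lemma~\ref{lemma.asymp_neg} are needed. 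One caveat on your write-up: your cleanest Lindeberg verification leans on condition ({\bf C}), which is \emph{not} among the hypotheses of the lemma (it assumes only \ref{A.model}--\ref{A.step}), so as stated the proof must be carried by the uniform-integrability variant you sketch rather than the Lyapunov bound. Note, however, that the paper's own appeal to ``the CLT'' for the non-i.i.d.\ array $\{\xi_k\}$ silently requires the same Lindeberg-type control (unit variance of each $\xi_k$ alone does not suffice), so your explicitness on this point is a virtue rather than a defect.
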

\begin{proof}{Proof}
Define
\begin{equation*}
\begin{aligned}
\xi_k &\eqdef \sqrt{\tau/2}(\bar\epsilon_k^+ + \bar\epsilon_k^-),~ \text{and} \\
\varsigma(\theta) &\eqdef \sigma(\theta) - \sigma(\theta^*),
\end{aligned}
\end{equation*}
then $\ex{\xi_k}=0,~\ex{\xi_k^2}=1$, and
\begin{equation*}
\tilde\epsilon_k - \frac{\sigma(\theta^*)}{\sqrt{2\tau}}\xi_k = \frac{\varsigma(\theta_k + c_k u_k) \bar\epsilon_k^+ + \varsigma(\theta_k - c_k u_k) \bar\epsilon_k^-}{2},
\end{equation*}
and hence, by Taylor expansion around $\theta^*$,
\begin{equation*}
\begin{aligned}
\ex{\absV{\tilde\epsilon_k - \frac{\sigma(\theta^*)}{\sqrt{2\tau}} \xi_k}^2} &= \frac{\varsigma^2(\theta_k + c_k u_k) + \varsigma^2(\theta_k - c_k u_k)}{4\tau} \\ 
&= \frac{1}{2\tau} \varsigma^2(\theta_k) + \frac{c_k^2}{4\tau} \ex{u_k\ts \left[\varsigma^2(\bar\theta_k^+) + \varsigma^2(\bar\theta_k^-)\right] u_k} \longrightarrow 0,
\end{aligned}
\end{equation*}
since \ref{A.model}\ref{A.model.sigma}, \ref{A.u} and \ref{A.step}\ref{A.step.spsa}.
Then
\begin{equation}
\label{eq.W_decompos}
W^\gamma(t ) = \sqrt{\gamma}\sum_{k=0}^{m(t)-1} \left(\tilde\epsilon_k - \frac{\sigma(\theta^*)}{\sqrt{2\tau}} \xi_k\right) + \sigma(\theta^*) \sqrt{\frac{\gamma}{2\tau}}\sum_{k=0}^{m(t)-1} \xi_k,
\end{equation}
where the first term on right-hand side converges to 0 in probability by lemma~\ref{lemma.asymp_neg}.
It is trivial by the Slutsky's theorem and CLT that
\begin{equation*}
\sqrt{\gamma} \sum_{k=0}^{m(t)-1} \xi_k = \sqrt{\gamma m(t)} \frac{1}{\sqrt{m(t)}} \sum_{k=0}^{m(t)-1} \xi_k \stackrel{d}{\longrightarrow} W(s),
\end{equation*}
since $\gamma m(t) \to t$ as $\gamma \searrow 0$. It completes the proof of the lemma.
\end{proof}

\begin{proof}{\bf Proof of Theorem \ref{thm.fclt}}  
Firstly, we show that $\{U^\gamma_k\}$ is tight (in $\gamma$) for sufficiently large $k$. For some $\delta>0$, it follows from Markov inequality that
\begin{equation*}
\prob \left(\absV{U^\gamma_k} \geq \delta\right) = \prob \left( \frac{\absV{\mu_k-\mu^*}}{\sqrt{\gamma}} \geq \delta \right) \leq \frac{\ex{\absV{\mu_k - \mu^*}^2}}{\gamma \delta^2},
\end{equation*}
Then, by Lemma~\ref{lemma.mu_L2}, we have
\begin{equation*}
\limsup_k \prob \left(\absV{U^\gamma_k} \geq \delta\right) = \bigO[]{\delta^{-2}},
\end{equation*}
which implies the tightness of $\{U^\gamma_k\}$ (in $\gamma$) for sufficiently large $k$.  Therefore, the sequence $\{U^\gamma_k\}$ is weakly convergent as $\gamma \searrow 0$.  

Now let us return to \eqref{eq.const_interpolation}. Fixed $t > 0$, $U^\gamma(t) = U^\gamma_{m(t)}$ by definition, so that $\{U^\gamma(t)\}$ is also tight for sufficiently small $\gamma$.  Then, it follows from Lemmas~\ref{lemma.mu_rho}-\ref{lemma.mu_fclt} that $\{U^\gamma(t+\cdot),W^\gamma(t+\cdot)\}$ converges weakly to some limit $\{U(\cdot),W(\cdot)\}$ that satisfies
\begin{equation*}
U(t+s) = U(t) - \int_0^s U(t+u)du + \frac{\sigma(\theta^*)}{\sqrt{2\tau}}[W(t+s) - W(t)],
\end{equation*}
which is equivalent to SDE \eqref{sde}.  It completes the proof.
\end{proof}

\subsection{Proof of Theorem \ref{thm.consist}}  \label{subsec.A.consist}
Rewriting \eqref{eq.sCons} as
\begin{equation*}
\begin{aligned}
\frac{(e^\gamma U^\gamma_{k+1}-U^\gamma_k)^2}{\gamma} =&~ \frac{\left[e^\gamma(\mu_{k+1}-\mu_k) + (e^\gamma - 1)(\mu_k - \mu^*)\right]^2}{\gamma^2} \\
=&~ \left[e^\gamma(\bar y_k - \mu_k) + \frac{e^\gamma - 1}{\gamma}(\mu_k - \mu^*)\right]^2 \\
=&~ e^{2\gamma}(\bar y_k - \mu_k)^2 + \frac{2e^\gamma(e^\gamma - 1)}{\gamma}(\bar y_k - \mu_k)(\mu_k - \mu^*) + \frac{(e^\gamma - 1)^2}{\gamma^2}(\mu_k - \mu^*)^2,
\end{aligned}
\end{equation*} 
we then focus on the term
\begin{equation}
\label{eq.estOU_var}
(\bar y_k -\mu_k)^2 = \frac{(U^\gamma_{k+1} - e^{-\gamma} U^\gamma_k)^2}{\gamma} + \frac{2(e^{-\gamma} - 1)}{\gamma}(\bar y_k - \mu_k)(\mu_k - \mu^*) - \frac{(e^{-\gamma} - 1)^2}{\gamma^2}(\mu_k - \mu^*)^2.
\end{equation}
It is obvious that if both
\begin{equation*}
\begin{aligned}
(\bar y_k - \mu_k)(\mu_k - \mu^*) &\stackrel{\prob}{\longrightarrow} 0, ~\text{and} \\
(\mu_k - \mu^*)^2 &\stackrel{\prob}{\longrightarrow} 0
\end{aligned}
\end{equation*}
hold as $\gamma \searrow 0,~k\to\infty$, then our variance estimator \eqref{eq.var}, which is equivalent to
\begin{equation*}
v_n = \frac{1}{n} \sum_{k=0}^{n-1} (\bar y_k -\mu_k)^2 + \frac{v_0}{n},
\end{equation*}
is consistent.  
Note that the (biased) sample variance \eqref{eq.sampleVar}, defined in subsection~\ref{subsec.estimators},
\begin{align*}
\hat \sigma^2_n &= \frac{1}{n}\sum_{k=0}^{n-1} (y_k - \bar\mu_n)^2 \\
&= \frac{n-1}{n} \hat \sigma^2_{n-1} + \frac{n-1}{n}(\bar\mu_n - \bar\mu_{n-1})^2 + \frac{1}{n} (y_{n-1} - \bar\mu_n)^2 \\
&= \hat \sigma^2_{n-1} + \frac{1}{n} \left[\frac{n-1}{n} (y_{n-1} - \bar\mu_{n-1})^2 - \hat \sigma^2_{n-1}\right] \\
&= \frac{1}{n}\sum_{k=0}^{n-1} \frac{k}{k+1} (y_k - \bar\mu_k)^2,
\end{align*}
hence $\hat \sigma^2_n$ can be interpreted as a variance estimator that weighted less on the early observations $(y_k - \bar\mu_k)^2$.
Therefore, for some positive non-decreasing real sequence $\{r_k\}_{k\geq 0}$ that $r_k \nearrow 1$, e.g., $r_k = k/(k+1)$, it follows from Lemma~\ref{lemma.mu_L2} that
\begin{equation*}
r_k(\bar y_k - \mu_k)^2 \stackrel{\prob}{\longrightarrow} \frac{\sigma^2(\theta^*)}{2\tau},~~~ k\to\infty,
\end{equation*}
which supports the improved estimator of the variance given by \eqref{eq.var*}.

\begin{proof}{\bf Proof of Theorem \ref{thm.consist}}
Rewrite \eqref{eq.estOU_var} as
\begin{equation*}
\begin{aligned}
(\bar y_k - \mu_k)^2 =&~ \frac{(U^\gamma_{k+1} - e^{-\gamma} U^\gamma_k)^2}{\gamma} - \left[\frac{2(e^{-\gamma} - 1)}{\gamma} + \frac{(e^{-\gamma} - 1)^2}{\gamma^2}\right] (\mu_k - \mu^*)^2 \\
&+ \frac{2(e^{-\gamma} - 1)}{\gamma} (\bar y_k - \mu^*)(\mu_k - \mu^*).
\end{aligned}
\end{equation*}
It follows from Markov inequality that for every $\varepsilon>0$,
\begin{equation*}
\prob \left(\left|\frac{1}{n}\sum_{k=0}^{n-1} r_k (\mu_k - \mu^*)^2\right| \geq \varepsilon \right) \leq \frac{1}{n\varepsilon}\sum_{k=0}^{n-1} r_k \ex{(\mu_k - \mu^*)^2},
\end{equation*}
and with H\"older's inequality that
\begin{equation*}
\begin{aligned}
\prob \left(\left|\frac{1}{n}\sum_{k=0}^{n-1} r_k (\bar y_k - \mu^*)(\mu_k - \mu^*)\right| \geq \varepsilon \right) \leq&~ \frac{1}{n\varepsilon}\sum_{k=0}^{n-1} r_k \ex{\absV{(\bar y_k - \mu^*)(\mu_k - \mu^*)}} \\
\leq&~ \frac{1}{n\varepsilon}\sum_{k=0}^{n-1} r_k \sqrt{\ex{(\bar y_k - \mu^*)^2} \ex{(\mu_k - \mu^*)^2}}.
\end{aligned}
\end{equation*}
{
By \eqref{eq.sCons}, it suffices to show that the right-hand sides of both inequalities go to 0 as $\gamma \searrow 0$.
By Theorem~\ref{thm.fclt}, we have
\begin{equation*}
\lim_{\gamma\searrow 0} U_{m(t)}^\gamma \stackrel{d}{=} U(t), ~~~ \forall t > 0.
\end{equation*}
Note that by Lemma~\ref{lemma.mu_L2}, we have for all $t > 0$,
\begin{equation*}
\limsup_{\gamma \searrow 0} \gamma \ex{\rBrac{U_{m(t)}^\gamma}^2} < \infty,
\end{equation*}
then
\begin{equation*}
\gamma \ex{\rBrac{U_{m(t)}^\gamma}^2} = \gamma \ex{U(t)^2} + o(\gamma),
\end{equation*}
where a sequence (indexed by $\gamma$) $a_\gamma = o(\gamma)$ means that $\lim_{\gamma \searrow 0} a_\gamma/\gamma = 0$.
Therefore,
\begin{align*}
\frac{1}{n}\sum_{k=0}^{n-1}\ex{\rBrac{\mu_k - \mu^*}^2} &= \frac{\gamma}{n}\sum_{k=0}^{n-1}\ex{U(k\gamma)^2} + o(\gamma) \\
&= \frac{1}{n}\sum_{k=0}^{n-1} \sBrac{e^{-2k\gamma}\rBrac{\mu_0 - \mu^*}^2 + \frac{\gamma\sigma^2(\theta^*)}{4\tau}\rBrac{1 - e^{-2k\gamma}}} + o(\gamma) \\
&= \frac{\gamma\sigma^2(\theta^*)}{4\tau} + \frac{1}{n}\sum_{k=0}^{n-1} e^{-2k\gamma} \sBrac{\rBrac{\mu_0 - \mu^*}^2 - \frac{\gamma\sigma^2(\theta^*)}{4\tau}} + o(\gamma).
\end{align*}
By Riemann-Stieltjes integration, we have
\begin{equation*}
\frac{1}{n}\sum_{k=0}^{n-1} e^{-2k\gamma} \leq \frac{1}{n}\rBrac{1+ \int_{0}^{n-1} e^{-2\gamma x} dx} = \frac{1}{2n\gamma}\rBrac{1+2\gamma-e^{-2(n-1)\gamma}}.
\end{equation*}
The right-hand side goes to 0 as $\gamma\searrow 0$ with $n > -\frac{1}{2\gamma}\log\gamma$.}
 
Combined with \eqref{eq.sCons}, we obtain
\begin{equation*}
\frac{1}{n}\sum_{k=0}^{n-1} r_k(\bar y_k - \mu_k)^2 \stackrel{\prob}{\longrightarrow} \frac{\sigma^2(\theta^*)}{2\tau},~~~ {n > -\frac{1}{2\gamma}\log\gamma}, \gamma\searrow 0.
\end{equation*}
Finally, $v_n$ generated by \eqref{eq.var*} is equivalent to
\begin{equation*}
v_n = \frac{1}{n}\sum_{k=0}^{n-1} r_k(\bar y_k - \mu_k)^2,
\end{equation*}
which is a consistent estimator of $\sigma^2(\theta^*)/2$.
\end{proof}

\subsection{Proof of Proposition \ref{prop.clt}}  \label{subsec.A.clt}
Following the arguments in Subsection \ref{subsec.e1}, it suffices to show that both \eqref{eq.Lyapunov_condition} and \eqref{eq.var_cvg} holds.  It follows from Assumption~\ref{A.model}\ref{A.model.sigma} and condition ({\bf C}) that for some constant $d\leq 1$ and $K>0$,
\begin{equation*}
\begin{aligned}
\sum_{k=1}^{n} \ex[k]{\absV{\xi_{nk}}^{2+d}} &= {V_n}^{-1-d/2} \sum_{k=1}^{n} \gamma_k^{2+d} \prod_{j=k+1}^n (1-\gamma_j)^{2+d} \ex[k]{\absV{\tilde\epsilon_k}^{2+d}} \\
&\leq K {V_n}^{-1-d/2} \sum_{k=1}^{n} \gamma_k^{2+d} \prod_{j=k+1}^n (1-\gamma_j)^{2+d} \rBrac{\sigma^{2+d}(\theta_k) + c_k}.
\end{aligned}
\end{equation*}
By Lemma~\ref{lemma.hu3}, we have
\begin{equation*}
\sum_{k=1}^{n} \gamma_k^{2+d} \prod_{j=k+1}^n (1-\gamma_j)^{2+d} \rBrac{\sigma^{2+d}(\theta_k) + c_k} = \rBrac{1+\norm[]{\theta_n-\theta^*}+c_n} \bigO[d+1]{\gamma_n},
\end{equation*}
then it is a sufficient condition for \eqref{eq.Lyapunov_condition} that 
$V_n = \boldsymbol\Theta(\gamma_n)$, 
which yields
\begin{equation*}
\sum_{k=1}^{n} \ex[k]{\absV{\xi_{nk}}^{2+d}} = \bigO[d/2]{\gamma_n}.
\end{equation*}

Similarly, rewrite \eqref{eq.var_cvg} as
\begin{equation*}
\sum_{k=1}^{n} \ex[k]{\xi_{nk}^2} = {V_n}\inv \sum_{k=1}^{n} \gamma_k^2 \prod_{j=k+1}^n (1-\gamma_j)^2 \ex[k]{\absV{\tilde\epsilon_k}^2},
\end{equation*}
where $\ex[k]{\absV{\tilde\epsilon_k}^2} = \sigma^2(\theta^*)/(2\tau) + \bigO[]{\norm[]{\theta_k-\theta^*}} + \bigO[]{c_k}$.  Thus, we have to show that $\sum_{k=1}^{n} \gamma_k^2 \prod_{j=k+1}^n (1-\gamma_j)^2$ is on the same order of $\gamma_n$.  The upper bound is immediately supported by Lemma~\ref{lemma.hu3}, while the asymptotic lower bound on the same order needs in addition that $\delta<2\gamma/(2+\gamma)$.  Then, a stronger version of \eqref{eq.var_cvg} holds that
\begin{equation*}
\sum_{k=1}^{n} \ex[k]{\xi_{nk}^2} \longrightarrow \frac{\sigma^2(\theta^*)}{2\tau}~ w.p.1.
\end{equation*}
Then the CLT holds.

Note that
\begin{equation*}
\begin{aligned}
\frac{\ex[k]{\absV{\xi_{nk}}^{2+d}}}{\ex[k]{\xi_{nk}^2}} &= {V_n}^{-d/2} \gamma_k^d \prod_{j=k+1}^n (1-\gamma_j)^d \frac{\ex[k]{\absV{\tilde\epsilon_k}^{2+d}}}{\ex[k]{\tilde\epsilon_k^2}} \\
&\leq K {V_n}^{-d/2} \gamma_k^d \prod_{j=k+1}^n (1-\gamma_j)^d \\
&\leq K {V_n}^{-d/2} \max_{1\leq k\leq n} \gamma_k^d \prod_{j=k+1}^n (1-\gamma_j)^d \\
&= \bigO[d/2]{\gamma_n}
\end{aligned}
\end{equation*}
Then, by Theorem 2.1 in \citet{fan2019}, the Berry-Esseen bound for the martingale CLT follows, i.e.,
\begin{equation*}
\sup_{z\in\reals}\absV{\prob\cBrac{\sqrt{2\tau} S_n \leq z\sigma(\theta^*)}-\Phi\rBrac{z}} = \bigO[d/2]{\gamma_n}.
\end{equation*}

\section{Benchmarks} \label{sec.A.benchmarks}
\setcounter{table}{0} 
\setcounter{figure}{0} 
\setcounter{equation}{0} 
\setcounter{theorem}{0}
\renewcommand{\thetable}{\thesection-\arabic{table}}
\renewcommand{\thefigure}{\thesection-\arabic{figure}}
\renewcommand{\theequation}{\thesection-\arabic{equation}}
\renewcommand{\thetheorem}{\thesection.\arabic{theorem}}

\subsection{Benchmarks in Section \ref{sec.compr}}  \label{subsec.benchmarks}
This subsection briefly introduces some benchmarks: four-point method \citep{wu2022}, multi-time-scale method \citep{borkar2008} and forward FD method.  For notational simplicity and consistency, we use diacritics to distinguish the estimates generated by different methods.  
\subsubsection*{Four-point method.}
Samples $y^{(\pm)}_{ki}$ ($i=1, \cdots, \tau_1$) and $y^{(h\pm)}_{kj}$ ($j=1, \cdots, \tau_2$) are drawn as realizations of $Y(\theta_k \pm c_k u_k)$ and $Y(\theta_k \pm h c_k u_k)$, respectively, where $h>1$ is a user-specified hyperparameter such that $h=\tau_1/\tau_2$ and $\tau_1 + \tau_2 = \tau$.  Then the sample averages are given by
\begin{equation*}
y_k^{(\pm)} = \frac{1}{\tau_1} \sum_{i=1}^{\tau_1} y^{(\pm)}_{ki}, ~~~
y_k^{(h\pm)} = \frac{1}{\tau_2} \sum_{j=1}^{\tau_2} y^{(h\pm)}_{kj},
\end{equation*}
and the gradient and performance can be estimated as
\begin{equation*}
\begin{aligned}
\hat g_k &= \frac{-y^{(h+)}_k + h^3 y^{(+)}_k - h^3 y^{(-)}_k + y^{(h-)}_k}{2h^2(h-1)c_k}u_k,\\
\hat y_k &= \frac{-y^{(h+)}_k + h^2 y^{(+)}_k + h^2 y^{(-)}_k - y^{(h-)}_k}{2(h^2-1)}. \\
\end{aligned}
\end{equation*}
The parameter vector is still updated via \eqref{eq.theta} with $g_k$ being replaced by $\hat g_k$.  For inference, it uses the offline sample average of $\hat y_k~k=0,1,\cdots,n$, which is equivalent to an online estimator generated by 
\begin{equation}
\hat\mu_{k+1} = \hat\mu_k + \frac{1}{k+1}(\hat y_k - \hat \mu_k).  \label{eq.mu*}
\end{equation}
The variance is estimated by the sample variance of $\{y_i,~ i=I,I+1,\cdots,2\tau n\}$ for some $I\in \naturals_+$,  which does not adapt to the online environment. Here, $y_i$ ($i=2k\tau+1, k\tau +1,\cdots, 2(k+1)\tau$) represents a random permutation of $\{y^{(\pm)}_{ki}, y^{(h\pm)}_{kj},~ i=1,2,\cdots,\tau_1,~ j=1,2,\cdots, \tau_2\}$.
Following the arguments in Subsection~\ref{subsec.consistency}, it is reasonable to use \eqref{eq.var*} to estimate the variance for both methods.  Conventionally, denote $\hat v_n$ the variance estimator for the four-point method.  According to \cite{wu2022}, we know that
\begin{equation*}
\frac{h^2 - 1}{h^2 + 1} \cdot \frac{\hat \mu_n-\mu^*}{\sigma(\theta^*)/\sqrt{2\tau n}} \stackrel{d}{\longrightarrow} \Normal{1},~~~ n\to\infty,
\end{equation*}
and one can prove that
\begin{equation*}
\ex{\left(\hat y_k - \hat\mu_k\right)^2} \longrightarrow \left(\frac{h^2 + 1}{h^2 - 1}\right)^2 \frac{\sigma^2(\theta^*)}{2\tau},~~~ k\to\infty,
\end{equation*}
which further implies that
\begin{equation*}
\hat v_n \stackrel{\prob}{\longrightarrow} \left(\frac{h^2 + 1}{h^2 - 1}\right)^2 \frac{\sigma^2(\theta^*)}{2\tau},~~~ n\to\infty.
\end{equation*}
Then the associated normalized estimator is
\begin{equation*}
\frac{(h^2 - 1)}{(h^2 + 1)} \frac{\hat{\mu}_n - \mu^*}{\sqrt{\hat v_n / n}},
\end{equation*}

\subsubsection*{Multi-time-scale method.}
The optimization procedure of the multi-time-scale method employs SPSA as Algorithm~\ref{alg.spsasi} does, while the  recursive equation of performance estimation merely replaces $\gamma$ by a decreasing positive sequence $\cBrac{\gamma_k}$. Specifically, \eqref{eq.mu} and \eqref{eq.var} are replaced by
\begin{equation*}
\begin{aligned}
\tilde{\mu}_{k+1} &= \tilde\mu_k + \gamma_k(\bar{y}_k - \tilde\mu_k), \\
\tilde{v}_{k+1} &= \tilde{v}_k + \zeta_k \sBrac{\rBrac{\bar{y}_k - \tilde\mu_k}^2 - \tilde{v}_k},
\end{aligned}
\end{equation*}
respectively, where $\cBrac{\gamma_k}$ and $\cBrac{\zeta_k}$ satisfy the conditions $\min\rBrac{\sum_k \gamma_k, \sum_k \zeta_k} =\infty,~ \sum_{k} (\gamma_k^2 + \zeta_k^2)<\infty$ and $a_k = o(\gamma_k),~ \gamma_k = o(\zeta_k)$ \citep{borkar2008}.

Previous result \citep[Theorem 2.1, Section 10.2]{kushner2003} implies that, under proper conditions, for sufficiently large $n$,
\begin{equation*}
\frac{\tilde{\mu}_n - \mu^*}{\sqrt{\gamma_n}} \approx \Normal{\frac{\sigma^2(\theta^*)}{2\tau}}.
\end{equation*}
Thus, the associated normalized estimator 
\begin{equation*}
\frac{\tilde\mu_n - \mu^*}{\sqrt{\gamma_n \tilde{v}_n}} 
\end{equation*}
is valid to construct CIs if the variance estimator is consistent.

\subsubsection*{Forward SPSA method.}
Instead of realizations $y^-_{kj},j=1,2,\cdots,\tau$ of $Y(\theta_k - c_k u_k)$ in Algorithm~\ref{alg.spsasi}, forward SPSA utilizes samples of $Y(\theta_k)$ of which the batch average is denoted by $\check{y}_k$ in the $k$th iteration.  It is naive to avoid the bias induced by perturbation via using only the samples of $Y(\theta_k)$ to estimate the performance, even though it neglects the information provided by perturbed performances.
Specifically, replace \eqref{eq.gradient} by
\begin{equation*}
\check{g}_k = \frac{y^+_k - \check{y}_k}{c_k}u_k,
\end{equation*}
then substitute $g_k$ in \eqref{eq.theta} by $\check{g}_k$.  Instead of \eqref{eq.mu}, the associated performance estimates are generated by
\begin{equation*}
\check{\mu}_{k+1} = \check{\mu}_k + \frac{1}{k+1} \rBrac{\check{y}_k - \check{\mu}_k}.
\end{equation*}
It corresponds to the variance estimator 
\begin{equation*}
\check{v}_{k+1} = \check{v}_k + \frac{1}{k+1}\sBrac{\rBrac{\check{y}_k - \check\mu_k}^2 - \check{v}_k},
\end{equation*}
and the associated normalized estimator
\begin{equation*}
\frac{\check\mu_n - \mu^*}{\sqrt{\check{v}_n/n}}.
\end{equation*}

\subsection{Benchmarks in Section \ref{sec.exprm}}
Apart from the algorithms introduced in Subsection~\ref{subsec.benchmarks}, we briefly describe two supplementary methods without theoretical guarantees here.

\subsubsection*{Ordinary SPSA.}
It is a naive algorithm for the dual tasks. It means to use SPSA for optimization, and simply take the sample average $\frac{1}{n}\sum_{k=0}^{n-1}\bar{y}_k$ as the performance estimator.  It replaces \eqref{eq.mu} and \eqref{eq.var} by
\begin{equation*}
\begin{aligned}
\bar{\mu}_{k+1} &= \bar{\mu}_k + \frac{1}{k+1}\rBrac{\bar{y}_k - \bar{\mu}_k}, \\
\bar{v}_{k+1} &= \bar{v}_k + \frac{1}{k+1} \sBrac{\rBrac{\bar{y}_k - \bar{\mu}_k}^2 - \bar{v}_k},
\end{aligned}
\end{equation*}
and the associated normalized estimator is
\begin{equation*}
\frac{\bar{\mu}_n - \mu^*}{\sqrt{\bar{v}_n/n}}.
\end{equation*}
It has been shown in Subsection~\ref{subsec.estimators} that this method fails in statistical inference for the optimal performance.   

\subsubsection*{Multi-time-scale four-point method.}
Here, we brutely combine the multi-time-scale SA with the four-point method.  Intuitively, we can use the four-point method for optimization and then replace \eqref{eq.mu} and \eqref{eq.var} by
\begin{equation*}
\begin{aligned}
\acute{\mu}_{k+1} &= \acute{\mu}_k + \gamma_k\rBrac{\hat{y}_k - \acute{\mu}_k}, \\
\acute{v}_{k+1} &= \acute{v}_k + \zeta_k \sBrac{\rBrac{\hat{y}_k - \acute{\mu}_k}^2 - \acute{v}_k},
\end{aligned}
\end{equation*}
where $\gamma_k,~\zeta_k$ satisfy the conditions mentioned above.  The associated normalized estimator is
\begin{equation*}
\frac{(h^2 - 1)}{(h^2 + 1)} \frac{\acute{\mu}_n - \mu^*}{\sqrt{\zeta_n \acute{v}_n}}.
\end{equation*}

\section{Additional numerical experiments}  \label{sec.A.exper}
\setcounter{table}{0} 
\setcounter{figure}{0} 
\setcounter{equation}{0} 
\setcounter{theorem}{0}
\renewcommand{\thetable}{\thesection-\arabic{table}}
\renewcommand{\thefigure}{\thesection-\arabic{figure}}
\renewcommand{\theequation}{\thesection-\arabic{equation}}
\renewcommand{\thetheorem}{\thesection.\arabic{theorem}}

\subsection{One-dimentional cases}
We first consider $\mu(\theta)$ used in \citet{wu2022},
\begin{equation} \label{testfn.1}
    \mu(\theta) = -0.02125 \theta^2 + 0.01825 \theta + 0.0105, ~~~ \theta\in[0,1], 
\end{equation}
and set the same hyperparameters, i.e., $\tau=50,~\theta_0=0.5,~ \mu_0=0,~ v_0=0$, $a_k=30/\rBrac{k+1},~ c_k=1/\rBrac{k+1}^{1/5}$.  Besides, we set $\gamma=0.05$ for our algorithm, $h=3$ for the four-point method and $\gamma_k=1/\rBrac{k+1}^{0.666},~\zeta_k=1/\rBrac{k+1}^{0.55}$ for multi-time-scale method.  Each experiment has been repeated 300 times with $n=1,000,000$ iterations.  

The numerical results (averaged over 300 replications) obtained in the test cases are presented in Table~\ref{table.rmses.plain}. For different algorithms in each case, it presents the sample mean of the root mean square error (RMSE) of the parameter vector and that of the {optimality gap} $\absV{\mu(\theta_n)-\mu^*}$ in sub-tables~\ref{subtab.rmse_plain} and \ref{subtab.diff_plain}, respectively. The associated standard deviations are provided in parentheses.  Additionally, Figure~\ref{fig.plain.rmse} shows the convergence of RMSE with the shadow regions representing 95\% CIs of the Monte Carlo results.  On the performance of optimization, the four-point method shows no significant difference compared to SPSA in terms of RMSE of the parameter vector.  However, forward SPSA results in nearly twice the RMSE of SPSA, even though they share the same order of magnitude.

For statistical inference, Table~\ref{table.tstats.plain} shows the selected means and standard deviations of the normalized estimators, while the associated histograms are depicted in Figure~\ref{fig.plain.histograms}, along with the probability density function of the standard normal distribution (dashed line) as a benchmark to show the validity of normalization.  In Figure~\ref{fig.plain.coverages}, it can be seen that the coverage probabilities of our algorithm, four-method method and forward SPSA would approach the confidence level rapidly in all cases.  

\begin{table}[htbp]  
\centering
\begin{subtable}[H]{0.8\textwidth}
    \centering
    \caption{RMSE of the parameter vector}
    \label{subtab.rmse_plain}
    \csvreader[tabular=cccc, table head= \toprule & \bfseries SPSA$^*$ & \bfseries Forward SPSA & \bfseries Four-point \\ \hline, table foot=\hline]%
    {supplementaries/figures/RMSEs/plain_parameter.csv}{1=\distr, 2=\spsa, 3=\fsp, 4=\fourp}%
    {\distr & \csvcolii & \csvcoliii & \fourp}
\end{subtable}



\scalebox{1}{
\begin{threeparttable}
\begin{subtable}[H]{0.8\textwidth}
    \centering
    \caption{Optimality gap}
    \label{subtab.diff_plain}%
    \csvreader[tabular=cccc, table head= \toprule & \bfseries SPSA & \bfseries Forward SPSA & \bfseries Four-point \\ \hline, table foot=\bottomrule]%
    {supplementaries/figures/RMSEs/plain_performance.csv}{1=\distr, 2=\spsa, 3=\fsp, 4=\fourp}%
    {\distr & \csvcolii & \csvcoliii & \fourp}
\end{subtable}

\begin{tablenotes}
\footnotesize
\item[*] {Our algorithm employs SPSA for optimization, as does the multi-time-scale method.}
\end{tablenotes}
\end{threeparttable}
}

\caption{Performance on test function \eqref{testfn.1} for different cases, based on 300 independent replications.} \label{table.rmses.plain}
\end{table}

\begin{figure}[htbp]  
\centering
\includegraphics[width=\textwidth]{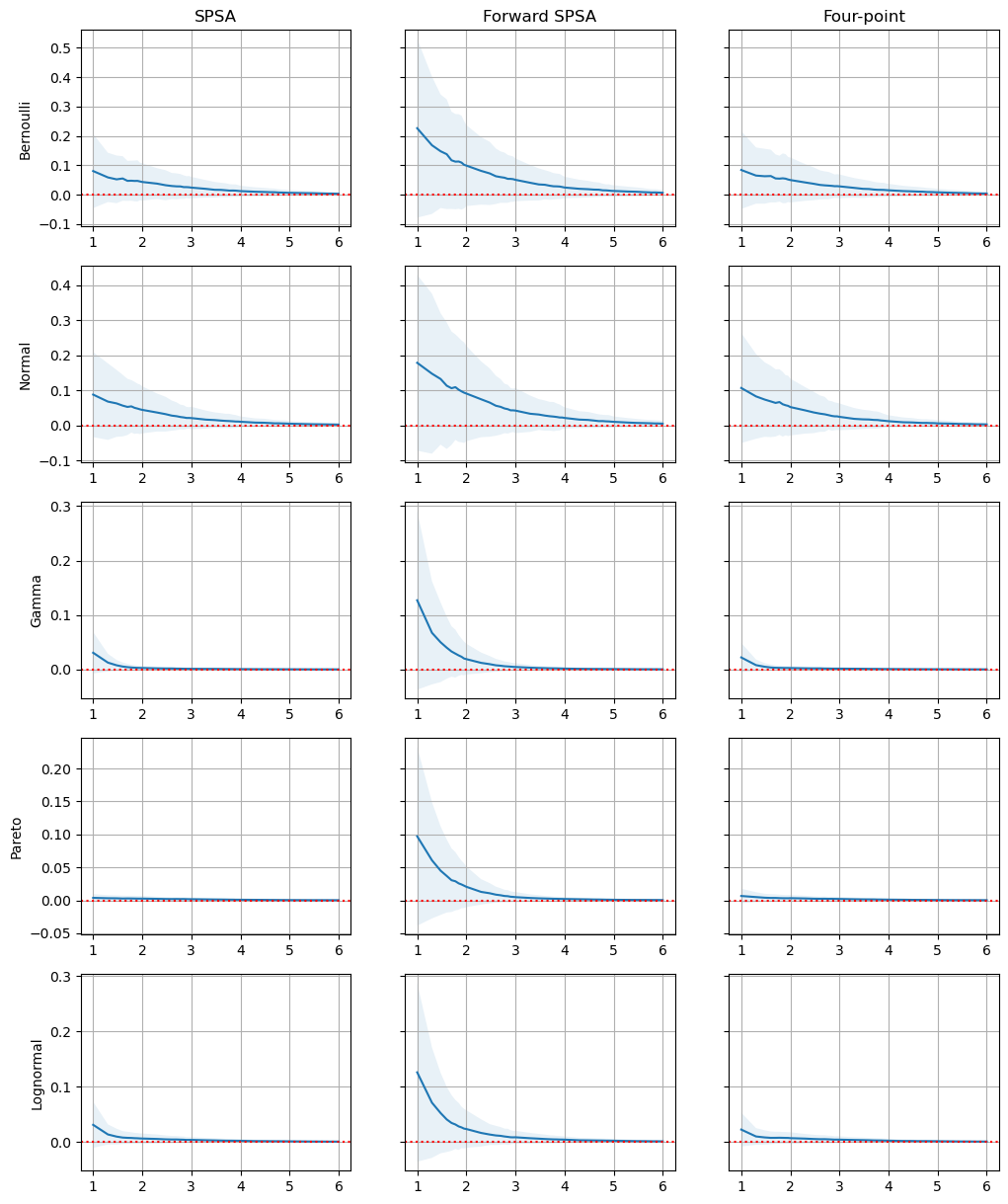}
\caption{RMSEs of input parameters v.s. $\log_{10}\rBrac{\text{\#~iterations}}$ on test function \eqref{testfn.1} for different cases. The shadow regions are 95\% CIs of the Monte Carlo results.}
\label{fig.plain.rmse}
\end{figure}

\begin{table}[htbp]  
\caption{Selected sample means and standard deviations of the normalized estimators on test function \eqref{testfn.1} for different cases, based on 300 independent replications.} \label{table.tstats.plain}
\centering
\csvreader[tabular=ccccc, table head= \toprule & \bfseries Ordinary SPSA & \bfseries Our algorithm & \bfseries Forward SPSA & \bfseries Four-point \\ \hline, table foot=\bottomrule]{supplementaries/figures/Performances/plain_tStats.csv}{}{\csvcoli & \csvcolii & \csvcoliv & \csvcolv & \csvcolvi}
\end{table}

\begin{figure}[htbp]  
\centering
\includegraphics[width=\textwidth]{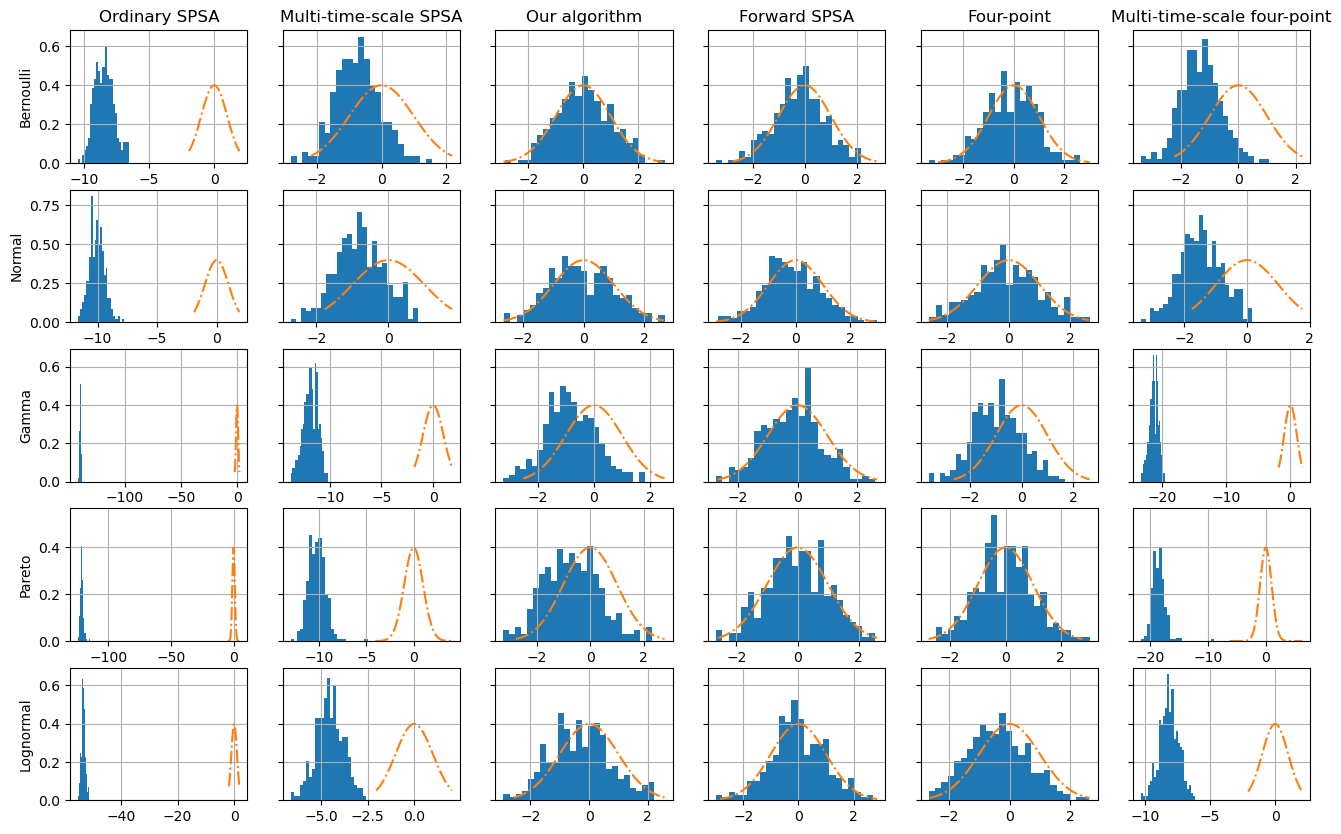}
\caption{Histograms of the normalized estimators on test function \eqref{testfn.1} for different cases. The dashed curve is the probability density function of standard normal distribution.}
\label{fig.plain.histograms}
\end{figure}

\begin{figure}[htbp]  
\centering
\includegraphics[width=\textwidth]{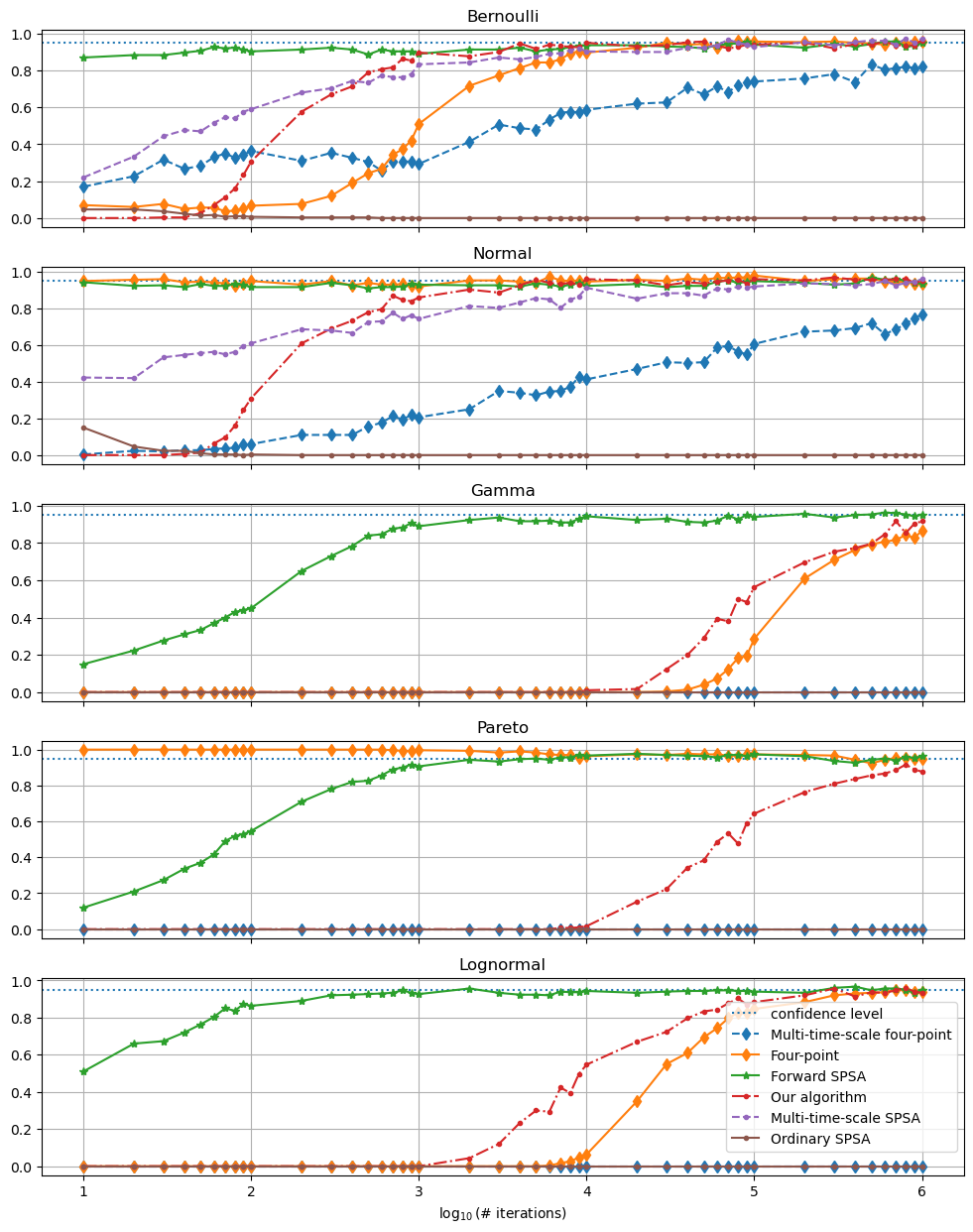}
\caption{Coverage probabilities of the asymptotic CIs on test function \eqref{testfn.1} for different cases. The horizontal dashed line with no markers indicates the confidence level.}
\label{fig.plain.coverages}
\end{figure}

Now we consider the test function
\begin{equation} \label{testfn.2}
\mu(\theta) = \theta^2 - 2\theta + 1.5, ~~~ \theta\in[-2,2],
\end{equation}
which has a wider range than \eqref{testfn.1}.
Except for setting $\ex{Y(\theta)} = \sBrac{1 + \exp\cBrac{3-\mu(\theta)}}\inv$ in \ref{case.bernoulli} and $n=100,000$ in call cases, the other hyperparameters take the same values as above.  For optimization, because the results are similar and it is not the point of this paper, we omit the numerical outputs for the below cases.
Similar results for the convergence of parameter vectors are presented in Table~\ref{table.rmses.sharp} and Figure~\ref{fig.sharp.rmse}, respectively.

For the test function \eqref{testfn.2}, our algorithm outperforms the other methods in terms of statistical inference.  In Figure~\ref{fig.sharp.histograms} the histograms of the normalized estimator generated by our algorithm are closer to the standard normal distribution in all cases.  Additionally, the sample means and standard deviations of the normalized estimators of ordinary SPSA and those with reduced bias are presented in Table~\ref{table.tstats.sharp}.  Meanwhile, as shown in Figure~\ref{fig.sharp.coverages}, it can be seen that the coverage probabilities of our algorithm approach the confidence level more rapidly than the others.  It is challenging to ensure that all the aforementioned algorithms yield satisfactory results within a finite sample under a specified set of hyperparameters. Notably, even though the empirical coverage probabilities of all algorithms deviating significantly from the confidence level in \ref{case.gamma} and \ref{case.pareto}, our algorithm results in a smaller error.

\begin{table}[htbp]  
\centering

\begin{subtable}[H]{0.8\textwidth}
    \centering
    \caption{RMSE of the parameter vector}
    \label{subtab.rmse_sharp}
    \csvreader[tabular=cccc, table head= \toprule & \bfseries SPSA$^*$ & \bfseries Forward SPSA & \bfseries Four-point \\ \hline, table foot=\hline]%
    {supplementaries/figures/RMSEs/sharp_parameter.csv}{1=\distr, 2=\spsa, 3=\fsp, 4=\fourp}%
    {\distr & \csvcolii & \csvcoliii & \fourp}
\end{subtable}



\scalebox{1}{
\begin{threeparttable}
\begin{subtable}[H]{0.8\textwidth}
    \centering
    \caption{Optimality gap}
    \label{subtab.diff_sharp}%
    \csvreader[tabular=cccc, table head= \toprule & \bfseries SPSA & \bfseries Forward SPSA & \bfseries Four-point \\ \hline, table foot=\bottomrule]%
    {supplementaries/figures/RMSEs/sharp_performance.csv}{1=\distr, 2=\spsa, 3=\fsp, 4=\fourp}%
    {\distr & \csvcolii & \csvcoliii & \fourp}
\end{subtable}

\begin{tablenotes}
\footnotesize
\item[*] {Our algorithm employs SPSA for optimization, as does the multi-time-scale method.}
\end{tablenotes}
\end{threeparttable}
}

\caption{Performance on test function \eqref{testfn.2} for different cases, based on 300 independent replications.} \label{table.rmses.sharp}
\end{table}

\begin{figure}[htbp]  
\centering
\includegraphics[width=\textwidth]{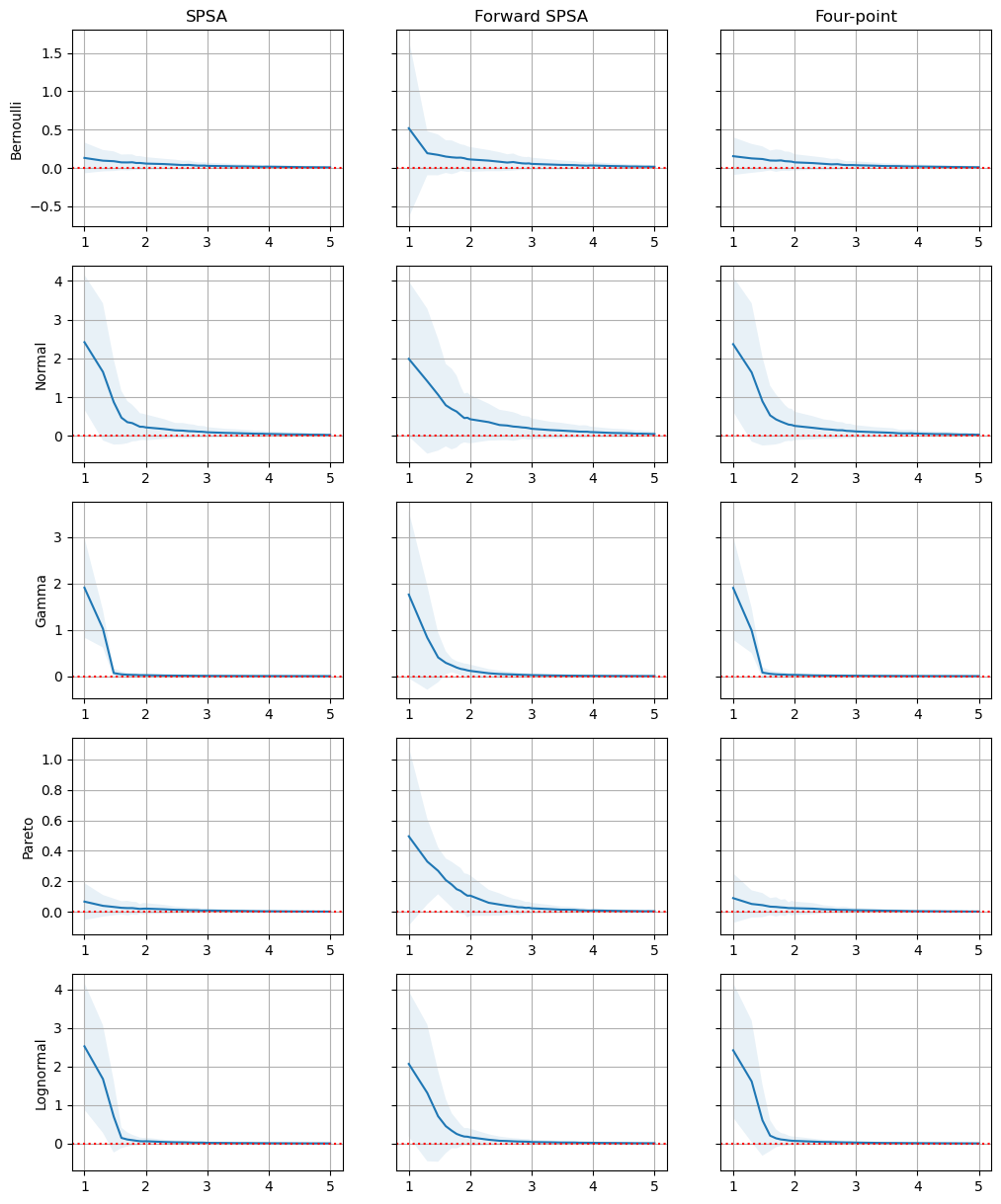}
\caption{RMSEs of input parameters v.s. $\log_{10}\rBrac{\text{\#~iterations}}$ on test function \eqref{testfn.2} for different cases. The shadow regions are 95\% CIs of the Monte Carlo results.}
\label{fig.sharp.rmse}
\end{figure}

\begin{table}[htbp]  
\caption{Selected sample means and standard deviations of the normalized estimators on the test function \eqref{testfn.2}, based on 300 independent replications.} \label{table.tstats.sharp}
\centering
\csvreader[tabular=ccccc, table head= \toprule & \bfseries Ordinary SPSA & \bfseries Our algorithm & \bfseries Forward SPSA & \bfseries Four-point \\ \hline, table foot=\bottomrule]{supplementaries/figures/Performances/sharp_tStats.csv}{}{\csvcoli & \csvcolii & \csvcoliv & \csvcolv & \csvcolvi}
\end{table}

\begin{figure}[htbp]  
\centering
\includegraphics[width=\textwidth]{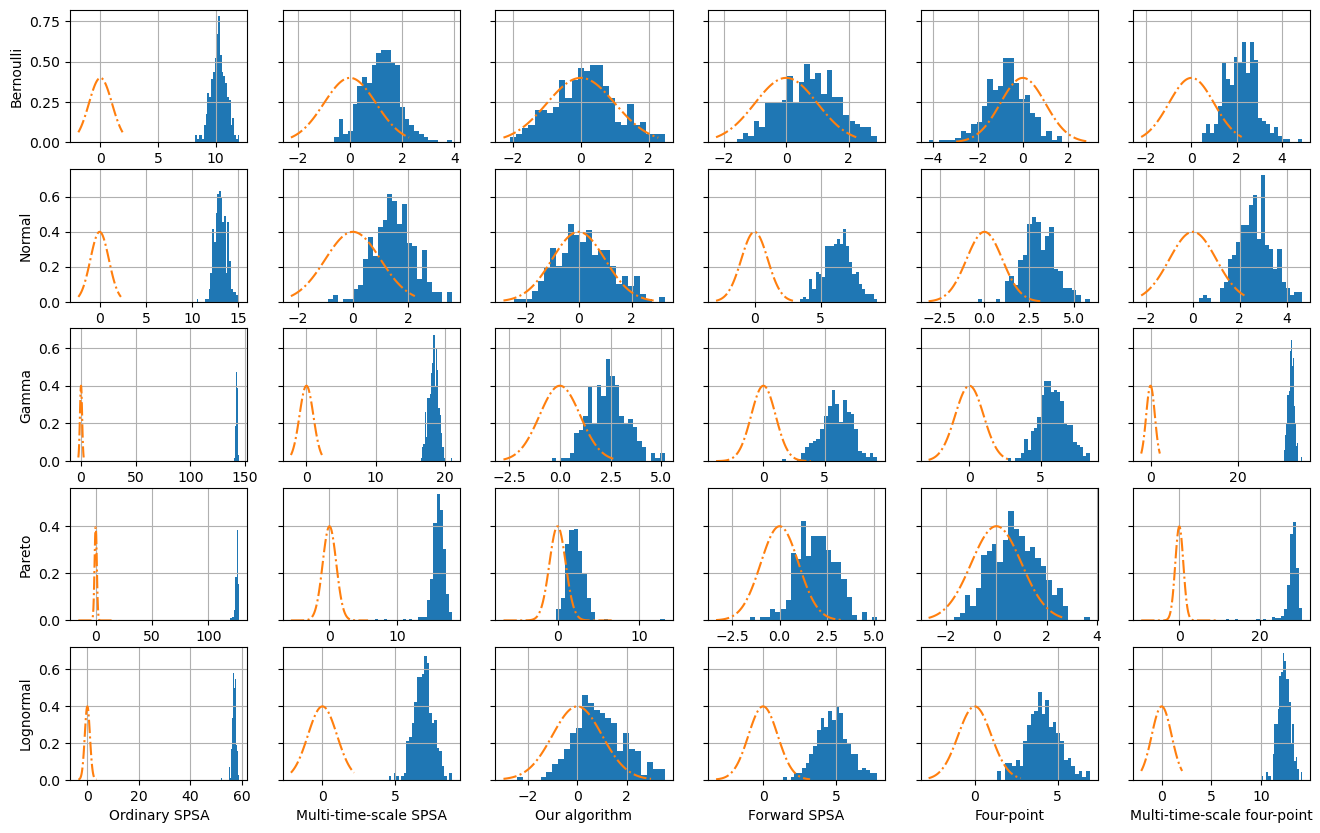}
\caption{Histograms of the normalized estimators on the test function \eqref{testfn.2}. The dashed curve is the probability density function of standard normal distribution.}
\label{fig.sharp.histograms}
\end{figure}

\begin{figure}[htbp]  
\centering
\includegraphics[width=\textwidth]{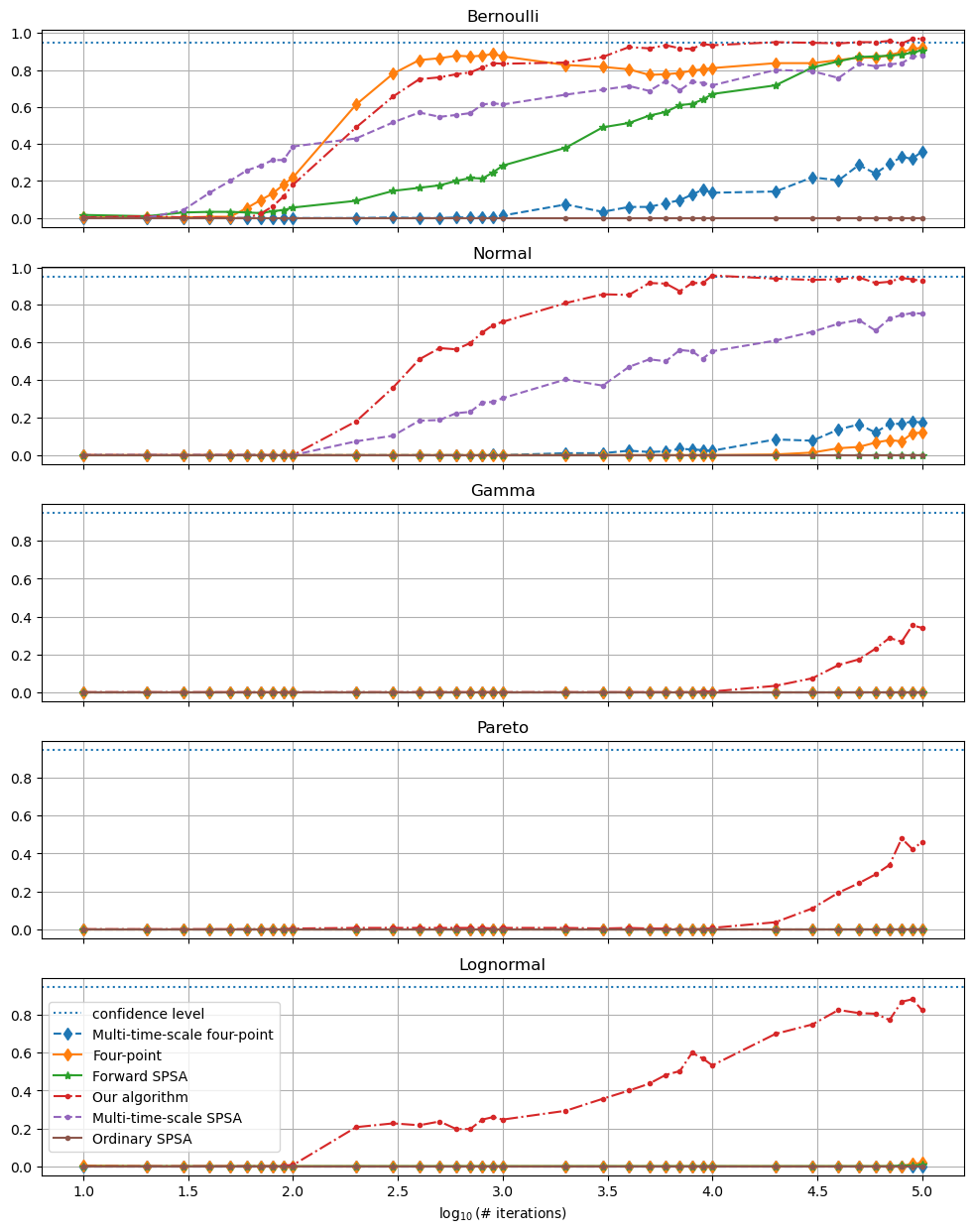}
\caption{Coverage probabilities of the asymptotic CIs on the test function \eqref{testfn.2}. The horizontal dashed line with no markers indicates the confidence level.}
\label{fig.sharp.coverages}
\end{figure}

As a matter of fact, the results above offer numerical evidence supporting the conclusion of Section~\ref{sec.compr}.  Typically, exogenous errors are dominated by the endogenous errors, and the endogenous bias of the normalized estimator generated by our algorithm converges to 0 more rapidly than that of the benchmarks.  Our algorithm demonstrates a superior capacity to reduce the impact of endogenous errors, whereas the benchmarks either fail to mitigate it or require larger sample size.

\subsection{Multi-dimentional case}
Consider the Perm (0,10,10) function, which can be found at \url{https://www.sfu.ca/~ssurjano/perm0db.html},
\begin{equation*}
f(\theta) = \sum_{i=1}^{10} \sBrac{\sum_{j=1}^{10}(j+10)\rBrac{\theta_j^i - j^{-i}}}^2,~~~ \theta \in [-2,2]^{10}.
\end{equation*}
We test in \ref{case.bernoulli} and \ref{case.normal} with $\mu(\theta) = 1/\sBrac{1+ \exp\cBrac{-A_1 f(\theta)-C_1}}$ and $\mu(\theta) = A_2 f(\theta)+C_2$, respectively, where $A_1 = 2\times 10^{-11},~ C_1=-1$ and $A_2=10^{-10}, C_2=0$.  We set $\tau=20,~ a_k=1000/\rBrac{k+1000},~ c_k=1/\rBrac{k+1}^{1/6},~ n=1,000,000$, and initialize $\theta_0$ uniformly distributed in $[-2,2]^{10}$.

We repeat 300 times and report the sample means and standard deviations of the normalized estimators of selected algorithms in Table~\ref{table.tstats.TenDimPerm}, and the corresponding histograms in Figure~\ref{fig.TenDimPerm.histograms}.  At last, the coverage probabilities of the asymptotic 95\%-CIs are depicted in the Figure~\ref{fig.TenDimPerm.coverages}.  

\begin{table}[htbp]  
\caption{Selected sample means and standard deviations of the normalized estimators on the Perm (0,10,10) function, based on 300 independent replications.} \label{table.tstats.TenDimPerm}
\centering
\csvreader[tabular=ccccc, table head= \toprule & \bfseries Ordinary SPSA & \bfseries Our algorithm & \bfseries Forward SPSA & \bfseries Four-point \\ \hline, table foot=\bottomrule]{supplementaries/figures/Performances/TenDimPerm_tStats.csv}{}{\csvcoli & \csvcolii & \csvcoliv & \csvcolv & \csvcolvi}
\end{table}

\begin{figure}[htbp]  
\centering
\includegraphics[width=\textwidth]{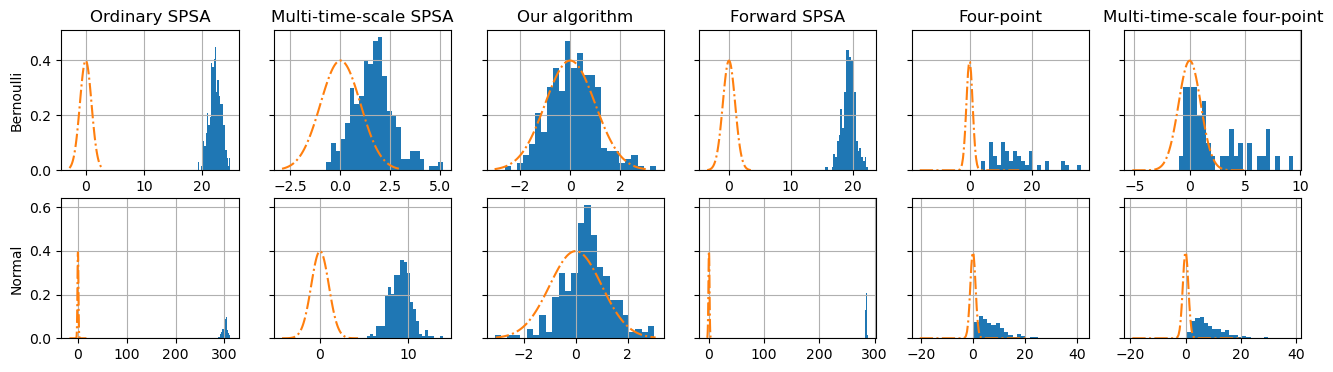}
\caption{Histograms of the normalized estimators on the Perm (0,10,10) function. The dashed curve is the probability density function of standard normal distribution.}
\label{fig.TenDimPerm.histograms}
\end{figure}

\begin{figure}[htbp]  
\centering
\includegraphics[width=\textwidth]{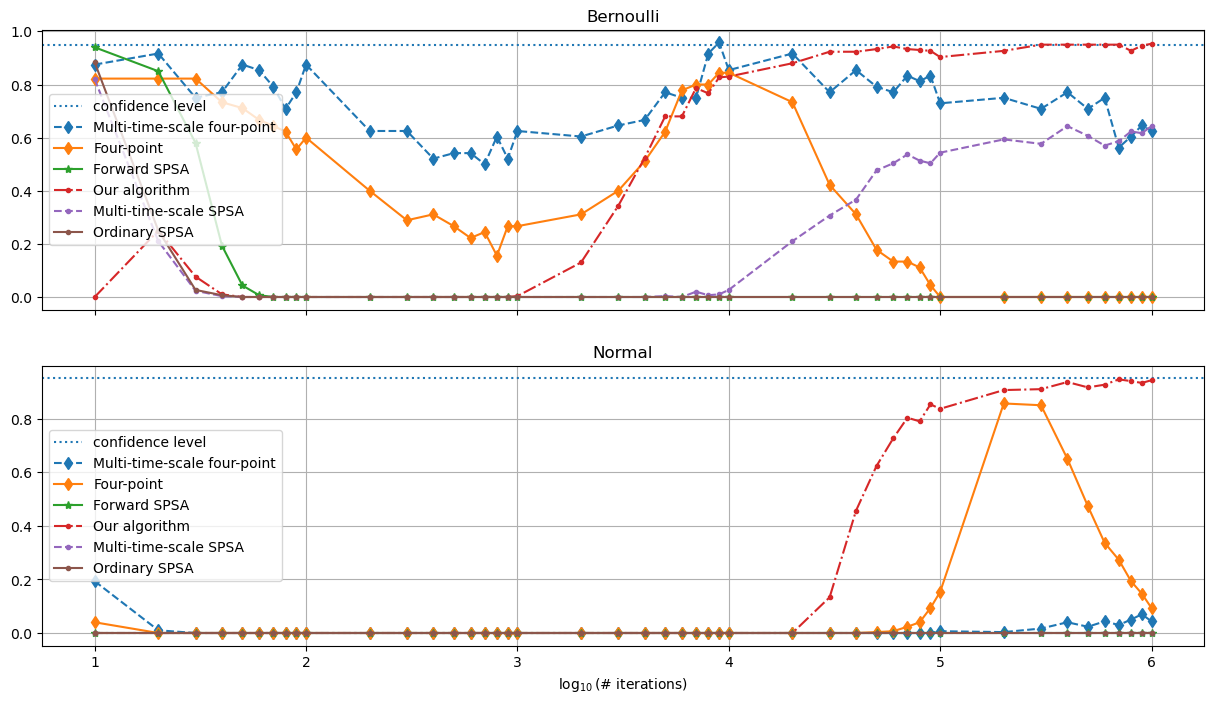}
\caption{Coverage probabilities of the asymptotic CIs on the Perm (0,10,10) function. The horizontal dashed line with no markers indicates the confidence level.}
\label{fig.TenDimPerm.coverages}
\end{figure}

\end{appendices}

\end{document}